\documentclass{siamonline0316}
\usepackage{ifthen}
\newcommand{\techRep}{true} 
\newcommand{\iftechrep}{\ifthenelse{\equal{\techRep}{true}}}
\usepackage{amssymb}
\usepackage{amsmath}
\usepackage{xspace}
\usepackage{tikz}
\usepackage{kbordermatrix}
\usetikzlibrary{arrows, automata, positioning, spy, 3d, calc, matrix}
\usepackage{enumitem}
 
\usepackage{kbordermatrix}


\newcommand{\newextmathcommand}[2]{%
    \newcommand{#1}{\ensuremath{#2}\xspace}
}

\newextmathcommand{\N}{\mathbb{N}}
\newextmathcommand{\Z}{\mathbb{Z}}
\newextmathcommand{\Q}{\mathbb{Q}}
\newextmathcommand{\R}{\mathbb{R}}

\newcommand{\vone}{\mathbf{1}}%

\newcommand{\eps}{\varepsilon}
\renewcommand{\epsilon}{\varepsilon}

\newcommand{\NP}{\textsc{NP}}

\newcommand{\PSPACE}{\textsc{PSPACE}}

\newcommand{\rk}{\mathrm{rank}}%
\newcommand{\rkp}{\mathrm{rank}_{+}}%

\newcommand{\sset}{\subseteq}

\RequirePackage{tikz}
\usetikzlibrary{arrows}

\newcommand{\objectThreeD}[4]{
\pgfmathsetmacro{\phiy}{#1} 
\pgfmathsetmacro{\phix}{#2} 
\pgfmathsetmacro{\phiz}{#3} 
\pgfmathsetmacro{\xa}{cos(\phiz)*cos(\phiy)-sin(\phiz)*cos(\phix)*sin(\phiy)}
\pgfmathsetmacro{\xb}{-sin(\phix)*sin(\phiy)}
\pgfmathsetmacro{\ya}{-sin(\phiz)*sin(\phix)}
\pgfmathsetmacro{\yb}{cos(\phix)}
\pgfmathsetmacro{\za}{-cos(\phiz)*sin(\phiy)-sin(\phiz)*cos(\phix)*cos(\phiy)}
\pgfmathsetmacro{\zb}{-sin(\phix)*cos(\phiy)}
\begin{tikzpicture}[x  = {(\xa cm,\xb cm)},
                    y  = {(\ya cm,\yb cm)},
                    z  = {(\za cm,\zb cm)},
                    scale = #4,
                    dot/.style={circle,fill=black,minimum size=4pt,inner sep=0pt,outer sep=-1pt},
]
\draw[line join=round,thick,fill=brown!90] (0,0,0) coordinate (O1) -- (1,0,0) coordinate (O2) -- (1,1/2,0) coordinate (O3) -- (0,1,0) coordinate (O4) -- cycle;
\draw[line join=round,thick,fill=blue!50]  (O1) -- (O2) -- (9/4,0,1/2) coordinate (O5) -- (0,0,8/7) coordinate (O6) -- cycle;

\draw[thick,->, >=angle 60] (O1) -- (O2) node[below] {$x$};
\draw[thick,->, >=angle 45] (O1) -- (O4) node[left] {$y$};
\draw[thick,->, >=angle 90] (O1) -- (O6) node[left] {$z$};

\coordinate (I1) at (3/4,1/8,0);
\coordinate (I2) at (3/4,1/2,0);
\coordinate (I3) at (3/11,17/22,0);
\coordinate (I4) at (2,0,1/2);
\coordinate (I5) at (1/2,0,3/4);
\coordinate (I6) at (1/6,0,7/12);
\foreach \x in {1,2,...,6} \node[dot] at (I\x) {};

\coordinate (M1) at ({2-sqrt(2)}, 0,0);
\coordinate (M2) at  (intersection cs:
       first line={(M1)--(I1)},
       second line={(O2)--(O3)});
\coordinate (M3) at  (intersection cs:
       first line={(M2)--(I2)},
       second line={(O3)--(O4)});
\coordinate (M4) at  (intersection cs:
       first line={(M1)--(I4)},
       second line={(O5)--(O6)});
\coordinate (M5) at  (intersection cs:
       first line={(M4)--(I5)},
       second line={(O1)--(O6)});

\draw (M1) -- (M2) -- (M3) -- cycle;
\draw (M1) -- (M4) -- (M5) -- cycle;

\draw[line join=round,thick,fill=yellow!80]  (O2) -- (O3) -- (O5) -- cycle;
\draw[line join=round,thick,fill=red!40]  (O1) -- (O4) -- (O6) -- cycle;
\draw[line join=round,thick,fill=green!90,opacity=0.4, draw opacity=1]  (O3) -- (O4) -- (O5) -- cycle;
\draw[line join=round,thick,fill=orange!60,opacity=0.35, draw opacity=1]  (O4) -- (O5) -- (O6) -- cycle;
\end{tikzpicture}
}

\sloppy

\newsiamthm{claim}{Claim}
\newsiamremark{remark}{Remark}

\newcommand{\TheTitle}{Nonnegative Matrix Factorization Requires Irrationality}
\newcommand{\TheAuthors}{D. Chistikov, S. Kiefer, I. Maru\v{s}i\'{c}, M. Shirmohammadi, and J. Worrell}

\headers{\TheTitle}{\TheAuthors}

\ifpdf
\hypersetup{
  pdftitle={\TheTitle},
  pdfauthor={\TheAuthors}
}
\fi

%
%


\title{{\TheTitle}%
\footnote{%
Extended abstracts of parts of the paper previously appeared
in conference proceedings as~\cite{16CKMSW-ICALP} and~\cite{CKMSW17soda}.
}}
\author{
    Dmitry Chistikov%
    \footnote{Max Planck Institute for Software Systems (MPI-SWS), Germany}
    \thanks{University of Oxford, UK (\email{FirstName.LastName@cs.ox.ac.uk})}%
    \and
    Stefan Kiefer\footnotemark[3]
    \and
    Ines Maru\v{s}i\'{c}\footnotemark[3]
    \and
    Mahsa Shirmohammadi\footnotemark[3]
    \and
    James Worrell\footnotemark[3]
}

\begin{document}

\maketitle

\begin{abstract}
 Nonnegative matrix factorization (NMF) is the problem of decomposing
  a given nonnegative $n \times m$ matrix~$M$ into a product of a
  nonnegative $n \times d$ matrix~$W$ and a nonnegative
  $d \times m$ matrix~$H$. A longstanding open question, posed by Cohen and Rothblum in~1993, is whether a rational matrix~$M$ always has an NMF of minimal inner dimension $d$ whose factors $W$ and $H$ are also rational.  We answer this question negatively, by exhibiting a 
  matrix for which $W$ and~$H$ require irrational entries.
 \end{abstract}

\begin{keywords}
nonnegative matrix factorization,
nonnegative rank
\end{keywords}

\begin{AMS}
15B48, 
15A23, 
52B05, 
68T30, 
68Q99
\end{AMS}

\section{Introduction}\label{sec-intro}
Nonnegative matrix factorization (NMF) is the task of factoring a
matrix of nonnegative real numbers~$M$ (henceforth a
\emph{nonnegative} matrix) as a product $M= W \cdot H$ such that the
matrices~$W$ and~$H$ are also nonnegative.  As well as being a natural
problem in its own right, see Thomas~\cite{Thomas74} and Cohen and Rothblum~\cite{CohenRothblum93}, NMF has found many applications in various
domains, including machine learning, combinatorics, and communication
complexity; see, e.g.,~\cite{%
DBLP:conf/focs/AroraGM12,Gillis14,LawtonS71,Moitra16,Venkatasubramanian13,DBLP:journals/jcss/Yannakakis91}
and the references therein.

For an NMF $M = W \cdot H$, the number of columns in $W$ is called the
\emph{inner dimension}.
The smallest inner dimension of any NMF of~$M$ is called the \emph{nonnegative
rank (over the reals)} of~$M$; 
an early reference is the paper by Gregory and Pullman~\cite{GregoryP}.
Similarly, the \emph{nonnegative rank of~$M$ over the rationals}
is defined as the smallest inner dimension of an NMF $M = W \cdot H$
with matrices $W, H$ that have only \emph{rational} entries.
Cohen and Rothblum~\cite{CohenRothblum93} posed the following problem in~1993:
\begin{quote}
``%
\textsc{Problem.}  Show that the nonnegative ranks of a rational matrix over the
reals and over the rationals coincide, or provide an example where the two
ranks are different.%
''
\end{quote} 
In this paper, we solve the above problem by providing an example of a rational matrix~$M$
that has different nonnegative ranks over \R and over \Q.%

\subsection*{Discussion}


In the last few years, there has been progress towards resolving the
Cohen--Rothblum problem.  It was already known to Cohen and
Rothblum~\cite{CohenRothblum93} that the nonnegative ranks over \R and \Q
coincide for matrices of rank at most~$2$.  (Note that the usual ranks
over~\R and~\Q coincide for all rational matrices.)  In 2015, Kubjas
et al.~\cite[Corollary~4.6]{kubjas2015fixed} extended this result to
matrices of nonnegative rank (over \R) at most~$3$.  On the other
hand, Shitov~\cite{ShitovNonRankSubField15} proved that the
nonnegative rank of a matrix can indeed depend on the underlying
field: he exhibited a nonnegative matrix with irrational entries whose
nonnegative rank over a subfield of~\R is different from its
nonnegative rank over~\R. Independently and concurrently with our work, Shitov~\cite{ShitovCR} also proposes a rational matrix whose nonnegative ranks over \R and~\Q are different.

In the present paper, in order to find a rational matrix
that has different nonnegative ranks over \R and~\Q,
we proceed in two steps.
As the first step, we study \emph{restricted}
NMFs~\cite{gillis2012geometric}, that is, those
factorizations~$M' = W \cdot H'$ of a given matrix~$M'$ in which the
columns of~$W$ span the same vector space as the columns of~$M'$.
We find irrationality in this setting,
constructing a rational matrix $M'$ that has a unique
(and irrational) restricted NMF $M' = W \cdot H'$ of inner dimension~$5$;
uniqueness here is understood
up to permutation and rescaling of columns of~$W$.
As the second step, we transfer this irrationality to our main setting:
we construct, based on the matrix $M'$, another matrix $M$ that has
a unique (and irrational) NMF $M = W \cdot H$ of inner dimension~$5$.

For the first step, it
has long been known~\cite{CohenRothblum93} that NMF can be interpreted
geometrically as finding a set of vectors (columns of~$W$) inside a
unit simplex whose convex hull contains a given set of points (columns
of~$M$).
It has recently been shown
by Gillis and Glineur~\cite{gillis2012geometric} (see also \cite{16CKMSW-ICALP})
that \emph{restricted} NMFs are in one-to-one
correspondence with \emph{nested polytopes}:
a matrix $M'$ corresponds to a pair of full-dimensional
polytopes $\mathcal R \sset \mathcal P$,
and a restricted NMF of $M'$ to a polytope 
$\mathcal Q$ \emph{nested}
in between:
$\mathcal R \sset \mathcal Q \sset \mathcal P$.
In this paper we find a pair of $3$-dimensional
polytopes $\mathcal R \sset \mathcal P$ with rational vertices
such that there is only one $5$-vertex polytope $\mathcal Q$
with $\mathcal R \sset \mathcal Q \sset \mathcal P$,
and the vertices of this polytope $\mathcal Q$ have irrational
coordinates:
$\mathcal R$ and $\mathcal P$ are chosen so as
to impose quadratic constraints on the coordinates
of the vertices of $\mathcal Q$.
This gives us a rational matrix $M'$ that has a unique
(and irrational) \emph{restricted}
NMF $M' = W \cdot H'$ of inner dimension~$5$.

For the second step,
if we knew that the factorization $M' = W \cdot H'$
were unique among \emph{all} NMFs of the same inner dimension,
\color{black}
we would be done.
This, however, requires ruling out several classes of other
hypothetical (non-restricted) factorizations of the matrix.


Towards this goal, one might want to take advantage of
the work on uniqueness properties of NMF,
studied, for instance, by Thomas~\cite{Thomas74},
Laurberg et al.~\cite{LaurbergCPHJ08}, and
Gillis~\cite{Gillis12},
or on the topology of the set of all NMFs
(see Mond et~al.~\cite{MondSvS03}).
Here we pursue a different strategy.
We show that for a larger matrix
$M = \begin{pmatrix} M^{\prime} & W_\eps \end{pmatrix}$, where~%
$W_\eps$ is a nonnegative rational matrix which is entry-wise close to
$W$, the only NMF (restricted or otherwise) of the same inner
dimension has the same left factor~$W$---thus extending the uniqueness
property to the \emph{non-restricted} setting.

The guiding idea behind our extending~$M'$ to~$M$ is that by including all 
columns of~$W$ into the set of columns of~$M$ we can exclude certain 
``undesirable'' factorizations, thereby ensuring that $M$ has no rational NMF. 
We show this by a constraint propagation argument.
Unfortunately for this construction, the matrix~$W$ itself has irrational 
entries.  However, we show that we can instead take any nonnegative 
\emph{rational} matrix~$W_\eps$ within a sufficiently small neighbourhood 
of $W$, and the undesirable factorizations will still be excluded.  In the 
text we describe such a neighbourhood explicitly
and pick a specific rational 
matrix~$W_\eps$ from it, thus obtaining the matrix~$M$
of the above form and proving the main
result of the paper.

Conceptually, the existence of a suitable matrix $W_\eps$ can be
understood in terms of \emph{upper semi-continuity} of the nonnegative 
rank over \R, proved by Bocci et~al.~\cite{BocciCR11}.
By this property, if a matrix $M$ has
nonnegative rank~$r$ over \R, then all nonnegative matrices in a sufficiently
small neighborhood of $M$ have nonnegative rank~$r$ or greater (over \R).
In the same manner,
our proof extends the non-existence of undesirable
factorizations from the matrix $W$ to $W_\eps$.


From the computational perspective, nonnegative rank (over~\R as well as over~\Q)
is a nontrivial quantity to compute.
The usual rank of a matrix $M$ is greater than or equal to~$r$ if and only if
$M$ has an $r \times r$ submatrix of rank~$r$.
The same property does not hold for nonnegative rank.  This
follows from a construction by Moitra~\cite{Moitra16} of a family
of matrices, indexed by $r,n \in \mathbb{N}$, respectively
having size $3rn \times 3rn$ and nonnegative rank at least $4r$,
but no $(n-1)\times 3rn$ submatrix of nonnegative rank greater
than $3r$.
%
A strengthening of this result can be found in
Eggermont et~al.~\cite{EggermontHK14}; this paper, in fact, studies
the set of matrices of nonnegative rank at most~$3$ and looks into the
properties of the boundary of this set.

Deciding whether a given matrix has nonnegative rank at most~$r$ is a
computationally hard problem, known to be \NP-hard due to a result by
Vavasis~\cite{Vavasis09}.
The problem is
easily seen to be reducible to the decision problem for the existential
theory of real closed fields and therefore belongs to
\PSPACE\ (see, e.g.,~\cite{Can88}).
Beyond this generic
upper bound, the problem has been attacked from many different angles.
Here we highlight the results of
Arora et~al.~\cite{AroraGKM16}, who identified
several variants of the problem that are efficiently solvable,
and Moitra~\cite{Moitra16}, who found semi-algebraic descriptions
of the sets of matrices of nonnegative rank at most~$r$ in which
the number of variables is~$O(r^2)$.
%
%
However, it remains an open question~\cite{Vavasis09}
whether or not the set $\{ (M, r) \colon
\text{the nonnegative rank of $M$ is~$\le r$} \}$
belongs to \NP; our solution to the Cohen--Rothblum
problem does not exclude either possibility
(even though it does rule out a hypothetical ``simple'' argument
for membership in \NP, wherein a certificate is an NMF with rational
entries of small bit size).

%
%
%



\section{Preliminaries}
\label{sec-prelims-NMF}
For any ordered field $\mathbb{F}$, we denote by $\mathbb{F}_{+}$ the
set of all its nonnegative elements. For any vector $v$, we write
$v_{i}$ for its $i$th entry. A vector of real numbers $v$ is
called \emph{pseudo-stochastic} if its entries sum up to one. A
pseudo-stochastic vector $v$ is called \emph{stochastic} if its
entries are
nonnegative. 

For any matrix $M$, we write $M_{i, :}$ for its $i$th row, $M_{:, j}$
for its $j$th column, and $M_{i, j}$ for its $(i,j)$th entry.
A matrix is called \emph{nonnegative} (resp., \emph{zero} or \emph{rational}) if so are all its entries.
A nonnegative matrix is \emph{stochastic} if its columns are stochastic. 

\subsection{Nonnegative Rank}\label{subsec-nonnegRank}
Let $\mathbb{F}$ be an ordered field, such as the reals $\mathbb{R}$ or the rationals~$\mathbb{Q}$. Given a nonnegative matrix $M \in \mathbb{F}_+^{n \times m}$, a \emph{nonnegative matrix factorization (NMF) over $\mathbb{F}$} of $M$ is any representation of the form $M = W \cdot H$ where $W \in \mathbb{F}_+^{n \times d}$ and $H \in \mathbb{F}_+^{d \times m}$ are nonnegative matrices.
We refer to~$d$ as the \emph{inner dimension} of the NMF, and hence refer to NMF $M = W \cdot H$ as being \emph{$d$-dimensional}.
The \emph{nonnegative rank over $\mathbb{F}$} of $M$ is the smallest
nonnegative integer $d$ such that there exists a $d$-dimensional NMF
over $\mathbb{F}$ of $M$. We may equivalently characterize~\cite{CohenRothblum93} the nonnegative rank over
$\mathbb{F}$ of $M$ as the smallest number of \mbox{rank-$1$} matrices in
$\mathbb{F}_{+}^{n \times m}$ such that $M$ is equal to their sum.
The nonnegative rank over~$\mathbb{R}$ will henceforth simply be called nonnegative rank. 
For any nonnegative matrix $M \in \mathbb{R}_{+}^{n \times m}$, it is easy to see that 
$
\rk (M) \le \rkp (M) \le \min\{n, m\} 
$,
where $\rk(M)$ and $\rkp(M)$ denote the rank and the nonnegative rank, respectively.

Given a nonzero matrix $M \in \mathbb{F}_{+}^{n \times m}$, by
removing the zero columns of $M$ and dividing each remaining column by
the sum of its elements, we obtain a stochastic matrix with equal
nonnegative
rank. 
Similarly, if $M = W \cdot H$ then after removing the zero columns
in~$W$ and multiplying with a suitable diagonal matrix~$D$, we get
$M = W \cdot H = W D \cdot D^{-1} H$ where~$W D$ is stochastic.  If
$M$ is stochastic then (writing $\vone$ for a row vector of ones) we
have 
\[
\vone = \vone M = \vone W D \cdot D^{-1} H = \vone D^{-1} H,
\]
hence $D^{-1} H$ is stochastic as well.  Thus, without loss of
generality one can consider NMFs $M=W\cdot H$ in which $M$, $W$, and
$H$ are all stochastic matrices~\cite[Theorem 3.2]{CohenRothblum93}.
In such a case, we will call the factorization $M = W \cdot H$
\emph{stochastic}.



%

\subsection{Nested Polygons in the Plane}
\label{sec:NPP}
In this paper all polygons are assumed to be convex.  Given two
polygons in the plane,
$\mathcal{R}\subseteq\mathcal{P}\subseteq\mathbb{R}^2$, a polygon
$\mathcal{Q}$ is said to be \emph{nested between~$\mathcal{R}$
  and~$\mathcal{P}$} if
$\mathcal{R}\subseteq \mathcal{Q}\subseteq \mathcal{P}$.  Such a
polygon is said to be \emph{minimal} if it has the minimum number of
vertices among all polygons nested between $\mathcal{R}$
and~$\mathcal{P}$.  In this section we recall
from~\cite{DBLP:journals/iandc/YapABO89} a standardized form for
minimal nested polygons, which will play an important role in the
subsequent development.

Fix two polygons $\mathcal{R}$ and $\mathcal{P}$, with
$\mathcal{R}\subseteq\mathcal{P}$.  A \emph{supporting line segment}
is a directed line segment $u v$ such that, first, the endpoints $u$ and~$v$ lie on the
boundary of the outer polygon~$\mathcal{P}$ and, second,
the inner polygon $\mathcal{R}$ touches~$u v$ and lies to the left of~$u v$.
A nested polygon with vertices
$v_1,\ldots,v_k$, listed in anti-clockwise order, is said to be
\emph{supporting} if the directed line segments $v_1 v_2, v_2 v_3,
\ldots, v_{k-1} v_k$ are all supporting. (Note that the directed line segment $v_k v_1$ need not be supporting.)
Such a polygon is uniquely determined by the vertex $v_1$
(see~\cite[Section 2]{DBLP:journals/iandc/YapABO89}) and is henceforth
denoted by $\mathcal{S}_{v_1}$.  It is shown
in~\cite{DBLP:journals/iandc/YapABO89} that some supporting polygon is
also minimal.  More specifically, from~\cite[Lemma
4]{DBLP:journals/iandc/YapABO89} we have:
\begin{lemma}
Consider a minimal nested polygon with vertices
  $v_1,\ldots,v_k$, listed in anti-clockwise order, where $v_1$ lies
  on the boundary of $\mathcal{P}$.  The supporting polygon
  $\mathcal{S}_{v_1}$ is also minimal.
\label{lem:NPP}
\end{lemma}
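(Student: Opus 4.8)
The plan is to show that the supporting polygon $\mathcal{S}_{v_1}$ has at most $k$ vertices. Since $\mathcal{S}_{v_1}$ is by definition a nested polygon --- and it is well-defined by~\cite[Section~2]{DBLP:journals/iandc/YapABO89} --- while $k$ is the minimum possible number of vertices of a polygon nested between $\mathcal{R}$ and $\mathcal{P}$, such a bound forces $\mathcal{S}_{v_1}$ to have exactly $k$ vertices, hence to be minimal. So the entire task is to bound the number of vertices of $\mathcal{S}_{v_1}$ by that of the given minimal polygon.

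First I would record the greedy description of $\mathcal{S}_{v_1}$. For a point $p$ on the boundary of $\mathcal{P}$, let $f(p)$ be the second endpoint of the unique supporting line segment issuing from $p$; this is a well-defined point of the boundary of $\mathcal{P}$, since by definition both endpoints of a supporting segment lie there. Then the vertices of $\mathcal{S}_{v_1}$ are $v_1, f(v_1), f(f(v_1)), \ldots$, the construction stopping as soon as the chord from the current vertex back to $v_1$ may serve as the final (not necessarily supporting) edge, that is, as soon as $\mathcal{R}$ lies to its left. The geometric fact I would establish next is a monotonicity property of $f$: among all rays issuing from $p$ that keep $\mathcal{R}$ in their closed left half-plane, the ray along the supporting segment from $p$ is the most anti-clockwise one; combined with convexity of $\mathcal{R}$ and of $\mathcal{P}$, this yields that $f$ respects the anti-clockwise cyclic order on the boundary of $\mathcal{P}$, in the sense that advancing $p$ forward never moves $f(p)$ backward.

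Next I would carry out a ``greedy dominates'' induction, comparing $\mathcal{S}_{v_1}$ against the given minimal polygon $\mathcal{Q}$ with anti-clockwise vertices $v_1, v_2, \ldots, v_k$. Since $\mathcal{Q}$ is nested, each of its directed edges $v_j v_{j+1}$ keeps $\mathcal{R}$ in its closed left half-plane and hence, by the observation above, points no further anti-clockwise than the supporting ray from $v_j$. Feeding this into the monotonicity of $f$, one argues inductively that after $j$ steps the greedy construction has swept at least as far around the boundary of $\mathcal{P}$ (anti-clockwise from $v_1$) as $\mathcal{Q}$ has after traversing its first $j$ edges. Because $\mathcal{Q}$ closes up after $k$ edges, so must the greedy construction within $k$ steps, and therefore $\mathcal{S}_{v_1}$ has at most $k$ vertices. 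Equivalently, one may view the argument as straightening $\mathcal{Q}$ into $\mathcal{S}_{v_1}$ one edge at a time, each straightening pushing the current edge outward into its supporting position and only ever absorbing --- never creating --- vertices.

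The step I expect to be the main obstacle is precisely this last comparison. The vertices $v_2, \ldots, v_k$ of $\mathcal{Q}$ need not lie on the boundary of $\mathcal{P}$, so ``how far around the boundary $\mathcal{Q}$ has progressed'' has to be quantified indirectly --- for instance via the directions of the edge lines and their tangency normals to $\mathcal{R}$, which both $\mathcal{Q}$ and $\mathcal{S}_{v_1}$ enclose --- and the termination condition of the greedy construction, in particular why its non-supporting last edge is enough to close the polygon, must be reconciled with this progress measure; the base case and the cyclic wrap-around are where the bookkeeping is delicate. The monotonicity lemma itself, by contrast, is a routine exercise in planar convexity.
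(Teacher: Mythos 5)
First, a point of reference: the paper does not prove this lemma at all --- it is imported verbatim from \cite[Lemma 4]{DBLP:journals/iandc/YapABO89} (``from [Lemma 4] we have\dots''), so there is no in-paper argument to match. Your greedy/exchange strategy is exactly the kind of argument by which this fact is established in that reference, so the \emph{approach} is the right one: show that the supporting polygon $\mathcal{S}_{v_1}$ has at most $k$ vertices by arguing that the greedy construction always stays ``at least as far advanced'' as the given minimal polygon $\mathcal{Q}=v_1v_2\cdots v_k$.

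However, as written your proposal has a genuine gap, and you point at it yourself: the two facts you actually argue (the supporting direction at a boundary point $p$ is the most anti-clockwise direction keeping $\mathcal{R}$ in the closed left half-plane, and the map $f$ is monotone along $\partial\mathcal{P}$) do not by themselves give the domination step, because the vertices $v_2,\ldots,v_k$ of $\mathcal{Q}$ need not lie on $\partial\mathcal{P}$, so neither fact can be applied at them. To make the induction work one has to fix a concrete invariant --- for instance, ``the $(j{+}1)$-st greedy vertex lies at or beyond (anti-clockwise from $v_1$) the point $b_j\in\partial\mathcal{P}$ where the ray $v_jv_{j+1}$ extended forward exits $\mathcal{P}$'' --- and the inductive step then requires showing that from any boundary point at or beyond $b_{j-1}$, some chord reaching at least $b_j$ keeps $\mathcal{R}$ in its closed left half-plane. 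That is not a formal consequence of $\mathcal{R}$ lying left of the edge-chord through $v_jv_{j+1}$: one must additionally use that $\mathcal{R}\subseteq\mathcal{Q}$ (so $\mathcal{R}$ avoids the wedge near $b_j$ cut off by the next edge of $\mathcal{Q}$) and keep track of which arc of $\partial\mathcal{P}$ the greedy vertex can occupy; the same care is needed for the closing step, i.e.\ why, once the greedy chain has swept past the exit point of the last supporting-type edge, $\mathcal{R}$ lies to the left of the chord back to $v_1$ so that the construction terminates within $k$ vertices. These are precisely the points you defer as ``delicate bookkeeping'', but they are the entire mathematical content of the lemma (the monotonicity facts are routine), so the proposal is a correct plan rather than a proof. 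If you want a self-contained argument you must carry out this invariant-based induction (or simply cite \cite[Lemma 4]{DBLP:journals/iandc/YapABO89}, as the paper does).
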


We will need the following elementary fact of linear algebra in
connection with subsequent applications of Lemma~\ref{lem:NPP}.
Let $v_1=(x_1,y_1)$, $v_2=(x_2,y_2)$, and $v_3=(x_3,y_3)$ be three distinct
points in the plane, and consider the determinant
\[ \Delta = \begin{vmatrix} x_1 & y_1 & 1 \\
                   x_2 & y_2 & 1 \\
                   x_3 & y_3 & 1 
                 \end{vmatrix} \, . \] Then $\Delta=0$ if and only if
               $v_1$, $v_2$, and $v_3$ belong to some common line, and $\Delta > 0$
               if and only if the list of vertices $v_1,v_2,v_3$
               describes a triangle with anti-clockwise orientation.

\section{Main Result} 
\label{sec:main}
We show that the nonnegative ranks over $\R$ and~$\Q$ are, in general, different.
\newcommand{\dis}{1mm}%
\begin{theorem}\label{thm:matrixWithDifferentRanks}
Let $M = \begin{pmatrix} M^{\prime} & W_\eps \end{pmatrix} \in \Q_+^{6 \times 11}$ where:
\[
M^{\prime} = 
\begin{pmatrix}
\frac{5}{44} & \frac{5}{11} & \frac{85}{121} & 0 & 0  & 0  \\[\dis]
0    & 0    & 0   & \frac{2}{11} & \frac{3}{11} & \frac{7}{33}  \\[\dis]
\frac{1}{11} & \frac{1}{44} & \frac{2}{121} & \frac{1}{44} & \frac{15}{88} & \frac{17}{88} \\[\dis]
\frac{1}{44} & \frac{1}{44} & \frac{8}{121}  & \frac{1}{44} & \frac{19}{88} & \frac{5}{24} \\[\dis]
\frac{3}{11} & \frac{3}{11} & \frac{12}{121} & \frac{8}{11} & \frac{2}{11}  & \frac{2}{33} \\[\dis]
\frac{1}{2} & \frac{5}{22} & \frac{14}{121} & \frac{1}{22} & \frac{7}{44} & \frac{43}{132}
\end{pmatrix}
\in \Q_+^{6 \times 6},
\]
\[
W_\eps =
\begin{pmatrix}
0                     & \frac{133}{165}   & \frac{640}{2233}    & 0                   &  0 \\[\dis]
\frac{1}{111540}      & 0                 & 0                   & \frac{17209}{58047} & \frac{997}{5082} \\[\dis]
\frac{114721}{892320} & \frac{1}{146850}  & \frac{17}{506}      & \frac{385}{1759}    & \frac{2921}{203280} \\[\dis]
\frac{47}{1248}       & \frac{413}{5874}  & \frac{1}{102718}    & \frac{2915}{10554}  & \frac{4381}{203280} \\[\dis]
\frac{36}{169}        & \frac{22}{267}    & \frac{18674}{51359} & \frac{1}{116094}    & \frac{3252}{4235} \\[\dis]
\frac{276953}{446160} & \frac{1009}{24475}& \frac{16239}{51359} & \frac{1100}{5277}   & \frac{1}{101640}
\end{pmatrix} 
\in \Q_+^{6 \times 5}.
\]
The nonnegative rank of~$M$ over~$\R$ is~$5$.
The nonnegative rank of~$M$ over~$\Q$ is~$6$.
\end{theorem}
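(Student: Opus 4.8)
To prove Theorem~\ref{thm:matrixWithDifferentRanks} I would combine two easy upper bounds with two lower bounds, the latter of which carry essentially all the content and rely on the nested‑polytope machinery and the uniqueness statement for the restricted NMF of~$M'$ developed in the later sections.

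\emph{Upper bounds.} Over~$\Q$, the factorization $M = I_6\cdot M$, with $I_6$ the $6\times 6$ identity, is a $6$‑dimensional NMF with rational factors, so the nonnegative rank of~$M$ over~$\Q$ is at most~$6$; more generally the nonnegative rank over any ordered field of an $n\times m$ nonnegative matrix is at most~$\min\{n,m\}$. Over~$\R$, I would exhibit the $5$‑dimensional NMF $M = W\begin{pmatrix}H' & H_\eps\end{pmatrix}$, where $W\in\R_+^{6\times 5}$ is the irrational left factor of the restricted NMF $M' = W H'$ produced by the nested‑polytope construction, $H'$ is the corresponding (stochastic) right factor, and $H_\eps\ge 0$ exists because each column of~$W_\eps$ is constructed so as to lie in the convex hull of the columns of~$W$. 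Hence $\rkp(M)\le 5$.

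\emph{Nonnegative rank over~$\R$.} Since $M'$ is the submatrix formed by the first six columns of~$M$, deleting columns gives $\rkp(M)\ge\rkp(M')$. The construction of~$M'$ yields $\rk(M')=4$ and shows that every polytope nested between the associated rational pair $\mathcal{R}\sset\mathcal{P}$ has at least five vertices; this uses the quadratic constraints imposed on the vertices and a standardized supporting form for minimal nested polytopes in the spirit of Lemma~\ref{lem:NPP}. As any $4$‑dimensional NMF of a rank‑$4$ matrix is automatically restricted, it follows that $\rkp(M')=5$, whence $\rkp(M)=5$. I also record that, because the columns of~$W_\eps$ lie in the convex hull of the columns of~$W$ and hence in the affine hull of the columns of~$M'$, the matrix~$M$ has the same rank~$4$ and the same column space as~$M'$.

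\emph{Nonnegative rank over~$\Q$.} We now know this rank lies between~$5$ and~$6$, so it suffices to exclude a $5$‑dimensional NMF $M = W\cdot H$ with rational~$W,H$; I would in fact classify all $5$‑dimensional NMFs of~$M$ over~$\R$. If the factorization is restricted, then the column space of~$W$ equals that of~$M$, which equals that of~$M'$, so restricting to the first six columns exhibits a restricted $5$‑dimensional NMF of~$M'$; by the uniqueness statement for~$M'$, the left factor~$W$ must be the irrational matrix above, up to permutation and positive rescaling of columns. If the factorization is not restricted, then $\rk(W)=5>4=\rk(M)$, and here the point of padding~$M'$ with \emph{all five} columns of~$W_\eps$ becomes decisive: the convex hull of the columns of~$W$ must contain every column of~$M$, and in particular all five columns of~$W_\eps$, and a constraint‑propagation argument shows that no such genuinely non‑restricted $5$‑dimensional factorization can exist. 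In either case the left factor of any $5$‑dimensional NMF of~$M$ is~$W$ up to these trivial ambiguities; since no permutation and rescaling of the columns of~$W$ makes it rational, $M$ has no rational $5$‑dimensional NMF, and therefore the nonnegative rank of~$M$ over~$\Q$ equals~$6$.

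\emph{Main obstacle.} The heart of the argument is the non‑restricted case just sketched: eliminating the $5$‑dimensional factorizations whose left factor has rank~$5$, and, entangled with it, choosing an explicit \emph{rational} matrix~$W_\eps$ close enough to the irrational~$W$ that the rigidity of the nested‑polytope argument is preserved while every column of~$W_\eps$ remains inside the convex hull of the columns of~$W$. Making this quantitative --- a discrete analogue of the upper semi‑continuity of nonnegative rank --- is what occupies the bulk of the work.
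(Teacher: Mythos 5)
Your upper bounds and the lower bound over $\R$ are fine (a $4$-dimensional NMF of a rank-$4$ matrix is indeed automatically restricted, and the paper reaches the same conclusion by excluding what it calls type-2 and type-3 factorizations geometrically), and your overall two-step plan coincides with the strategy announced in the paper's introduction. The problem is the step that carries all the weight: the lower bound over $\Q$. You split $5$-dimensional NMFs $M=L\cdot R$ into restricted ($\rk L=4$) and non-restricted ($\rk L=5$) and dispose of the entire non-restricted case with the sentence ``a constraint-propagation argument shows that no such genuinely non-restricted $5$-dimensional factorization can exist.'' That is a restatement of the goal, not an argument, and it does not match what constraint propagation can actually do here. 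The paper's Lemma~\ref{l:constraints} applies only to left factors with a very specific zero pattern (``type 4'': exactly one column with positive first coordinate, exactly one with positive second coordinate, and three columns vanishing in both), and its proof exploits that pattern at every step (e.g.\ Claim~\ref{c:14} uses that a single column of $L$ has positive second entry). A rank-$5$ left factor need not have this pattern: the missing idea is the classification of \emph{all} candidate left factors by the zero pattern of their first two rows (the counts $k,k_1,k_2$ in Section~\ref{sub:profiles}), after which the patterns with $k=2$ are killed by planar nested-polygon arguments using only the columns of $M'$ (Sections~\ref{subsec-Profile2}--\ref{subsec-Profile2b}), the pattern $k=1$ is forced to have $W_{:,1}$ as a column via Lemmas~\ref{lem:NPP}, \ref{lem:ineq}, \ref{lem:ineq2}, and only the pattern $k=3$ is handled by the constraint propagation on the $W_\eps$-columns. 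Without that case structure your non-restricted case simply has no proof.

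A second, smaller issue: in the restricted case you invoke ``the uniqueness statement for the restricted NMF of $M'$'' as a known fact. That statement is true, but this paper never proves it in that form (it proves Proposition~\ref{prop:types}(1) and its strengthening in Appendix~\ref{app:uniqueness} for type-1 NMFs of $M$); establishing it requires essentially the same zero-pattern analysis plus the supporting-polygon lemmas, so it cannot be treated as a free ingredient. Likewise your claim that every polytope nested between $\mathcal R$ and $\mathcal P$ has at least five vertices is exactly the nontrivial content behind $\rkp(M')=5$ and needs the same machinery. In short, the architecture you propose is the right one, but at the two places where the paper does real work --- the type classification with the geometric exclusions, and the quantitative neighbourhood argument of Lemma~\ref{l:constraints} that lets the rational $W_\eps$ inherit the impossibility proved for $W$ --- your proposal substitutes appeals to results it does not establish.
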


The rest of this paper is devoted to the proof of Theorem~\ref{thm:matrixWithDifferentRanks}. 

The matrix $M$ is stochastic.
The matrix~$M^{\prime}$ has a stochastic 5-dimensional NMF $M^{\prime} = W \cdot H^{\prime}$ with $W$, $H^{\prime}$ as follows:
\begin{align}
W
&= 
\begin{pmatrix}
0 & \frac{5}{7}+\frac{5\sqrt{2}}{77} & \frac{15+5\sqrt{2}}{77} & 0 & 0 \\[\dis]
0 & 0 & 0 & \frac{20 + 2 \sqrt{2}}{77} & \frac{48 - 8\sqrt{2}}{187} \\[\dis]
\frac{\sqrt{2}}{11} & 0 & \frac{4 - \sqrt{2}}{77} & \frac{3}{14} + \frac{\sqrt{2}}{308} & \frac{14 - 8\sqrt{2}}{187} \\[\dis]
\frac{-1 + \sqrt{2}}{11} & \frac{4 + \sqrt{2}}{77} & 0 & \frac{39}{154} + \frac{5\sqrt{2}}{308} & \frac{21 - 12 \sqrt{2}}{187} \\[\dis]
\frac{8 -4\sqrt{2}}{11} & \frac{12 -4\sqrt{2}}{77} & \frac{4}{11} & 0 & \frac{104 + 28\sqrt{2}}{187} \\[\dis]
\frac{4+2\sqrt{2}}{11} & \frac{6 -2\sqrt{2}}{77} & \frac{30 -4\sqrt{2}}{77} & \frac{3}{11} - \frac{\sqrt{2}}{22} & 0
\end{pmatrix}, \label{eq:leftIrrationalMatrix}
\end{align}
\begin{align*}
H^{\prime}
&=
\begin{pmatrix}
\frac{1+\sqrt{2}}{4} & 0 & \frac{\sqrt{2}}{11} & \frac{1}{4} - \frac{\sqrt{2}}{8} & 0 & \frac{1}{6} + \frac{\sqrt{2}}{12} \\[\dis]
0 & \frac{1}{2} - \frac{\sqrt{2}}{8} & 1 - \frac{\sqrt{2}}{11} & 0 & 0 & 0 \\[\dis]
\frac{3 - \sqrt{2}}{4} & \frac{1}{2} + \frac{\sqrt{2}}{8} & 0 & 0 & 0 & 0 \\[\dis]
0 & 0 & 0 & 0 & \frac{21}{34} + \frac{7 \sqrt{2}}{68} & \frac{5}{6} - \frac{\sqrt{2}}{12}\\[\dis]
0 & 0 & 0 & \frac{3}{4} + \frac{\sqrt{2}}{8} & \frac{13}{34} - \frac{7 \sqrt{2}}{68} & 0 
\end{pmatrix}.
\end{align*}
The matrix~$W_\eps$ has a stochastic 5-dimensional NMF $W_\eps = W \cdot H_\eps$ with $H_\eps$ as follows:
\[
\begin{pmatrix}
\frac{30419}{40560} + \frac{28679\sqrt2}{162240} \hspace{-1mm}& \frac{-2728}{46725} + \frac{5791\sqrt2}{140175}  & \frac{2741}{98049} - \frac{642\sqrt2}{32683} & \frac{-689}{10554} + \frac{15595\sqrt2}{337728} & \frac{389}{1848} - \frac{5501\sqrt2}{36960} \\[\dis]
0                                                \hspace{-1mm}& \frac{163318}{140175} - \frac{7277\sqrt2}{62300} & \frac{5958}{32683} - \frac{50543\sqrt2}{392196} & 0                                           & 0 \\[\dis]
0                                                \hspace{-1mm}& \frac{-2137}{20025} + \frac{6047\sqrt2}{80100}   & \frac{11062}{14007} + \frac{8321\sqrt2}{56028}  & 0                                           & 0 \\[\dis]
\frac{7443}{8840} - \frac{51313\sqrt2}{86190}    \hspace{-1mm}& 0                                                & 0                                    & \frac{148897}{179418} + \frac{172627\sqrt2}{1435344} & \frac{-1741}{26180} + \frac{1847\sqrt2}{39270} \\[\dis]
\frac{-408157}{689520} + \frac{1154473\sqrt2}{2758080} \hspace{-1mm}& 0                                          & 0                                    & \frac{7039}{29903} - \frac{318541\sqrt2}{1913792}    & \frac{134461}{157080} + \frac{1163\sqrt2}{11424}
\end{pmatrix}
\]
Hence, the matrix~$M$ has a stochastic 5-dimensional NMF as follows:
\begin{equation}
M = W \cdot \begin{pmatrix} H^{\prime} & H_\eps \end{pmatrix}. \label{eq:fact}
\end{equation}
We refer the reader to~\cite{maple-sheet}
for a Maple worksheet with calculations of the paper.
\begin{remark}
The columns of $M$ and~$W$ span the same vector space.
It follows that the \emph{restricted} nonnegative ranks of~$M$ over $\R$
and~$\Q$ are $5$ and~$6$, respectively.
In fact, the authors of this paper previously exhibited a rational nonnegative
matrix whose restricted nonnegative ranks over $\R$
and~$\Q$ differ~\cite{16CKMSW-ICALP}.
\end{remark}
We fix the matrices $M, M', W_\eps, W, H', H_\eps$ for the remainder of the paper.


\subsection{Types of Factorizations} \label{sub:profiles}

Let $M = L \cdot R$ be a stochastic NMF of inner dimension at most~5. (As
argued in Section~\ref{subsec-nonnegRank}, without loss of generality
we may consider only stochastic NMFs of $M$.)  Let us introduce the
following notation:
\begin{itemize}
\item $k$ is the number of columns in~$L$ whose first and second coordinates are~$0$,
\item $k_1$ is the number of columns in~$L$ whose first coordinate is strictly positive and second coordinate is~$0$, and
\item $k_2$ is the number of columns in~$L$ whose second coordinate is strictly positive and first coordinate is~$0$. 
\end{itemize}
Clearly, the factorization $M = L \cdot R$ corresponds to representing each column of $M$ as a convex
combination of the columns of $L$, with the coefficients of the convex
combination specified by the entries of~$R$. As $L$ has at most five columns, 
\begin{equation}
k+k_{1}+k_{2} \le 5. \label{ineq:profiles-totalsum}
\end{equation}
Since the columns $M_{:, 1}, M_{:, 2}, M_{:, 3}$  are linearly independent, the matrix $L$ has at least three columns whose second coordinate is~$0$. Likewise, since the columns $M_{:, 4}, M_{:, 5}, M_{:, 6}$ are linearly independent, $L$ has at least
three columns whose first coordinate is~$0$. That is,
\begin{equation}
k + k_1 \ge 3 \quad\text{ and }\quad k + k_2 \ge 3. \label{ineq:profiles-colsMindependent}
\end{equation}
Together with (\ref{ineq:profiles-totalsum}), this implies that $2 k \ge 6 - k_1 - k_2 \ge 1 + k$, and therefore $k \ge 1$.

The columns $M_{:, 1}, M_{:, 2}, M_{:, 3}$ have first coordinate strictly positive and second coordinate~$0$, while the columns $M_{:, 4}, M_{:, 5}, M_{:, 6}$ have second coordinate strictly positive and first coordinate~$0$. Therefore, in order for these columns to be covered by columns of $L$, we need to have: 
\begin{equation}
k_1 \ge 1 \quad\text{ and }\quad k_2 \ge 1. \label{ineq:profiles-atleast1}
\end{equation}
Together with (\ref{ineq:profiles-totalsum}), this implies that $k \le 5 - k_{1} - k_{2} \le 3$. 
We conclude that $k \in \{ 1, 2, 3 \}$. More precisely, it is now a consequence of inequalities \eqref{ineq:profiles-totalsum}, (\ref{ineq:profiles-colsMindependent}), and (\ref{ineq:profiles-atleast1}) that
the NMF $M = L \cdot R$ has (at least) one of the following four \emph{types}:
\begin{enumerate}
\item $k=1$, $k_1=2$, $k_2=2$;
\item $k=2$, $k_1=1$, $k_2 \in \{0,1,2\}$;
\item $k=2$, $k_2=1$, $k_1 \in \{0,1,2\}$;
\item $k=3$, $k_1=1$, $k_2=1$.
\end{enumerate}
These four types are illustrated in Figure~\ref{fig:zero-pattern} for
NMFs of inner dimension~5.

\begin{figure}
\begin{center}
\begin{tikzpicture}
\matrix [matrix of nodes,row sep=0,column 1/.style={nodes={minimum
width=1em}},left delimiter={(},right delimiter={)}, font=\scriptsize, label={above:type 1}]at (0,0)
{
0 & + & + & 0 & 0 \\
0 & 0 & 0 & + & + \\
$\cdot$ & $\cdot$  & $\cdot$  & $\cdot$  & $\cdot$  \\
$\cdot$ & $\cdot$  & $\cdot$  & $\cdot$  & $\cdot$  \\
$\cdot$ & $\cdot$  & $\cdot$  & $\cdot$  & $\cdot$ \\
$\cdot$ & $\cdot$  & $\cdot$  & $\cdot$  & $\cdot$ \\ 
};

\matrix [matrix of nodes,row sep=0,column 1/.style={nodes={minimum
width=1em}},left delimiter={(},right delimiter={)},  font=\scriptsize,label={above:type 2}] at (4.1,0)
{
0 & 0 & + & $\cdot$ & $\cdot$ \\
0 & 0 & 0 & + & + \\
$\cdot$ & $\cdot$  & $\cdot$  & $\cdot$  & $\cdot$  \\
$\cdot$ & $\cdot$  & $\cdot$  & $\cdot$  & $\cdot$  \\
$\cdot$ & $\cdot$  & $\cdot$  & $\cdot$  & $\cdot$ \\
$\cdot$ & $\cdot$  & $\cdot$  & $\cdot$  & $\cdot$ \\ 
};

\matrix [matrix of nodes,row sep=0,column 1/.style={nodes={minimum
width=1em}},left delimiter={(},right delimiter={)},  font=\scriptsize,label={above:type 3}]at (8.2,0)
{
0 & 0 & 0 & + & + \\
0 & 0 & + & $\cdot$ & $\cdot$ \\
$\cdot$ & $\cdot$  & $\cdot$  & $\cdot$  & $\cdot$  \\
$\cdot$ & $\cdot$  & $\cdot$  & $\cdot$  & $\cdot$  \\
$\cdot$ & $\cdot$  & $\cdot$  & $\cdot$  & $\cdot$ \\
$\cdot$ & $\cdot$  & $\cdot$  & $\cdot$  & $\cdot$ \\ 
};

\matrix [matrix of nodes,row sep=0,column 1/.style={nodes={minimum
width=1em}},left delimiter={(},right delimiter={)}, font=\scriptsize, label={above:type 4}] at (12.3,0)
{
+ & 0 & 0 & 0 & 0 \\
0 & + & 0 & 0 & 0 \\
$\cdot$ & $\cdot$  & $\cdot$  & $\cdot$  & $\cdot$  \\
$\cdot$ & $\cdot$  & $\cdot$  & $\cdot$  & $\cdot$  \\
$\cdot$ & $\cdot$  & $\cdot$  & $\cdot$  & $\cdot$ \\
$\cdot$ & $\cdot$  & $\cdot$  & $\cdot$  & $\cdot$ \\ 
};
\end{tikzpicture}
\end{center}
\caption{In any $5$-dimensional NMF $M = L \cdot R$, the matrix $L$ matches one of the four patterns above, up to a permutation of its columns. Here $+$ denotes any strictly positive number. }
\label{fig:zero-pattern}
\end{figure}

In Section~\ref{sec-proof} we prove the following proposition:

\begin{proposition} \label{prop:types}
Let $M$ be the matrix from Theorem~\ref{thm:matrixWithDifferentRanks} and
$W$ the matrix from Equation~(\ref{eq:leftIrrationalMatrix}). 
\begin{enumerate}[leftmargin=*]
\item If $M = L \cdot R$ is a type-1 NMF
then $W_{:,1}$ is a column of~$L$, and thus $L$ is not rational.
\item The matrix~$M$ has no type-2 NMF.
\item The matrix~$M$ has no type-3 NMF.
\item The matrix~$M$ has no type-4 NMF.
\end{enumerate}
\end{proposition}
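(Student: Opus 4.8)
The plan is to work throughout with a stochastic NMF $M = L \cdot R$ of inner dimension at most $5$ and to exploit the zero pattern of its first two rows. As preliminaries I would record the following, all obtained by direct computation from the explicit matrices: $\rk(M') = 4$ and the five columns of $W$ are pairwise distinct; the affine hulls $\Pi_A := \mathrm{aff}(M'_{:,1},M'_{:,2},M'_{:,3}) = \mathrm{aff}(W_{:,1},W_{:,2},W_{:,3})$ and $\Pi_B := \mathrm{aff}(M'_{:,4},M'_{:,5},M'_{:,6}) = \mathrm{aff}(W_{:,1},W_{:,4},W_{:,5})$ are $2$-dimensional, with $\Pi_A \sset \{x_2 = 0\}$ and $\Pi_B \sset \{x_1 = 0\}$, so $\ell^\ast := \Pi_A \cap \Pi_B$ is a line through $W_{:,1}$ contained in $\{x_1 = x_2 = 0\}$; among $W_{:,1},W_{:,2},W_{:,3}$ only $W_{:,1}$ has zero first coordinate, and among $W_{:,1},W_{:,4},W_{:,5}$ only $W_{:,1}$ has zero second coordinate; and, from the row-support pattern of $H_\eps$, the columns $(W_\eps)_{:,2},(W_\eps)_{:,3}$ lie in $\Pi_A$ while $(W_\eps)_{:,1},(W_\eps)_{:,4},(W_\eps)_{:,5}$ lie in $\Pi_B$. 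Writing $\mathcal{R}_A := \conv\{M'_{:,1},M'_{:,2},M'_{:,3},(W_\eps)_{:,2},(W_\eps)_{:,3}\}$ and $\mathcal{P}_A := \Pi_A \cap \R_+^6$ (and $\mathcal{R}_B, \mathcal{P}_B$ analogously on the $B$-side), the one substantial preliminary is that the only triangle $T$ with $\mathcal{R}_A \sset T \sset \mathcal{P}_A$ is $W_{:,1}W_{:,2}W_{:,3}$, and symmetrically that the only triangle between $\mathcal{R}_B$ and $\mathcal{P}_B$ is $W_{:,1}W_{:,4}W_{:,5}$; this is where the nested-polygon standardized form of Lemma~\ref{lem:NPP} and the quadratic constraints built into the coordinates of $M'$ enter, and where the columns of $W_\eps$ — sitting just inside the vertices of these triangles — are what makes the minimal nested triangle unique rather than one of infinitely many.

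The basic covering observation is that if a column of $M$ has zero second coordinate then, in $M = L \cdot R$, it is a convex combination of precisely those columns of $L$ that have zero second coordinate (nonnegativity kills the remaining coefficients), and symmetrically for the first coordinate. Hence the $k + k_1$ columns of $L$ with zero second coordinate have convex hull containing the $2$-dimensional polygon $\mathcal{R}_A \sset \Pi_A$, and the $k + k_2$ columns of $L$ with zero first coordinate have convex hull containing the $2$-dimensional $\mathcal{R}_B \sset \Pi_B$. Two elementary consequences follow: a set of two points cannot have convex hull containing a $2$-dimensional set; and if there are exactly three columns of $L$ with zero second coordinate, then — their convex hull containing $\mathcal{R}_A$ — they are affinely independent and coplanar with $\mathcal{R}_A$, hence lie in $\Pi_A$, and being stochastic and nonnegative they lie in $\mathcal{P}_A$, so their convex hull is a triangle nested between $\mathcal{R}_A$ and $\mathcal{P}_A$ and therefore equals $W_{:,1}W_{:,2}W_{:,3}$; symmetrically, three columns of $L$ with zero first coordinate must be $W_{:,1},W_{:,4},W_{:,5}$.

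This disposes of types $1$, $2$ and $3$. In type $2$ we have $k = 2$ and $k_1 = 1$, so $L$ has exactly three columns with zero second coordinate, necessarily $W_{:,1},W_{:,2},W_{:,3}$; but the $k = 2$ columns of $L$ with both coordinates zero are among these three and have zero first coordinate, while only $W_{:,1}$ does — a contradiction; type $3$ is symmetric via the $B$-side. In type $1$ we have $k = 1$ and $k_1 = k_2 = 2$, so $L$ has exactly three columns with zero second coordinate (namely $W_{:,1},W_{:,2},W_{:,3}$) and exactly three with zero first coordinate (namely $W_{:,1},W_{:,4},W_{:,5}$); the unique column of $L$ with both coordinates zero belongs to both triples, and since the columns of $W$ are distinct the only possibility is that this column is $W_{:,1}$. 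As $W_{:,1}$ has irrational entries, $L$ is not rational.

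Type $4$ ($k = 3$, $k_1 = k_2 = 1$) is the main obstacle: now both families have four columns, the coplanarity shortcut is unavailable, and the three columns of $L$ with both coordinates zero are a priori only confined to the $3$-dimensional simplex $\{x_1 = x_2 = 0\} \cap \R_+^6$. The plan is a constraint-propagation argument. Every point of $\mathcal{R}_A$ has first coordinate at least some explicit $\alpha_A > 0$, whereas the three shared columns have first coordinate $0$; so in the convex combination expressing any point of $\mathcal{R}_A$ through the four columns with zero second coordinate, the weight on the one remaining ($k_1$-)column is at least $\alpha_A$ divided by its first coordinate, which confines $\mathcal{R}_A$ to a truncated cone over the triangle spanned by the three shared columns with apex the $k_1$-column — and, by the same reasoning, confines $\mathcal{R}_B$ to a truncated cone over the \emph{same} triangle with apex the $k_2$-column. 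Slicing these cones by $\Pi_A$ and $\Pi_B$ respectively produces polygons of at most four vertices, nested between $\mathcal{R}_A, \mathcal{P}_A$ and between $\mathcal{R}_B, \mathcal{P}_B$ and built from one common triangle, and one shows using the explicit coordinates of $M'$ and $W_\eps$ that no such coupled pair exists. (When moreover $\rk(L) = 4$ the three shared columns in fact lie on the line $\ell^\ast$, the cones degenerate to triangles with two vertices on $\ell^\ast$, and the contradiction with the planar uniqueness statement is immediate; the genuinely hard case is $\rk(L) = 5$.) This step, together with the planar triangle-uniqueness of the preliminaries, is the technical core of the proof, and is the quantitative form of the upper semicontinuity of nonnegative rank mentioned in the introduction.
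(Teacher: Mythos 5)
Your reduction of types 1--3 hinges on the preliminary claim that the \emph{only} triangle nested between $\mathcal{R}_A=\conv\{r_1,r_2,r_3,q_2^\eps,q_3^\eps\}$ and $\mathcal{P}_A$ is $\triangle W_{:,1}W_{:,2}W_{:,3}$ (and its $B$-side analogue), and that claim is false. The points $q_2^\eps,q_3^\eps$ lie \emph{strictly} inside $\triangle q_1^*q_2^*q_3^*$ (their barycentric coordinates with respect to the opposite sides are of order $10^{-5}$ but positive), and $r_1,r_2,r_3$ touch that triangle only in the relative interiors of its edges. Hence one can slide the bottom vertex from $q_1^*=(2-\sqrt2,0,0)^\top$ to $(u,0,0)^\top$ with $u$ slightly larger than $2-\sqrt2$ and re-form the supporting triangle (sides through $r_1$ and $r_2$): for $u-(2-\sqrt2)$ of order $10^{-5}$ all containments remain valid, so there is a whole one-parameter family of nested triangles, not one. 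Only approximate uniqueness holds, and approximate uniqueness is useless here, because the entire point is to force the vertex to take the \emph{exact} irrational value $2-\sqrt2$. What is actually true per plane is only the one-sided bound of Lemma~\ref{lem:ineq} ($u\ge 2-\sqrt2$ in the $xy$-plane) and of Lemma~\ref{lem:ineq2} ($u\le 2-\sqrt2$ in the $xz$-plane); the paper pins down $L_{:,1}=W_{:,1}$ in type 1 precisely by coupling the two planes through the single doubly-zero column, which must serve as the $x$-axis vertex in both pictures. Your type-1 argument never uses this coupling, and your type-2/3 arguments route through the same false uniqueness statement (they can be repaired, but by the paper's direct route: two of the three relevant columns lie on the segment from $(0,0,0)^\top$ to $(1,0,0)^\top$, and no single third point of $\mathcal{P}_0$ resp.\ $\mathcal{P}_1$ covers both $r_2,r_3$ resp.\ $r_4,r_6$).

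For type 4 you give a plan, not a proof: the sentence ``one shows using the explicit coordinates of $M'$ and $W_\eps$ that no such coupled pair exists'' is exactly the hard part, and it is left entirely unexecuted. This is where the paper spends its main technical effort: Lemma~\ref{l:constraints} shows by an explicit constraint-propagation argument (a chain of numerical bounds on the entries of $L$ and $R$, culminating in a violated column-sum inequality) that \emph{no} stochastic matrix satisfying the entry-wise constraints of Figure~\ref{fig:constraints} --- in particular $W_\eps$, hence $M=\begin{pmatrix} M' & W_\eps\end{pmatrix}$ --- admits a type-4 NMF. Your cone-slicing picture also does not obviously survive the case $\rk(L)=5$: as you note, the three doubly-zero columns of $L$ need not lie in the affine span of the columns of $M$, so the planar nested-polygon machinery no longer applies to them directly, and you do not explain how the sliced quadrilaterals lead to a contradiction there; the paper avoids this by abandoning the geometric picture for type 4 and arguing purely numerically. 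So the skeleton of your argument (zero-pattern covering, affine spans, reduction to nested polygons) matches the paper, but the two load-bearing steps --- exact identification of the irrational column, and exclusion of type 4 --- are respectively based on a false lemma and missing.
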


Using this proposition we can prove Theorem~\ref{thm:matrixWithDifferentRanks}:
\begin{proof}[Proof of Theorem~\ref{thm:matrixWithDifferentRanks}]
Due to the NMF of~$M$ stated in (\ref{eq:fact}), the nonnegative rank of~$M$ is at most~$5$.
If there existed an at most 4-dimensional NMF of~$M$ then, as $k+k_{1}+k_{2} \le 4$, it would have to have type 2 or~3, but those types are excluded by items (2)~and~(3) of Proposition~\ref{prop:types}. Hence the nonnegative rank of~$M$ over~$\R$ equals~$5$.

Since $M = I \cdot M$ (where $I$ denotes the $6 \times 6$ identity matrix), the nonnegative rank of~$M$ over~$\Q$ is at most~$6$.
By Proposition~\ref{prop:types} there is no 5-dimensional NMF $M = L \cdot R$ with $L$ rational.
Hence, the nonnegative rank of~$M$ over~$\Q$ equals~$6$.
\end{proof}

\section{Proof of Proposition~\ref{prop:types}} \label{sec-proof}

It remains to prove Proposition~\ref{prop:types}.  To rule out type-4
NMFs we use constraint propagation in order to prove that the
inequalities required for type-4 NMFs are contradictory, see
Section~\ref{subsec-Profile3}.  To rule out rational NMFs of
types 1, 2, and 3, we employ geometric arguments concerning nested
polygons in the plane (see
Sections~\ref{subsec-Profile1}--\ref{subsec-Profile2b}).  These
arguments rely on a geometric interpretation of the specific NMF
$M = W \cdot \begin{pmatrix} H^{\prime} & H_\eps \end{pmatrix}$
given by Equation~\eqref{eq:fact}.  More precisely, we define a polytope
$\mathcal{P}\subseteq\mathbb{R}^3$, shown in
Figure~\ref{fig-3d-parallel-eyed}, such that each of the columns of
$M$ and $W$ can be associated with a point in~$\mathcal{P}$. The
points associated with the columns of~$M$ lie in the convex hull of
those associated with the columns of $W$
(cf.~\cite{gillis2012geometric}).

\subsection{Geometry behind the Proof of Proposition~\ref{prop:types}}\label{sec-proof-geometry}

To set up this geometric connection, observe that the matrix~$M$ is
stochastic and has rank~$4$, and hence the columns of~$M$ affinely
span a 3-dimensional affine subspace $\mathcal{V} \subseteq \R^6$.
All vectors in~$\mathcal{V}$ are pseudo-stochastic.
The set of stochastic vectors in~$\mathcal{V}$
(equivalently, the nonnegative vectors in $\mathcal{V}$) form a
3-dimensional polytope, say \mbox{$\mathcal{P}'\subseteq \mathcal{V}$}. 
Clearly we have $M_{:,i} \in \mathcal{P}'$ for each
$i\in\{1,\ldots,11\}$.

\subsubsection*{Parameterization} \label{sub:parametrization}

We will now fix a particular parameterization of~$\mathcal{V}$
and~$\mathcal{P}'$; that is, we define an injective affine function
$f: \R^3 \to \R^6$ and a polytope $\mathcal{P}\subseteq\mathbb{R}^3$
such that $f(\R^3) = \mathcal{V}$ and $f(\mathcal{P}) = \mathcal{P}'$.
Let $f: \R^3 \to \R^6$ be the function with $f(x) = C x + d$ for each $x \in \R^3$, where:
\begin{align*}
C = 
\frac{1}{11} \cdot \begin{pmatrix}
0  & 10 & 0   \\
0  &  0 & 4   \\
-1 & -2 & 1/2 \\
-1 &  0 & 5/2 \\
4  &  0 & 0   \\
-2 & -8 & -7  
\end{pmatrix} \in \Q^{6 \times 3} 
~~~\text{ and }~~
&
d = 
\frac{1}{11} \cdot \begin{pmatrix}
0 \\ 
0 \\ 
2 \\ 
1 \\ 
0 \\ 
8
\end{pmatrix} \in \Q^{6 \times 1}.
\end{align*}
Note that the map $f$ is injective.

Defining
\[
r_1 = \begin{pmatrix}3/4\\ 1/8\\ 0\end{pmatrix}\negthickspace,\ %
r_2 = \begin{pmatrix}3/4\\ 1/2\\ 0\end{pmatrix}\negthickspace,\ %
r_3 = \begin{pmatrix}3/11\\ 17/22\\ 0\end{pmatrix}\negthickspace,\ %
r_4 = \begin{pmatrix}2\\ 0\\ 1/2\end{pmatrix}\negthickspace,\ %
r_5 = \begin{pmatrix}1/2\\ 0\\ 3/4\end{pmatrix}\negthickspace,\ %
r_6 = \begin{pmatrix}1/6\\ 0\\ 7/12\end{pmatrix}\negthickspace,\ %
\]
we have $f(r_{i}) = M^{\prime}_{:, i}
$ for each $i \in \{1, 2, \ldots, 6\}$, and defining
\[
q_1^\eps = \begin{pmatrix}\frac{99}{169}    \\[\dis] 0        \\[\dis] \frac{1}{40560}    \end{pmatrix}\negthickspace,\ %
q_2^\eps = \begin{pmatrix}\frac{121}{534}   \\[\dis] \frac{133}{150} \\[\dis] 0           \end{pmatrix}\negthickspace,\ %
q_3^\eps = \begin{pmatrix}\frac{9337}{9338} \\[\dis] \frac{64}{203}  \\[\dis] 0           \end{pmatrix}\negthickspace,\ %
q_4^\eps = \begin{pmatrix}\frac{1}{42216}   \\[\dis] 0        \\[\dis] \frac{17209}{21108}\end{pmatrix}\negthickspace,\ %
q_5^\eps = \begin{pmatrix}\frac{813}{385}   \\[\dis] 0        \\[\dis] \frac{997}{1848}   \end{pmatrix}\negthickspace,\ %
\]
we have $f(q_i^\eps) = (W_\eps)_{:, i}$ for each $i \in \{1, 2, \ldots, 5\}$.
Thus, all columns of~$M$ lie in the image of~$f$.
It follows that $f(\R^3) = \mathcal{V}$.

Let $\mathcal{P}$ be the $3$-dimensional polytope defined by $\{\,x \in \R^3 \mid f(x) \ge 0\,\}$.
Then $f(\mathcal{P}) = \mathcal{P}'$.
Furthermore, $r_i \in \mathcal{P}$, as $f(r_i) = M'_{:,i} \in \mathcal{P}'$ for all $i \in \{1, 2, \ldots, 6\}$.
Likewise we 
have $q_i^\eps \in \mathcal{P}$, as $f(q_i^\eps) = (W_\eps)_{:, i} \in \mathcal{P}'$ for all $i \in \{1, 2, \ldots, 5\}$.

\begin{figure}
\begin{center}
\objectThreeD{32}{40}{-5}{2.2}\hspace{-13mm}
\objectThreeD{32}{40}{5}{2.2}
\end{center}
\caption{
The two images show orthogonal projections of the 3-dimensional  polytope~$\mathcal{P}$. The black dots indicate 6 interior points: 3 points ($r_1$, $r_2$, $r_3$) on the brown $xy$-face, and 3 points ($r_4$, $r_5$, $r_6$) on the blue $xz$-face.
(The images form a stereo pair
 intended for ``parallel-eye'' watching:
 to see the polytope in~3D,
 look at the left and right projections
 with your left and right eyes respectively
 at the same time,
 as described, e.g., in~\cite{DSimanekParallelEye}.)
}
\label{fig-3d-parallel-eyed}
\end{figure} 
Figure~\ref{fig-3d-parallel-eyed} visualizes~$\mathcal{P}$, which has 6 faces
corresponding to the inequalities of the system $C x + d \ge 0$.
In more detail, $\mathcal{P}$ is the intersection of the following half-spaces:
$y \ge 0$ (blue), 
$z \ge 0$ (brown),
$-\frac{1}{2} x - y + \frac{1}{4} z + 1 \ge 0$ (green),
$-x + \frac{5}{2} z + 1 \ge 0$ (yellow),
$x \ge 0$ (pink),
$-\frac{1}{4} x - y - \frac{7}{8} z + 1 \ge 0$ (transparent front).
The figure also shows the position of the points $r_1, \ldots, r_6$ (black dots).%
\footnote{%
    In~\cite{16CKMSW-ICALP} the authors of this article used the same
    polytope~$\mathcal{P}$ and the same points $r_1, \ldots, r_6$ (see
    Figure~\ref{fig-3d-parallel-eyed}) to prove a related result about the
    \emph{restricted} nonnegative rank.%
}

In fact, the columns of~$W$ are also in~$\mathcal{P}' \subseteq \mathcal{V}$.
Indeed, defining
\[
q_1^* = 
\begin{pmatrix}
2 - \sqrt{2} \\[\dis] 0 \\[\dis] 0
\end{pmatrix}\negthickspace,\ %
q_2^* = 
\begin{pmatrix}
\frac{3 - \sqrt{2}}{7} \\[\dis] \frac{11 + \sqrt{2}}{14} \\[\dis] 0
\end{pmatrix}\negthickspace,\ %
q_3^* = 
\begin{pmatrix} 
1 \\[\dis] \frac{3 + \sqrt{2}}{14} \\[\dis] 0
\end{pmatrix}\negthickspace,\ %
q_4^* = 
\begin{pmatrix}
0 \\[\dis] 0 \\[\dis] \frac{10 + \sqrt{2}}{14}
\end{pmatrix}\negthickspace,\ %
q_5^* = 
\begin{pmatrix}
\frac{26 + 7 \sqrt{2}}{17} \\[\dis] 0 \\[\dis] \frac{12 - 2 \sqrt{2}}{17}
\end{pmatrix}\negthickspace,\ %
\]
we have $f(q_i^*) = W_{:,i} \in \mathcal{P}'$, hence $q_i^* \in \mathcal{P}$ for each $i \in \{1, 2, \ldots, 5\}$.
That is, in our NMF $M = W \cdot \begin{pmatrix} H^{\prime} & H_\eps \end{pmatrix}$, the columns of $M$ and the columns of~$W$ span the same vector space.
Such NMFs are called \emph{restricted} in~\cite{gillis2012geometric} and~\cite{16CKMSW-ICALP}.
Applying the inverse of the map~$f$ column-wise to the NMFs $M' = W \cdot H'$ and $W_\eps = W \cdot H_\eps$, we obtain
\begin{equation} \label{eq:inverseEqOfF}
\begin{aligned}
\begin{pmatrix}
r_1 & r_2 & r_3 & r_4 & r_5 & r_6
\end{pmatrix}
&=
\begin{pmatrix}
q_1^* & q_2^* & q_3^* & q_4^* & q_5^*
\end{pmatrix}
\cdot H'
\qquad \text{and} \\
\begin{pmatrix}
q_1^\eps & q_2^\eps & q_3^\eps & q_4^\eps & q_5^\eps
\end{pmatrix}
&=
\begin{pmatrix}
q_1^* & q_2^* & q_3^* & q_4^* & q_5^*
\end{pmatrix}
\cdot H_\eps\,,
\end{aligned}
\end{equation}
respectively.
Recall that the matrix $H'$ is stochastic;
hence \eqref{eq:inverseEqOfF} implies that the points $r_i$ and~$q_i^\eps$ are contained in the convex hull of the points~$q_i^*$.
In Figure~\ref{fig-3d-parallel-eyed}, points
$q_1^*, q_2^*, q_3^*$ are the vertices of the triangle on the brown
$x y$-face, while points $q_1^*, q_4^*, q_5^*$ are the vertices of the
triangle on the blue $x z$-face. The former triangle contains
$r_1, r_2, r_3$, while the latter triangle contains $r_4, r_5,
r_6$. Points $q_1^\eps, \ldots, q_5^\eps$ (not shown in
Figure~\ref{fig-3d-parallel-eyed}) are close to
$q_1^*, \ldots, q_5^*$, with $q_2^\eps, q_3^\eps$ lying in the
interior of the triangle on the $x y$-face and
$q_1^\eps, q_4^\eps, q_5^\eps$ lying in the interior of the triangle
on the $x z$-face.

It is important to note that when we exclude certain NMFs $M = L \cdot R$ in Sections \ref{subsec-Profile1}--\ref{subsec-Profile3}, we cannot a priori assume that the columns of~$L$
are in~$\mathcal{V}$.

\subsubsection*{Nested Polygons}
In this subsection, we focus on the two faces of polytope $\mathcal{P}$ that contain the interior points $r_1, r_2 , r_3$ and $r_4, r_5, r_6$, respectively called $\mathcal{P}_0$ and $\mathcal{P}_1$.

Let us write $\mathcal{V}_0\subseteq\mathbb{R}^6$ for the affine span of
$M_{:, 1}, M_{:, 2}, M_{:, 3}$.  We can also characterize $\mathcal{V}_0$ as the image
of the $xy$-plane in $\mathbb{R}^3$ under the map
$f:\mathbb{R}^3\rightarrow\mathbb{R}^6$.  Indeed, we have $f(r_1)=M_{:, 1}$,
$f(r_2)=M_{:, 2}$, and $f(r_3)=M_{:, 3}$.  Thus the image of the $xy$-plane
under $f$ is a two-dimensional affine space that includes
$\mathcal{V}_0$ and hence is equal to $\mathcal{V}_0$.  Define the
polygon $\mathcal{P}_0 \subseteq \mathbb{R}^3$ by
$\mathcal{P}_0 = \{ (x,y,0)^\top : (x,y,0)^\top \in \mathcal{P} \}$.  Then $f$
restricts to a bijection between $\mathcal{P}_0$ and the set of
nonnegative vectors in $\mathcal{V}_0$.  
We have the following lemma:

\begin{figure*}[t]
\begin{center}
\begin{tikzpicture}[
scale=7,
dot/.style={circle,fill=black,minimum size=4pt,inner sep=0pt,outer sep=-1pt}
]
\draw[->, >=stealth',>=angle 60] (0,0) -- (1.05,0) node[right] {$x$};
\draw[->, >=stealth',>=angle 60] (0,0) -- (0,1.05) node[above] {$y$};
\foreach \x in {0.25,0.5,0.75,1.0}
    \draw[shift={(\x,0)}] (0pt,0pt) -- (0pt,-0.4pt) node[below] {$\x$};
\foreach \x in {0.25,0.5,0.75,1.0}
    \draw[shift={(0,\x)}] (0,0) -- (-0.4pt,0) node[left] {$\x$};
\node at (-0.04,-0.04) {$0$};
    
\draw[thick,fill=brown!95!orange!50] (0,0) coordinate (O1) -- (1,0) coordinate (O2) -- (1,1/2) coordinate (O3) -- (0,1) coordinate (O4) -- (O1);
\node[label={[label distance=-0.5mm,xshift=0mm]0:$(1,\frac12,0)^\top$}] at (O3) {};
\coordinate (I1) at (3/4,1/8);
\coordinate (I2) at (3/4,1/2);
\coordinate (I3) at (3/11,17/22);
\node[dot,label={[label distance=0mm]90:$r_1$}] at (I1) {};
\node[dot,label={[label distance=1mm]-90:$r_2$}] at (I2) {};
\node[dot,label={[label distance=0mm]180:$r_3$}] at (I3) {};
\coordinate (M1) at ({2-sqrt(2)}, 0);
\coordinate (M2) at  (intersection cs:
       first line={(M1)--(I1)},
       second line={(O2)--(O3)});
\coordinate (M3) at  (intersection cs:
       first line={(M2)--(I2)},
       second line={(O3)--(O4)});
\node[dot,label={[label distance=-0.5mm, xshift=0mm]-90:$q_1^*$}] at (M1) {};
\node[dot,label={[label distance=0mm]0:$q_3^*$}] at (M2) {};
\node[dot,label={90:$q_2^*$}] at (M3) {};
\draw (M1) -- (M2) -- (M3) -- (M1);

\coordinate (M1+) at (0.59, 0);
\coordinate (M2+) at  (intersection cs:
       first line={(M1+)--(I1)},
       second line={(O2)--(O3)});
\coordinate (M3+) at  (intersection cs:
       first line={(M2+)--(I2)},
       second line={(O3)--(O4)});
\coordinate (M4+) at  (intersection cs:
       first line={(M3+)--(I3)},
       second line={(O1)--(O2)});

\coordinate (M1-) at (0.58, 0);
\coordinate (M2-) at  (intersection cs:
       first line={(M1-)--(I1)},
       second line={(O2)--(O3)});
\coordinate (M3-) at  (intersection cs:
       first line={(M2-)--(I2)},
       second line={(O3)--(O4)});
\coordinate (M4-) at  (intersection cs:
       first line={(M3-)--(I3)},
       second line={(O1)--(O2)});

\coordinate (q1) at (1/8,0);
\coordinate (q2) at (intersection cs:
       first line={(O1)--(I1)},
       second line={(O2)--(O3)});
\coordinate (q3) at (intersection cs:
       first line={(q2)--(I2)},
       second line={(O3)--(O4)});
\coordinate (q4) at (intersection cs:
       first line={(q3)--(I3)},
       second line={(O1)--(O4)});

\draw[dashed, red] (q1) -- (q2) -- (q3) -- (q4) -- (q1);
\node[dot,label={[label distance=0mm]270:$q_1$}] at (q1) {};
\node[dot,label={[label distance=0mm]0:$q_3$}] at (q2) {};
\node[dot,label={[label distance=0mm]90:$q_2$}] at (q3) {};
\node[dot,label={[label distance=0mm]180:$q_4$}] at (q4) {};


\coordinate (M1spy) at (0.602,0.027);
\end{tikzpicture}
\end{center}
\caption{The outer polygon is $\mathcal{P}_0$ (after identifying the
  $xy$-plane in $\R^3$ with $\R^2$).
  The triangle with solid boundary is the supporting polygon~$\mathcal{S}_{q_1^*}$, where $q_1^* = (2-\sqrt{2},0,0)^\top$.
  The  quadrilateral with dashed boundary is the supporting polygon $\mathcal{S}_{q_1}$ for $q_1=(\frac{1}{8},0,0)^\top$.  }
\label{fig-concave-display}
\end{figure*}

\begin{lemma}
  Let $\mathcal{R}_0 \subseteq \mathcal{P}_0$ be the polygon with
  vertices $r_1, r_2, r_3$ (see Figure~\ref{fig-concave-display}).
  Write $q_1=(u,0,0)^\top$, where $0 \leq u \leq 1$.  If the
  supporting polygon $\mathcal{S}_{q_1}$ nested between $\mathcal{R}_0$
  and $\mathcal{P}_0$ has three vertices, then $u \geq 2-\sqrt{2}$.
\label{lem:ineq}
\end{lemma}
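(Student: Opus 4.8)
The plan is to prove the contrapositive: assuming $\mathcal{S}_{q_1}$ has three vertices, I will derive a contradiction from $u < 2-\sqrt2$, which gives $u \ge 2-\sqrt2$. Throughout I identify the $xy$-plane of $\R^3$ with $\R^2$, so $q_1 = (u,0)$, $\mathcal{R}_0 = \conv\{r_1,r_2,r_3\}$ with $r_1=(\tfrac34,\tfrac18)$, $r_2=(\tfrac34,\tfrac12)$, $r_3=(\tfrac{3}{11},\tfrac{17}{22})$, and $\mathcal{P}_0$ is the quadrilateral with vertices $(0,0)$, $(1,0)$, $(1,\tfrac12)$, $(0,1)$; in particular $q_1$ lies on the bottom edge $y=0$ while $\mathcal{R}_0$ lies strictly above it.

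First I would pin down the combinatorics of $\mathcal{S}_{q_1}$. Write its vertices in anti-clockwise order as $v_1=q_1$, $v_2$, $v_3$, so that $q_1 v_2$ and $v_2 v_3$ are supporting. Since $q_1$ lies below $\mathcal{R}_0$, the supporting segment out of $q_1$ is the ray from $q_1$ of minimal polar angle that meets $\mathcal{R}_0$; comparing the directions of $r_1-q_1$, $r_2-q_1$, $r_3-q_1$ (which are in anti-clockwise order for $u<\tfrac34$, as one checks from the slope of the line $r_2 r_3$) shows this ray passes through $r_1$. Under the assumption $u<2-\sqrt2$ (hence $u<\tfrac23$), this ray leaves $\mathcal{P}_0$ on the right edge $x=1$, giving $v_2=(1,b)$ with $b=\tfrac{1-u}{6-8u}$; here $b<\tfrac13$. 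A similar angular check at $v_2$ shows the next supporting segment $v_2 v_3$ passes through $r_2$ and leaves $\mathcal{P}_0$ on the edge $O_3 O_4$, whose line is $x+2y=2$. Intersecting the line through $v_2$ and $r_2$ with $x+2y=2$ then gives, after simplification,
\[
v_3 = \left(\frac{2-6b}{\,3-8b\,},\ \frac{2-5b}{\,3-8b\,}\right).
\]

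Next I would convert ``the triangle $q_1 v_2 v_3$ contains $\mathcal{R}_0$'' into an inequality. Since $r_1$ lies on $q_1 v_2$ and $r_2$ on $v_2 v_3$, and both of these segments are supporting (so all of $\mathcal{R}_0$, in particular $r_3$, lies weakly to their left), the triangle contains $\mathcal{R}_0$ if and only if $r_3$ lies weakly to the left of the directed segment $v_3 q_1$ — equivalently, by the determinant fact recalled before Lemma~\ref{lem:ineq} (rows in the order $v_3,q_1,r_3$),
\[
\Delta := \begin{vmatrix} \tfrac{2-6b}{3-8b} & \tfrac{2-5b}{3-8b} & 1 \\[1mm] u & 0 & 1 \\[1mm] \tfrac{3}{11} & \tfrac{17}{22} & 1 \end{vmatrix} \ \ge\ 0.
\]
Expanding along the second row, substituting $b=\tfrac{1-u}{6-8u}$ and clearing denominators, I expect the computation to collapse to
\[
\Delta \;=\; \frac{-15\,(u^2-4u+2)}{22\,(5-8u)}.
\]
In the regime $u<2-\sqrt2<\tfrac58$ the denominator is positive, so $\Delta\ge 0$ would force $u^2-4u+2\le 0$, i.e.\ $u\ge 2-\sqrt2$, contradicting the assumption. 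Hence $\Delta<0$, the triangle $q_1 v_2 v_3$ does not contain $r_3$, so $\mathcal{S}_{q_1}$ cannot be nested between $\mathcal{R}_0$ and $\mathcal{P}_0$ with only three vertices — a contradiction, proving $u\ge 2-\sqrt2$.

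I expect the computation of $\Delta$, and in particular getting the factor $u^2-4u+2$ to fall out, to be the routine but bookkeeping-heavy core of the argument; the step most prone to error is the combinatorial setup — verifying that the two supporting edges touch $\mathcal{R}_0$ at $r_1$ and $r_2$ and that $v_2,v_3$ lie on the edges $x=1$ and $x+2y=2$ — but restricting to $u<\tfrac23$ (which is harmless, since any $u\ge\tfrac23$ already satisfies the conclusion) keeps this under control.
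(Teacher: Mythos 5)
Your proposal is correct and follows essentially the same route as the paper: identify the two supporting edges of $\mathcal{S}_{q_1}$ as passing through $r_1$ and $r_2$ and ending on the right and upper edges of $\mathcal{P}_0$, then impose that $r_3$ lies weakly to the left of the closing edge back to $q_1$, which reduces to the same expression $\frac{-15(u^2-4u+2)}{22(5-8u)}\ge 0$ that the paper obtains (the paper keeps the parameters $v,w$ and eliminates them, while you solve for the intersection points explicitly — a purely cosmetic difference, as is your contrapositive framing versus the paper's showing $u\le 2-\sqrt2$ forces $u=2-\sqrt2$).
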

\begin{proof}
  Assume that $\mathcal{S}_{q_1}$ has three vertices and $0 \le u \le 2 - \sqrt{2}$.
  It suffices to show that these assumptions imply $u = 2 - \sqrt{2}$.
  Moving anti-clockwise, let the
  vertices of $\mathcal{S}_{q_1}$ be $q_1$, $q_3$, and $q_2$.
  It follows by
  elementary geometry that: (i)~the line segment $q_1 q_3$ passes
  through~$r_1$ and $q_3$ lies on the right edge of $\mathcal{P}_0$,
  and (ii)~the line segment $q_3 q_2$ passes through~$r_2$ and $q_2$
  lies on the upper edge of $\mathcal{P}_0$.
  Figure~\ref{fig-concave-display} shows the situations
  $u=2-\sqrt{2}$ and $u=\frac{1}{8}$.
  
Writing $q_3=(1,\frac{v}{2},0)^\top$ and
  $q_2=(1-w,\frac{1}{2}+\frac{w}{2},0)^\top$, where $0\leq v,w \leq 1$, the
  collinearity conditions (i) and (ii) entail (see
  Section~\ref{sec:NPP}):
\begin{alignat}{2}
 \begin{vmatrix}
 u & 0 & 1\\[\dis]
 1 & \textstyle\frac{v}{2} & 1\\[\dis]
 \textstyle\frac{3}{4} & \frac{1}{8} & 1 
\end{vmatrix} 
&= \frac12 u v - \frac18 u - \frac38 v + \frac18 &&= 0 \qquad \text{and} \label{eq-colinear1} \\
\begin{vmatrix}
 1 & \textstyle\frac{v}{2} & 1\\[\dis]
 1-w & \textstyle\frac{1}{2}+\frac{w}{2} & 1\\[\dis]
 \textstyle\frac{3}{4} & \frac{1}{2} & 1 
\end{vmatrix} 
&= \frac12 v w - \frac18 v - \frac38 w + \frac18 &&=
0\,. \label{eq-colinear2} \intertext{%
  The assumption that $\mathcal{S}_{q_1}$ is the triangle $\triangle q_1 q_3 q_2$ entails that vertices $q_2,q_1,r_3$ are in
  anti-clockwise order.  This implies:}
\begin{vmatrix}
  1-w & \textstyle\frac{1}{2}+\frac{w}{2} & 1\\[\dis]
 u & 0 & 1 \\[\dis]
 \textstyle\frac{3}{11} & \frac{17}{22} & 1 
\end{vmatrix}
&= -\frac12 w u + \frac{10}{11} w + \frac{3}{11} u - \frac{7}{11} &&\geq 0. \label{eq-colinear3}
\intertext{%
We use \eqref{eq-colinear1} and~\eqref{eq-colinear2} to eliminate variables $v,w$ from the inequality~\eqref{eq-colinear3}, obtaining:
}
&\frac{15}{22 (8 u - 5)} \cdot (u^2 - 4 u + 2) && \ge 0. \nonumber
\end{alignat}
The only solution with $0\leq u \le 2-\sqrt{2}$ is $u = 2-\sqrt{2}$.
\end{proof}

Let us write $\mathcal{V}_1\subseteq\mathbb{R}^6$ for the affine span
of $M_{:, 4}, M_{:, 5}, M_{:, 6}$. 
We can also characterize
$\mathcal{V}_1$ as the image of the $xz$-plane in $\mathbb{R}^3$ under
the map $f:\mathbb{R}^3\rightarrow\mathbb{R}^6$.  Indeed, we have $f(r_4)=M_{:, 4}$,
$f(r_5)=M_{:, 5}$, and $f(r_6)=M_{:, 6}$.  Thus the image of the
$xz$-plane under $f$ is a two-dimensional affine space that includes
$\mathcal{V}_1$ and hence is equal to $\mathcal{V}_1$.  Define
the polygon $\mathcal{P}_1 \subseteq \mathbb{R}^3$ by
$\mathcal{P}_1 = \{ (x,0,z)^\top : (x,0,z)^\top \in \mathcal{P} \}$.  Then $f$
restricts to a bijection between $\mathcal{P}_1$ and the set of
nonnegative vectors in $\mathcal{V}_1$.
We have the following lemma:

\begin{figure*}[t]
\begin{center}
\begin{tikzpicture}[
scale=5,
dot/.style={circle,fill=black,minimum size=4pt,inner sep=0pt,outer sep=-1pt}]
\draw[->,>=stealth',>=angle 60] (0,0) -- (2.4,0) node[right] {$x$};
\draw[->,>=stealth',>=angle 60] (0,0) -- (0,1.35) node[above] {$z$};
\foreach \x in {0.25,0.5,0.75,1.0,1.25,1.5,1.75,2.0,2.25}
    \draw[shift={(\x,0)}] (0pt,0pt) -- (0pt,-0.4pt) node[below] {$\x$};

\foreach \x in {0.25,0.5,0.75,1.0,1.25}
    \draw[shift={(0,\x)}] (0,0) -- (-0.4pt,0) node[left] {$\x$};
\node at (-0.06,-0.06) {$0$};
\draw[thick, fill=blue!60!orange!35] (0,0) coordinate (O1) -- (1,0) coordinate (O2) -- (9/4,1/2) coordinate (O5) -- (0,8/7) coordinate (O6) -- (O1);
\node[label={[label distance=-0.5mm,xshift=0mm]0:$(\frac94,0,\frac12)^\top$}] at (O5) {};
\node[label={[xshift=-1mm,yshift=2mm]0:$(0,0,\frac87)^\top$}] at (O6) {};

\coordinate (I4) at (2,1/2);
\coordinate (I5) at (1/2,3/4);
\coordinate (I6) at (1/6,7/12);
\node[dot,label={[label distance=0mm]-90:$r_4$}] at (I4) {};
\node[dot,label={[label distance=0mm]90:$r_5$}] at (I5) {};
\node[dot,label={[label distance=0mm]0:$r_6$}] at (I6) {};
\coordinate (M1) at ({2-sqrt(2)}, 0);
\coordinate (M4) at  (intersection cs:
       first line={(M1)--(I4)},
       second line={(O5)--(O6)});
\coordinate (M5) at  (intersection cs:
       first line={(M4)--(I5)},
       second line={(O1)--(O6)});
\draw (M1) -- (M4) -- (M5) -- (M1);

\coordinate (M1+) at (7/8, 0);
\coordinate (M4+) at  (intersection cs:
       first line={(M1+)--(I4)},
       second line={(O5)--(O6)});
\coordinate (M5+) at  (intersection cs:
       first line={(M4+)--(I5)},
       second line={(O1)--(O6)});
\coordinate (M6+) at  (intersection cs:
       first line={(M5+)--(I6)},
       second line={(O1)--(O2)});

\draw[dashed, red] (M1+) -- (M6+) -- (M5+) -- (M4+) -- (M1+);

\coordinate (M1-) at (0., 0);
\coordinate (M4-) at  (intersection cs:
       first line={(M1-)--(I4)},
       second line={(O5)--(O6)});
\coordinate (M5-) at  (intersection cs:
       first line={(M4-)--(I5)},
       second line={(O1)--(O6)});
\coordinate (M6-) at  (intersection cs:
       first line={(M5-)--(I6)},
       second line={(O1)--(O2)});
       
\node[dot,label=90:$q_1^*$] at (M1) {};
\node[dot,label=90:$q_1$] at (M1+) {};
\node[dot,label={[label distance=0mm]90:$q_5^*$}] at (M4) {};
\node[dot,label={[label distance=0mm,yshift=1mm]180:$q_4^*$}] at (M5) {};


\end{tikzpicture}
\end{center}
\caption{The outer polygon is $\mathcal{P}_1$ (after identifying the
  $xz$-plane in $\mathbb{R}^3$ with $\mathbb{R}^2$).
  The triangle with solid
  boundary is the supporting polygon $\mathcal{S}_{q_1^*}$, where
  $q_1^* = (2-\sqrt{2},0,0)^\top$.
  The quadrilateral with dashed boundary is the supporting polygon $\mathcal{S}_{q_1}$ for
  $q_1=(\frac{7}{8},0,0)^\top$.}
\label{xz-plane}
\end{figure*}  

\begin{lemma}
  Let $\mathcal{R}_1 \subseteq \mathcal{P}_1$ be the polygon with
  vertices $r_4, r_5, r_6$ (see Figure~\ref{xz-plane}).  Write
  $q_1=(u,0,0)^\top$, where $0\leq u \leq 1$.  If the supporting
  polygon $\mathcal{S}_{q_1}$ nested between $\mathcal{R}_1$ and
  $\mathcal{P}_1$ has three vertices, then $u \leq 2-\sqrt{2}$.
\label{lem:ineq2}
\end{lemma}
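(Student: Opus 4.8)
The plan is to run the same argument as in the proof of Lemma~\ref{lem:ineq}, now in the $xz$-plane. Suppose $\mathcal{S}_{q_1}$ has three vertices. Since $\mathcal{S}_{q_1}$ is a polygon nested between $\mathcal{R}_1$ and $\mathcal{P}_1$, this makes it a triangle $\mathcal{T}$ with $\mathcal{R}_1 \subseteq \mathcal{T} \subseteq \mathcal{P}_1$; write its vertices in anti-clockwise order as $q_1, q_5, q_4$, chosen so that at $u = 2-\sqrt2$ they specialize to the points $q_1^*, q_5^*, q_4^*$ of Figure~\ref{xz-plane}. The first step is to fix the combinatorial type of this supporting triangle by elementary geometry: the supporting edge $q_1 q_5$ passes through the vertex $r_4 = (2,\tfrac{1}{2})$ of $\mathcal{R}_1$ and leaves $\mathcal{P}_1$ through its upper edge, the segment from $(\tfrac{9}{4},\tfrac{1}{2})$ to $(0,\tfrac{8}{7})$ lying on the line $2x + 7z = 8$; the supporting edge $q_5 q_4$ passes through $r_5 = (\tfrac{1}{2},\tfrac{3}{4})$ and leaves $\mathcal{P}_1$ through its left edge, the segment of the line $x = 0$ from $(0,0)$ to $(0,\tfrac{8}{7})$.

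Granting this, the rest is a direct computation. Intersecting the line through $q_1 = (u,0)$ and $r_4$ with $2x+7z=8$ yields
\[
q_5 = \left(\frac{32-9u}{15-4u},\ \frac{2(4-u)}{15-4u}\right),
\]
and intersecting the line through $q_5$ and $r_5$ with $x=0$ yields
\[
q_4 = \left(0,\ \frac{80-23u}{14(7-2u)}\right);
\]
as a sanity check, at $u = 2-\sqrt2$ these reduce to $q_5^* = (\tfrac{26+7\sqrt2}{17},\tfrac{12-2\sqrt2}{17})$ and $q_4^* = (0,\tfrac{10+\sqrt2}{14})$, the coordinates labelled in Figure~\ref{xz-plane}. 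Because $\mathcal{T}$ is a triangle, its closing edge $q_4 q_1$ (which need not be supporting) must still keep $\mathcal{R}_1$ on its left, and since $r_4, r_5$ already lie on the other two edges this amounts to keeping the remaining vertex $r_6 = (\tfrac{1}{6},\tfrac{7}{12})$ on its left; that is, the list $q_4, q_1, r_6$ describes an anti-clockwise triangle, so (by the determinant criterion from Section~\ref{sec:NPP}) the $3\times3$ determinant with rows $q_4, q_1, r_6$ is $\ge 0$. Substituting the two formulas above and simplifying, I expect this determinant to collapse to $\dfrac{10\,(u^2-4u+2)}{21\,(7-2u)}$, so the condition reads
\[
\frac{10\,(u^2-4u+2)}{21\,(7-2u)} \ \ge\ 0.
\]
As $0 \le u \le 1$ gives $7-2u > 0$, this forces $u^2 - 4u + 2 \ge 0$; the roots are $2 \pm \sqrt2$, so on $[0,1]$ the inequality holds exactly when $u \le 2-\sqrt2$, which is the claim.

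The only step that is not purely mechanical is the first one: verifying that the two supporting edges touch $r_4$ and $r_5$ (and not some other vertices of $\mathcal{R}_1$) and leave $\mathcal{P}_1$ through the claimed edges, for every $u \in [0,1]$ with $\mathcal{S}_{q_1}$ a triangle. This is the analogue of the ``elementary geometry'' step of Lemma~\ref{lem:ineq}; I would justify it by tracking the directions $q_1 \to r_4$ and $q_5 \to r_5$ and checking that $\mathcal{R}_1$ stays to their left, or by invoking the monotonicity of the supporting-polygon construction of~\cite{DBLP:journals/iandc/YapABO89} recalled in Section~\ref{sec:NPP}. That the quadratic factor comes out as $u^2-4u+2$, with the relevant root $2-\sqrt2$, is forced by the geometry: $u = 2-\sqrt2$ is precisely the value at which the supporting triangle $\triangle q_1^* q_5^* q_4^*$ of Figure~\ref{xz-plane} degenerates and $\mathcal{S}_{q_1}$ gains a fourth vertex.
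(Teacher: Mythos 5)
Your proposal is correct and follows essentially the same route as the paper: fix the combinatorial shape of the supporting triangle (first edge through $r_4$ exiting via the upper edge $2x+7z=8$, second edge through $r_5$ exiting via the left edge $x=0$), then impose that $q_4, q_1, r_6$ are anti-clockwise, arriving at the same inequality $\tfrac{10(u^2-4u+2)}{21(7-2u)} \ge 0$ that the paper obtains by parameterizing $q_5, q_4$ with auxiliary variables $v,w$ and eliminating. The only differences are bookkeeping (explicit formulas for $q_5,q_4$ in terms of $u$ instead of elimination, and arguing over all of $[0,1]$ rather than the paper's restriction to $[2-\sqrt2,1]$), and your flagged "elementary geometry" step is left at the same level of detail as in the paper's own proof.
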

\begin{proof}
  Assume that $\mathcal{S}_{q_1}$ has three vertices and $2 - \sqrt{2} \le u \le 1$.
  It suffices to show that these assumptions imply $u = 2 - \sqrt{2}$.
  Moving anti-clockwise, let the vertices
  of $\mathcal{S}_{q_1}$ be $q_1$, $q_5$, and $q_4$.  
  It follows by elementary
  geometry that: (i)~the line segment $q_1 q_5$ passes through $r_4$
  and $q_5$ lies on the upper edge of $\mathcal{P}_1$, and (ii)~the
  line segment $q_5 q_4$ passes through $r_5$ and $q_4$ lies on the
  left edge of $\mathcal{P}_1$.
  Figure~\ref{xz-plane} shows the situations $u=2-\sqrt{2}$ and $u=\frac78$.
  
Writing $q_5=(\frac{9 - 9 v}{4},0,\frac{7 + 9 v}{14})^\top$ and
  $q_4=(0,0,\frac{8-8w}{7})^\top$, where $0\leq v,w \leq 1$, the
  collinearity conditions (i) and (ii) entail (see Section~\ref{sec:NPP}):
\begin{alignat}{2}
 \begin{vmatrix}
 u & 0 & 1\\[\dis]
 \textstyle\frac{9 - 9 v}{4} & \frac{7 + 9 v}{14} & 1\\[\dis]
 2 & \textstyle\frac{1}{2} & 1 
\end{vmatrix} 
&= \frac{9}{14} u v - \frac{135}{56} v + \frac18 &&= 0 \qquad \text{and} \label{eq-colinear4} \\
\begin{vmatrix}
\textstyle\frac{9 - 9 v}{4} & \frac{7 + 9 v}{14} & 1\\[\dis]
 0 & \frac{8-8w}{7} & 1 \\[\dis]
 \textstyle\frac{1}{2} & \frac{3}{4} & 1 
\end{vmatrix} 
&= \frac{18}{7} v w - \frac{9}{16} v - 2 w + \frac{9}{16} &&=
0 \label{eq-colinear5} \intertext{%
  The assumption that $\mathcal{S}_{q_1}$ is the triangle $\triangle q_1 q_5 q_4$
  entails that vertices $q_4,q_1,r_6$ are in
  anti-clockwise order.  This implies:}
\begin{vmatrix}
0 & \frac{8-8w}{7} & 1 \\[\dis]
 u & 0 & 1 \\[\dis]
 \textstyle\frac{1}{6} & \frac{7}{12} & 1 
\end{vmatrix} 
&= \frac87 w u - \frac{4}{21} w - \frac{47}{84} u + \frac{4}{21} &&\geq 0 \label{eq-colinear6}
\intertext{%
We use \eqref{eq-colinear4}~and~\eqref{eq-colinear5} to eliminate variables
$v,w$ from the inequality~\eqref{eq-colinear6}, obtaining:
}
&\frac{-10}{21 (2 u - 7)} \cdot (u^2 - 4 u + 2) &&\ge 0 \nonumber
\end{alignat}
The only solution with $2-\sqrt{2} \le u \le 1$ is $u = 2-\sqrt{2}$.
\end{proof}

\subsection{Type 1}\label{subsec-Profile1}

In this section we prove Proposition~\ref{prop:types}~(1),
implying that any type-1 NMF of~$M$ requires irrational numbers
(our argument will, in fact, only depend on the matrix~$M'$ and not
on $W_\eps$).
Consider a type-1 NMF $M = L\cdot R$, i.e., such that $k=1$ and $k_1=k_2=2$.
After a suitable permutation of its columns, $L$ matches the pattern
\[ L=
\begin{pmatrix}
0 & + & + & 0 & 0 \\
0 & 0 & 0 & + & + \\
\cdot & \cdot  & \cdot  & \cdot  & \cdot  \\
\cdot & \cdot  & \cdot  & \cdot  & \cdot  \\
\cdot & \cdot  & \cdot  & \cdot  & \cdot \\
\cdot & \cdot  & \cdot  & \cdot  & \cdot  
\end{pmatrix},
\]
where $+$ denotes any strictly positive number.
It follows from the zero pattern of~$M$ that $M_{:,1}, M_{:, 2}, M_{:, 3}$ all lie in the convex hull
of $L_{:, 1}, L_{:, 2}, L_{:, 3}$, and
$M_{:, 4}, M_{:, 5}, M_{:, 6}$ all lie in the convex hull of
$L_{:, 1}, L_{:, 4}, L_{:, 5}$.
Equivalently, there exist stochastic matrices $R_0, R_1 \in \R_+^{3 \times 3}$ such that
\begin{align}
\begin{pmatrix}
M_{:, 1} & M_{:, 2} & M_{:, 3}
\end{pmatrix}
&=
\begin{pmatrix}
L_{:, 1} & L_{:, 2} & L_{:, 3}
\end{pmatrix}
\cdot
R_0
\qquad\text{and}\qquad \label{eq:type1-0} \\
\begin{pmatrix}
M_{:, 4} & M_{:, 5} & M_{:, 6}
\end{pmatrix}
&=
\begin{pmatrix}
L_{:, 1} & L_{:, 4} & L_{:, 5}
\end{pmatrix}
\cdot
R_1\,. \nonumber
\end{align}

Consider the polygon $\mathcal{P}_0 \subseteq \R^3$ and the affine space $\mathcal{V}_0 \subseteq \R^6$ from
Section~\ref{sec-proof-geometry}.
The affine span of $L_{:, 1}, L_{:, 2}, L_{:, 3}$
includes $\mathcal{V}_0$ and has dimension at most two, and hence is
equal to $\mathcal{V}_0$.  In particular,
$L_{:, 1}, L_{:, 2}, L_{:, 3}$ must all lie in $\mathcal{V}_0$.  Since
$L_{:, 1}, L_{:, 2}, L_{:, 3}$ are moreover nonnegative, there are
uniquely defined points $q_1,q_2,q_3 \in \mathcal{P}_0$ such that
$f(q_i) = L_{:, i}$ for $i \in \{1,2,3\}$.
Applying the inverse of the map~$f$ column-wise to~\eqref{eq:type1-0}, we obtain
\[
\begin{pmatrix}
r_1 & r_2 & r_3
\end{pmatrix}
=
\begin{pmatrix}
q_1 & q_2 & q_3
\end{pmatrix}
\cdot
R_0\,,
\]
so the convex hull of $q_1,q_2,q_3$ includes $r_1,r_2,r_3$.
In other words, triangle $\triangle q_1 q_2 q_3$ is nested between $\triangle r_1 r_2 r_3$ and polygon~$\mathcal{P}_0$.
Since $L_{:, 1}$ has $0$ in its first two coordinates, by inspecting the definition of the map~$f$ we see that $q_1 = (u, 0, 0)^\top$ for some $u \in \R$.
By Lemma~\ref{lem:NPP} it follows that the supporting polygon $\mathcal{S}_{q_1}$ has three vertices.
Hence Lemma~\ref{lem:ineq} implies $u \ge 2 - \sqrt{2}$.

Considering the polygon~$\mathcal{P}_1$ from Section~\ref{sec-proof-geometry}, we have $q_1 \in \mathcal{P}_1$ (recall that $f(q_1) = L_{:,1}$).
Arguing as in the case of $\mathcal{P}_0$, there are uniquely defined points $q_4,q_5 \in \mathcal{P}_1$ such that $f(q_i) = L_{:, i}$ for $i \in \{4,5\}$.
Similarly as before, triangle $\triangle q_1 q_4 q_5$ is nested between $\triangle r_4 r_5 r_6$ and $\mathcal{P}_1$.
Then Lemmas~\ref{lem:NPP} and~\ref{lem:ineq2} imply $u \le 2 - \sqrt{2}$, thus $q_1 = (2 - \sqrt{2}, 0, 0)^\top = q_1^*$.
Hence 
\[
L_{:,1} = f(q_1) = f(q_1^*) = W_{:,1}.
\]
Proposition~\ref{prop:types}~(1) follows.

We remark that this argument can be strengthened 
to show that any type-1 NMF
of~$M$ coincides with the one given in~\eqref{eq:fact}, up to a permutation
of the columns of~$W$ and the rows of~$\begin{pmatrix} H^{\prime} & H_\eps \end{pmatrix}$;
see Appendix~\ref{app:uniqueness}.

\subsection{Type 2}\label{subsec-Profile2}
In this section we exclude type-2 NMFs, i.e., we prove Proposition~\ref{prop:types}~(2). 
Towards a contradiction, suppose there is a stochastic and at most 5-dimensional NMF $M = L \cdot R$ with $k=2$ and $k_1=1$.
Without loss of generality, the first three columns of $L$ match the following pattern:
\begin{equation*}
\kbordermatrix{ 
& L_{:, 1} & L_{:, 2}& L_{:, 3} \\
& 0 & 0 & +\\
& 0 & 0 & 0\\
& \cdot & \cdot  & \cdot\\
& \cdot & \cdot  & \cdot\\
& \cdot & \cdot  & \cdot\\
& \cdot & \cdot  & \cdot  
}
\end{equation*}
and the remaining columns have a strictly positive second coordinate.
It follows from the zero pattern of~$M$ that $M_{:,1}, M_{:, 2}, M_{:, 3}$ all lie in the convex hull of $L_{:, 1}, L_{:, 2}, L_{:, 3}$.

\begin{figure*}[t]
\begin{center}
\begin{tikzpicture}[
scale=5,
dot/.style={circle,fill=black,minimum size=4pt,inner sep=0pt,outer sep=-1pt}
]
\draw[->, >=stealth',>=angle 60] (0,0) -- (1.05,0) node[right] {$x$};
\draw[->, >=stealth',>=angle 60] (0,0) -- (0,1.05) node[above] {$y$};
\foreach \x in {0.25,0.5,0.75,1.0}
    \draw[shift={(\x,0)}] (0pt,0pt) -- (0pt,-0.4pt) node[below] {$\x$};

\foreach \x in {0.25,0.5,0.75,1.0}
    \draw[shift={(0,\x)}] (0,0) -- (-0.4pt,0) node[left] {$\x$};
\draw[thick,fill=brown!95!orange!50] (0,0) coordinate (O1) -- (1,0) coordinate (O2) -- (1,1/2) coordinate (O3) -- (0,1) coordinate (O4) -- (O1);

\node at (-0.06,-0.06) {$0$};

\coordinate (I1) at (3/4,1/8);
\coordinate (I2) at (3/4,1/2);
\coordinate (I3) at (3/11,17/22);
\node[dot,label={[label distance=-0.5mm]90:$r_1$}] at (I1) {};
\node[dot,label={[label distance=0mm,xshift=-1mm]-90:$r_2$}] at (I2) {};
\node[dot,label={[label distance=0mm]180:$r_3$}] at (I3) {};
\coordinate (M1) at (0, 0);
\coordinate (M2) at (1, 0);
\coordinate (M3) at  (intersection cs:
       first line={(M1)--(I3)},
       second line={(0,1)--(1,1)});
\coordinate (M4) at (intersection cs:
       first line={(M2)--(I2)},
       second line={(0,1)--(1,0.85)});

\node[dot,label={[label distance=0mm,yshift=+2.3mm]0:$\hat{q}_1$}] at (M1) {};
\node[dot,label={[label distance=0mm,yshift=+2.3mm]180:$\hat{q}_2$}] at (M2) {};

\draw[dashed] (M1) -- (M3);
\draw[dashed] (M2) -- (M4);

\end{tikzpicture}
\begin{tikzpicture}[
scale=3,
dot/.style={circle,fill=black,minimum size=4pt,inner sep=0pt,outer sep=-1pt}]
\draw[->,>=stealth',>=angle 60] (0,0) -- (2.4,0) node[right] {$x$};
\draw[->,>=stealth',>=angle 60] (0,0) -- (0,1.35) node[above] {$z$};
\foreach \x in {0.5,1.0,1.5,2}
    \draw[shift={(\x,0)}] (0pt,0pt) -- (0pt,-0.4pt) node[below] {$\x$};

\foreach \x in {0.5,1.0}
    \draw[shift={(0,\x)}] (0,0) -- (-0.4pt,0) node[left] {$\x$};
\draw[thick, fill=blue!60!orange!35] (0,0) coordinate (O1) -- (1,0) coordinate (O2) -- (9/4,1/2) coordinate (O5) -- (0,8/7) coordinate (O6) -- (O1);

\node at (-0.08,-0.08) {$0$};

\coordinate (I4) at (2,1/2);
\coordinate (I5) at (1/2,3/4);
\coordinate (I6) at (1/6,7/12);
\node[dot,label={[label distance=0mm,yshift=1mm]180:$r_4$}] at (I4) {};
\node[dot,label={[label distance=-0.5mm]90:$r_5$}] at (I5) {};
\node[dot,label={[label distance=0mm]0:$r_6$}] at (I6) {};

\coordinate (M1) at (0, 0);
\coordinate (M2) at (1, 0);
\coordinate (M3) at  (intersection cs:
       first line={(M1)--(I6)},
       second line={(0,1.25)--(1,1.25)});
\coordinate (M4) at (intersection cs:
       first line={(M2)--(I4)},
       second line={(2.35,0)--(2.35,1)});
\node[dot,label={[label distance=0mm,yshift=2.3mm]0:$\hat{q}_1$}] at (M1) {};
\node[dot,label={[label distance=0mm,yshift=2.3mm,xshift=0.8mm]180:$\hat{q}_2$}] at (M2) {};
\draw[dashed] (M1) -- (M3);
\draw[dashed] (M2) -- (M4);
\end{tikzpicture}
\end{center}
\caption{
Left: there is no point
  $q_3$ in the quadrilateral~$\mathcal{P}_0$ such that $\triangle \hat{q}_1 \hat {q}_2 q_3$
  includes both $r_2$ and $r_3$.
Right: there is no point
  $q_3$ in the quadrilateral~$\mathcal{P}_1$ such that $\triangle \hat{q}_1 \hat {q}_2 q_3$
  includes both $r_4$ and $r_6$.}
\label{fig-type23}
\end{figure*}  

Consider again the polygon $\mathcal{P}_0 \subseteq \R^3$.
For the purposes of the following
argument, $\mathcal{P}_0$ is visualized on the left of Figure~\ref{fig-type23}.
Reasoning analogously as in Section~\ref{subsec-Profile1}, there are
unique points $q_1, q_2, q_3 \in \mathcal{P}_0$ with
$f(q_i) = L_{:, i}$ for $i \in \{1, 2, 3\}$, and the
convex hull of $q_1,q_2,q_3$ includes $r_1,r_2,r_3$.  Since $L_{:, 1}$
and~$L_{:,2}$ have $0$ in their first two rows, inspecting the
definition of the map~$f$, we see that $q_1$ and~$q_2$ lie on the
$x$-axis in $\mathbb{R}^3$.  Thus, writing $\hat{q}_1 = (0,0,0)^\top$
and $\hat{q}_2 = (1,0,0)^\top$, triangle~$\triangle \hat{q}_1 \hat {q}_2 q_3$
includes triangle~$\triangle q_1q_2q_3$, and hence also contains the points
$r_1,r_2,r_3$.  But clearly there is no point $q_3\in\mathcal{P}_0$
such that $\triangle \hat{q}_1 \hat {q}_2 q_3$ includes both $r_2$ and
$r_3$ (see, e.g., Figure~\ref{fig-type23}, left), which is a contradiction.  Thus we have proved
Proposition~\ref{prop:types}~(2).

\subsection{Type 3}\label{subsec-Profile2b}
In this section we exclude type-3 NMFs, i.e., we prove Proposition~\ref{prop:types}~(3). 
The reasoning is entirely analogous to Section~\ref{subsec-Profile2}.
Towards a contradiction, suppose there is a stochastic and at most 5-dimensional NMF $M = L \cdot R$ with $k=2$ and $k_2=1$.
Consider again the polygon $\mathcal{P}_1 \subseteq \R^3$. 
For the purposes of the following argument, $\mathcal{P}_1$ is visualized on the right of Figure~\ref{fig-type23}.
Analogously to Section~\ref{subsec-Profile2}, there are points $q_1, q_2, q_3 \in \mathcal{P}_1$ whose convex hull includes $r_4,r_5,r_6$, and $q_1$ and~$q_2$ lie on the $x$-axis in $\mathbb{R}^3$.
Thus, writing $\hat{q}_1 = (0,0,0)^\top$
and $\hat{q}_2 = (1,0,0)^\top$, triangle~$\triangle \hat{q}_1 \hat {q}_2 q_3$
includes the points $r_4,r_5,r_6$.  But clearly there is no point $q_3\in\mathcal{P}_1$
such that $\triangle \hat{q}_1 \hat {q}_2 q_3$ includes both $r_4$ and
$r_6$ (see, e.g., Figure~\ref{fig-type23}, right), which is a contradiction.  Thus we have proved
Proposition~\ref{prop:types}~(3).

\subsection{Type 4}\label{subsec-Profile3}

\def\Mahsa#1{{\sf \fbox{$\clubsuit$ Mahsa: #1 $\clubsuit$}}}
\newcommand{\tW}{\widetilde{W}}
\newcommand{\tL}{L}
\newcommand{\tR}{R}
\newcommand{\rat}{\mathbb{Q}}
\newcommand{\real}{\mathbb{R}}
\newcommand{\oups}{\frac{1}{10^5}}
\newcommand{\ignore}[1]{}
\newcommand{\tWep}{W_{\epsilon}}
\newcommand{\magenta}[1]{{\color{magenta}#1}}

In this section we exclude type-4 NMFs, i.e., we prove Proposition~\ref{prop:types}~(4). 
In fact, Sections \ref{subsec-Profile1}--\ref{subsec-Profile2b} prove the stronger
result that there is no rational NMF of types~1, 2, 3
for the matrix~$M'$ alone.
Here we spell out the role of~$W_\eps$, effectively
explaining why the matrix~$M = \begin{pmatrix} M' & W_\eps \end{pmatrix}$
is defined the way it is.

Observe that adding to~$M'$ new columns from
the convex hull of the columns of~$W$
shrinks the set of possible
nonnegative factorizations. 
Given this, our goal is to find a matrix satisfying
the following desiderata:
\begin{itemize}
\item its entries are rational,
\label{desiderata}
\item its columns belong to the convex hull of the columns of $W$, and
\item it has no type-4 NMF.
\end{itemize}
The first two items ensure
that $M$, while being rational, admits a nonnegative factorization
with left factor~$W$, ensuring
that the nonnegative rank of~$M$ over~\R is (at most)~$5$.
The third condition,
combined with the arguments from Sections \ref{subsec-Profile1}--\ref{subsec-Profile2b}, ensures that the nonnegative rank
of~$M$ over~\Q is~$6$.

Whilst the matrix~$W$ manifestly fails the first desideratum,
it satisfies the last two:
\begin{claim}
\label{c:no-4-noeps}
The matrix $W$ and, therefore, the matrix~%
$\overline M = \begin{pmatrix} M^{\prime} & W \end{pmatrix}$
have no type-4 NMF.
\end{claim}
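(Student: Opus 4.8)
\medskip\noindent\textbf{Proof sketch.}
The second statement follows from the first: in a type-4 NMF $\overline M = L\cdot R$, taking the submatrix $R''$ of~$R$ consisting of the columns that correspond to~$W$ inside $\overline M = (M'\mid W)$ gives $W = L\cdot R''$, with~$L$ still matching the type-4 pattern; so a type-4 NMF of~$W$ would exist. It therefore suffices to show that~$W$ has no type-4 NMF. Assume for contradiction that $W = L\cdot R$ is one. As in Section~\ref{subsec-nonnegRank} we may take it stochastic, and after permuting the columns of~$L$ we may assume $L_{:,1}$ has positive first and zero second coordinate, $L_{:,2}$ has zero first and positive second coordinate, and $L_{:,3},L_{:,4},L_{:,5}$ have their first two coordinates zero; recall also that $\rk W = 4$.

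First I would propagate zeros. Since row~$1$ of~$L$ is supported only on its first entry and row~$2$ only on its second, rows~$1$ and~$2$ of~$R$ are positive multiples of rows~$1$ and~$2$ of~$W$; and comparing the remaining zeros of~$W$ (column $W_{:,1}$ vanishes in rows $1,2$; columns $W_{:,2},W_{:,3}$ vanish in row~$2$; columns $W_{:,4},W_{:,5}$ vanish in row~$1$) coordinatewise against the zeros of~$L$ forces the matching entries of~$R$ to vanish, so that
\[
W_{:,1}\in\conv\{L_{:,3},L_{:,4},L_{:,5}\},\qquad
W_{:,2},W_{:,3}\in\conv\{L_{:,1},L_{:,3},L_{:,4},L_{:,5}\},\qquad
W_{:,4},W_{:,5}\in\conv\{L_{:,2},L_{:,3},L_{:,4},L_{:,5}\}.
\]

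The crucial reduction is to pass into the three-dimensional picture of Section~\ref{sec-proof-geometry} by showing that \emph{every} column of~$L$ lies in~$\mathcal V$. Stochasticity is the lever: since $\vone L = \vone$ and $\vone R = \vone$, any affine functional vanishing on~$\mathcal V$ restricts, evaluated along the columns of~$L$, to a vector lying in $\ker R^\top$. If $R$ is invertible this already puts all columns of~$L$ in~$\mathcal V$. Otherwise $\rk R = 4$ (since $\rk R\ge\rk W=4$), and one must work harder: if $R$ has a zero row, delete it together with the corresponding column of~$L$ to obtain a stochastic $4$-dimensional NMF of~$W$ whose right factor has full row rank (and note that the deleted column of~$L$ can be neither $L_{:,1}$ nor $L_{:,2}$, as then no column of~$L$ would have positive first, respectively second, coordinate, while~$W$ does); the stubborn sub-case, $\rk R = 4$ with no zero row, I expect to need a direct constraint-propagation argument with the explicit entries of~$W$. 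With this in hand, the columns of~$L$ correspond to points $p_1,\dots,p_\ell\in\mathcal P\subseteq\R^3$: $p_1$ lies on the face~$\mathcal P_0$ but not on the edge $E = \{(x,0,0)^\top:0\le x\le1\}$, $p_2$ lies on~$\mathcal P_1$ but not on~$E$, and all further $p_i$ lie on~$E$ (a point of~$\mathcal V$ with zero first two coordinates has $f$-preimage on the $x$-axis); moreover $\conv\{p_1,\dots\}\supseteq\conv\{q_1^*,\dots,q_5^*\}$, which in turn contains both $\mathcal R_0$ and~$\mathcal R_1$.

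The final contradiction is geometric. Put $[a,b]=\conv\{p_3,\dots\}\subseteq E$; as $\conv\{q_1^*,\dots,q_5^*\}$ is full-dimensional we have $a\ne b$ and $p_1\notin E$, so $\triangle p_1\,a\,b$ is a genuine triangle, nested between $\mathcal R_0$ and~$\mathcal P_0$ and therefore minimal (three vertices being the minimum for a nested polygon). Its vertices $a,b$ lie on the boundary edge~$E$ of~$\mathcal P_0$, so Lemma~\ref{lem:NPP} makes the supporting polygons $\mathcal S_a$ and $\mathcal S_b$ minimal, hence triangles, and Lemma~\ref{lem:ineq} then forces the $x$-coordinates of~$a$ and~$b$ to be $\ge 2-\sqrt{2}$. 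Slicing instead by the $xz$-plane, $\triangle p_2\,a\,b$ (with the \emph{same}~$a,b$, because $E$ lies in both coordinate planes) is a minimal nested polygon between $\mathcal R_1$ and~$\mathcal P_1$, and Lemmas~\ref{lem:NPP} and~\ref{lem:ineq2} force the $x$-coordinates of~$a$ and~$b$ to be $\le 2-\sqrt{2}$. Hence $a=b=(2-\sqrt{2},0,0)^\top=q_1^*$, so every $p_i$ with $i\ge3$ equals~$q_1^*$ and $\conv\{p_1,\dots\}$ collapses to a segment --- which cannot contain the full-dimensional polytope $\conv\{q_1^*,\dots,q_5^*\}$. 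This contradiction proves the claim. The main obstacle throughout is the reduction of the previous paragraph, i.e.\ ruling out type-4 factorizations whose left factor is not contained in~$\mathcal V$.
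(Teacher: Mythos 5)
Your outline gets several things right: the reduction from $\overline M$ to $W$ by restricting columns of $R$, the zero propagation, and the geometric endgame (the two triangles $\triangle p_1 a b$ and $\triangle p_2 a b$ sharing the segment $[a,b]$ on the $x$-axis, with Lemmas~\ref{lem:NPP}, \ref{lem:ineq} and~\ref{lem:ineq2} forcing $a=b=q_1^*$ and contradicting full-dimensionality) is a sound adaptation of the paper's type-1 argument. But the whole construction rests on the step you yourself flag as unresolved: placing \emph{every} column of $L$ in $\mathcal V$. You cover $\rk R=5$ and the case where $R$ has a zero row, but the case $\rk R=4$ with no zero row is left to ``a direct constraint-propagation argument with the explicit entries of $W$,'' i.e.\ it is not proved. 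This is not a fringe technicality: when $R$ is singular there is a nonzero $u$ with $u^\top R=0$, and any perturbation $L+wu^\top$ that stays nonnegative (and respects the zero pattern) is another valid left factor, so the columns of the left factor genuinely need not lie in the column space of $W$. Unlike types 1--3, where three affinely independent columns of $M$ lie in the hull of exactly three columns of $L$ and thereby pin down the affine span, the type-4 zero pattern puts the relevant columns of $W$ in the hull of \emph{four} columns of $L$, so no dimension-counting argument forces $L$ into $\mathcal V$. The paper warns of exactly this (``we cannot a priori assume that the columns of $L$ are in $\mathcal V$''), and it is the reason type 4 is the hard case.

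For comparison, the paper does not prove Claim~\ref{c:no-4-noeps} geometrically at all: it proves Lemma~\ref{l:constraints} by entry-wise constraint propagation for \emph{any} stochastic $6\times 5$ matrix satisfying the bounds of Figure~\ref{fig:constraints} (bounds that $W$, like $W_\eps$, satisfies), and the statement for $\overline M$ then follows by column restriction exactly as you argue; the paper's remark even states the authors know of no simpler proof of the claim. That argument never uses any rank or invertibility property of $R$, which is telling: the sub-case you leave open is where essentially all of the difficulty of the claim is concentrated, and filling it ``by constraint propagation with the explicit entries of $W$'' would amount to redoing the proof of Lemma~\ref{l:constraints}. As it stands, therefore, your proposal is an interesting partial reduction (it would shorten matters if one could independently rule out singular right factors), but it does not constitute a proof of the claim.
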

This reasoning motivates
the main
technical result of this section,
a strengthening of Claim~\ref{c:no-4-noeps}
showing that no matrix in a suitably small neighborhood
of~$W$ admits a type-4 NMF:

\newcommand{\stmtlemtypeFourConstraints}{
For all stochastic matrices~$\tW \in \R_{+}^{6 \times 5}$  satisfying the entry-wise constraints given in Figure~\ref{fig:constraints}, there exists no type-4 NMF
$\tW = \tL \cdot \tR$.
}

\begin{lemma}\label{l:constraints}
\stmtlemtypeFourConstraints
\end{lemma}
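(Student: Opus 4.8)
The plan is to argue by contradiction: suppose some stochastic $\tW$ satisfying the constraints of Figure~\ref{fig:constraints} admits a type-4 NMF $\tW = \tL \cdot \tR$. By definition of type~4 we have $k=3$, $k_1=1$, $k_2=1$; after a column permutation, $\tL$ matches the ``type 4'' pattern of Figure~\ref{fig:zero-pattern}, so (writing $\tL_{:,1}$ for the column with $+$ in row~1 and $0$ in row~2, $\tL_{:,2}$ for the column with $+$ in row~2 and $0$ in row~1, and $\tL_{:,3},\tL_{:,4},\tL_{:,5}$ for the three columns with $0$ in both rows~1 and~2) each column of $\tW$ is a convex combination of the columns of $\tL$ with coefficients from $\tR$. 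Unlike in the type~1--3 arguments, I cannot assume the columns of $\tL$ lie in $\mathcal{V}$, so the geometric reduction to nested polygons is unavailable; instead the strategy must be purely \emph{algebraic constraint propagation} on the entries of $\tL$ and $\tR$.

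The key steps, in order, would be: \textbf{(1)} Read off from the zero pattern of $\tW$ which columns of $\tL$ can contribute to which columns of $\tW$. Since $\tL_{:,2}$ is the only column of $\tL$ with a nonzero first coordinate that is... more precisely, $\tL_{:,1}$ is the only column with positive first coordinate and zero second coordinate, $\tL_{:,2}$ the only one with positive second and zero first, and $\tL_{:,3},\tL_{:,4},\tL_{:,5}$ have both zero — so any column of $\tW$ whose first coordinate is positive and second is zero must use $\tL_{:,1}$ and not $\tL_{:,2}$, etc. Matching this against the sign pattern of $\tW$ (forced by the constraints in Figure~\ref{fig:constraints}, which keep $\tW$ close to $W$ and hence preserve which entries are zero vs.\ strictly positive) pins down the support of each column of $\tR$, i.e.\ which convex combinations are allowed. \textbf{(2)} For each column $\tW_{:,j}$, write the resulting linear system $\tW_{:,j} = \sum_i \tR_{ij}\tL_{:,i}$ together with $\sum_i \tR_{ij}=1$ and $\tR_{ij}\ge 0$, and $\tL_{ik}\ge 0$, plus the interval constraints on the $\tW_{ik}$. \textbf{(3)} Propagate: use the equations in the ``large'' coordinates (rows 3--6) to solve for the $\tR_{ij}$ in terms of the $\tL_{ik}$, substitute, and derive from nonnegativity a chain of inequalities on the entries of $\tL$ that eventually becomes infeasible — exactly as promised in the introduction to Section~\ref{sec-proof}, where type~4 is said to be ruled out ``by constraint propagation in order to prove that the inequalities required for type-4 NMFs are contradictory.'' The role of the explicit numerical intervals in Figure~\ref{fig:constraints} is precisely to make the final inequality strict/contradictory with room to spare, which is why the lemma is stated for a whole neighborhood of $W$ rather than for $W$ itself.

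The main obstacle I expect is \textbf{step (3)}: organizing the elimination so that the accumulated inequalities close up into a contradiction, rather than merely constraining the parameters. Concretely, one has roughly $5\times 5$ unknowns in $\tL$ (minus the forced zeros and the stochasticity normalization) and $5\times 5$ in $\tR$ (minus forced zeros and five stochasticity equations), against $6\times 5$ equations plus many inequalities; the system is tight but not obviously so, and the contradiction only emerges after choosing the right order of substitution and the right coordinates to read the equations in. I would mirror the structure of Claim~\ref{c:no-4-noeps}: first prove the statement for $W$ exactly (where the irrational entries make the computation cleanest and the contradiction sharpest), identify \emph{which} inequalities are the binding ones, quantify the slack, and then observe that replacing $W$ by any $\tW$ within the Figure~\ref{fig:constraints} box perturbs each binding inequality by less than its slack — this is the ``upper semi-continuity in disguise'' mechanism described in the introduction. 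A secondary, more bookkeeping-level obstacle is justifying step (1) rigorously: one must check that the constraints in Figure~\ref{fig:constraints} really do force the zero/nonzero pattern of $\tW$ to coincide with that of $W$ (so that the support analysis of $\tR$ is valid for every admissible $\tW$, not just for $W$), and that no degenerate case — e.g.\ a column of $\tL$ being zero, or two columns coinciding — sneaks past the case distinction; these can be dispatched by noting that a zero or repeated column of $\tL$ would reduce the inner dimension below~5 in a way incompatible with the type-4 count $k+k_1+k_2 = 5$.
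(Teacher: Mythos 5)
There is a genuine gap: your step (3), which is the entire mathematical content of Lemma~\ref{l:constraints}, is only a plan, and the specific route you propose for it would not work as stated. You suggest proving the statement first for $W$ exactly (Claim~\ref{c:no-4-noeps}), identifying the binding inequalities, quantifying their slack, and then arguing that any $\widetilde{W}$ in the Figure~\ref{fig:constraints} box perturbs each binding inequality by less than its slack. But non-existence of a type-4 NMF is not a fixed system of inequalities in $\widetilde{W}$ alone: the factors $L$ and $R$ are existentially quantified and need not vary continuously (indeed, for $W$ itself there is no type-4 factorization to perturb), so ``perturb each binding inequality'' has no well-defined meaning unless the inequalities are already derived for an \emph{arbitrary} admissible $\widetilde{W}$ --- which is exactly what the paper does and what your plan omits. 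The paper is explicit about this: the only perturbation-style alternative it offers is the non-constructive compactness argument of Appendix~\ref{app:topological} (Proposition~\ref{prop:topological}), which yields existence of \emph{some} suitable rational matrix but not the explicit neighbourhood of Figure~\ref{fig:constraints}, and which moreover still presupposes Claim~\ref{c:no-4-noeps}, whose only known proof is the very propagation argument of the lemma. So ``prove it for $W$, then transfer by slack'' cannot replace the missing derivation.

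Your steps (1)--(2) do match the paper's setup in spirit: the zero rows of the Figure~\ref{fig:constraints} pattern force the support of the columns of $R$ (e.g.\ $\widetilde{W}_{2,4}=L_{2,2}R_{2,4}$ because $L_{:,2}$ is the only column with positive second coordinate, and the zero entries $\widetilde{W}_{1,4}=\widetilde{W}_{1,5}=\widetilde{W}_{2,2}=\widetilde{W}_{2,3}=0$ kill the corresponding entries of $R$); your worry about verifying the sign pattern of $\widetilde{W}$ is moot since Figure~\ref{fig:constraints} prescribes the zeros outright. What is missing is the actual chain of bounds that the paper assembles --- the lower bounds on $R_{2,4},R_{2,5}$ (Claim~\ref{c:14}), the bounds on $L_{:,2}$ (Claim~\ref{c:15}), the key lower bounds $L_{6,3}\ge 0.61$ and $L_{5,4}\ge 0.9539$ (Claim~\ref{c:y1+y2}, which also needs the WLOG normalizations (A1)--(A2) that your proposal does not anticipate), the bounds of Claims~\ref{c:g+b} and~\ref{c:20}, and finally Claim~\ref{c:o1+o2}, after which column-stochasticity of $L$ yields $0.0465+0.346+0.61>1$ or $0.0465+0.9539>1$. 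Without carrying out (or at least correctly structuring) this propagation for a general $\widetilde{W}$ satisfying the interval constraints, the lemma is not proved.
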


In particular, the constraints of Lemma~\ref{l:constraints}, and in fact all three desiderata,
are satisfied by the matrix~$W_\eps$
from Theorem~\ref{thm:matrixWithDifferentRanks};
see Figure~\ref{fig:Weps-rounded}.
Therefore, the matrix~$M = \begin{pmatrix} M^{\prime} & W_\eps \end{pmatrix}$
has no type-4 NMF,
thus concluding the proof of Proposition~\ref{prop:types}~(4).

\begin{remark}
The existence of a suitable matrix~$W_\eps$
can be
understood in terms of upper semi-continuity of the nonnegative 
rank over the reals~\cite{BocciCR11}
and
can be alternatively demonstrated
using a non-constructive argument that assumes
only Claim~\ref{c:no-4-noeps} instead of the (stronger)
Lemma~\ref{l:constraints};
see Appendix~\ref{app:topological} for details.
We are, however, not aware of
a simpler proof of Claim~\ref{c:no-4-noeps}.
\end{remark}



%
\begin{figure*}[t]
\begin{equation*}
\arraycolsep=9pt
\tW = 
\kbordermatrix{
					 & \tW_{:,1}						& \tW_{:,2} 	& \tW_{:,3}											& \tW_{:,4} 		& \tW_{:,5} 					\\[4pt]
\tW_{1,:}  & 0				 						& 0.8 \le \cdot 	  &  0.286 \le \, \cdot \, \le 0.287  & 0       			&	 0 									\\[4pt]
\tW_{2,:}  & \cdot \le \epsilon  				& 0         	& 0      						  					&   0.29	\le \cdot  	&  0.196 \le \cdot  \\[4pt]
\tW_{3,:}  &            					& \cdot \le \epsilon&    0.0335 \le \cdot      						&       0.21\le \cdot & \cdot  \le 0.015  \\[4pt]
\tW_{4,:}  &          						& 0.07 \le \cdot		& \cdot  \le \epsilon								  &      0.27\le \cdot  & \cdot  \le 0.022 \\[4pt]
\tW_{5,:}  &            					&        		  & 				         							& \cdot \le  \epsilon  & 0.767 \le \cdot\\[4pt]
\tW_{6,:}  & 0.62 \le \cdot    					&          	  & \cdot   \le 0.32								    & \cdot \le 0.21   	  & \cdot \le  \epsilon
}
\hspace*{1.5em} 
\end{equation*}
\caption{Entry-wise constraints, where $\epsilon=10^{-5}$.}
\label{fig:constraints}
\end{figure*}
\begin{figure*}[t]
\begin{center}
\begin{tikzpicture}
\matrix [matrix of nodes,column sep = 10, left delimiter={(},right delimiter={)}, label={left:$W_\eps \approx$~~~~}]
{
0 & 0.81 & 0.2866 & 0 & 0 \\
$0.9 \cdot 10^{-5}$ & 0 & 0 & 0.296 & 0.1962 \\
0.1 & $0.7 \cdot 10^{-5}$ & 0.03360 & 0.219 & 0.0144 \\
0.04 & 0.0703 & $0.97 \cdot 10^{-5}$ & 0.276 & 0.0216 \\
0.2 & 0.08 & 0.4 & $0.9 \cdot 10^{-5}$ & 0.7679 \\
0.621 & 0.04 & 0.316 & 0.208 & $0.98 \cdot 10^{-5}$\\
};
\end{tikzpicture}
\end{center}
\caption{Matrix $W_\eps$ with entries rounded off.}
\label{fig:Weps-rounded}
\end{figure*}

\subsubsection*{Proof of Lemma~\ref{l:constraints}}

The idea of the proof is to
derive a contradiction from the assumption that there exists a
stochastic matrix \mbox{$\tW \in \R_{+}^{6 \times 5}$} that satisfies
the constraints in Figure~\ref{fig:constraints} and has an
NMF $\tW = \tL \cdot \tR$ of type~4, i.e., such that  
$L$ matches the following zero pattern:
\[ L=
\begin{pmatrix}
+ & 0 & 0 & 0 & 0 \\
0 & + & 0 & 0 & 0 \\
\cdot & \cdot  & \cdot  & \cdot  & \cdot  \\
\cdot & \cdot  & \cdot  & \cdot  & \cdot  \\
\cdot & \cdot  & \cdot  & \cdot  & \cdot \\
\cdot & \cdot  & \cdot  & \cdot  & \cdot  
\end{pmatrix}.
\]
  To this end, we use constraint propagation to successively derive lower and upper bounds for various entries of the matrices~$\tL$ and~$\tR$ until we reach a
contradiction.

In our proof of the lemma, we use the following two
assumptions, which are with no loss of generality:
\begin{enumerate}
\renewcommand{\theenumi}{\textup{(A\arabic{enumi})}}
\renewcommand{\labelenumi}{\theenumi}
\item
\label{a:1}
$\tL_{6,3} = \max \{ \tL_{6,3}, \tL_{6,4}, \tL_{6,5} \}$ and
\item
\label{a:2}
$\tL_{5,4} = \max \{ \tL_{5,4}, \tL_{5,5} \}$.
\end{enumerate}
Technically, these assumptions are only used
in the proof of Claim~\ref{c:y1+y2} below.

We first demonstrate that Lemma~\ref{l:constraints}
follows from the following two claims:

\begin{claim}
\label{c:y1+y2}
$\tL_{6,3} \ge 0.61$,
$\tL_{5,4} \ge 0.9539$.
\end{claim}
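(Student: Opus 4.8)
To prove Claim~\ref{c:y1+y2}, the plan is to read off a handful of bounds on the entries of~$\tL$ and~$\tR$ from the type-4 zero pattern of~$\tL$ and the numerical constraints of Figure~\ref{fig:constraints}, and then to combine them with the column-stochasticity of~$\tL$ and~$\tR$ (recall that the NMF is stochastic by convention) and with the averaging assumptions~\ref{a:1} and~\ref{a:2}. First I would extract the consequences of rows~$1$ and~$2$ of $\tW = \tL\tR$. As $\tL_{1,1}$ is the only nonzero entry of the first row of~$\tL$, we have $\tW_{1,j} = \tR_{1,j}\tL_{1,1}$ for all~$j$; so $\tW_{1,1}=\tW_{1,4}=\tW_{1,5}=0$ forces $\tR_{1,1}=\tR_{1,4}=\tR_{1,5}=0$, while $\tW_{1,2}\ge 0.8$ forces $\tL_{1,1}\ge 0.8$ and $\tR_{1,2}\ge 0.8$, and $\tW_{1,3}\ge 0.286$ forces $\tR_{1,3}\ge 0.286$. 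Symmetrically $\tW_{2,j}=\tR_{2,j}\tL_{2,2}$, so $\tW_{2,2}=\tW_{2,3}=0$ gives $\tR_{2,2}=\tR_{2,3}=0$, while $\tW_{2,4}\ge 0.29$ gives $\tL_{2,2}\ge 0.29$ and $\tR_{2,4}\ge 0.29$, and $\tW_{2,5}\ge 0.196$ gives $\tR_{2,5}\ge 0.196$. Then $\tW_{2,1}\le\epsilon$ with $\tL_{2,2}\ge 0.29$ yields $\tR_{2,1}\le\epsilon/0.29$; since $\tR_{2,4}\tL_{5,2}$ is one of the nonnegative summands of $\tW_{5,4}\le\epsilon$, we also get $\tL_{5,2}\le\epsilon/0.29$; and column-stochasticity of~$\tR$ gives $\tR_{3,1}+\tR_{4,1}+\tR_{5,1}\le 1$ and (using $\tR_{1,5}=0$ and $\tR_{2,5}\ge 0.196$) $\tR_{3,5}+\tR_{4,5}+\tR_{5,5}\le 0.804$.

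For the bound $\tL_{6,3}\ge 0.61$ I would expand, using $\tR_{1,1}=0$,
\[
0.62 \le \tW_{6,1} = \tR_{2,1}\tL_{6,2} + \tR_{3,1}\tL_{6,3} + \tR_{4,1}\tL_{6,4} + \tR_{5,1}\tL_{6,5}.
\]
The first summand is at most $\tR_{2,1}\le\epsilon/0.29$; by assumption~\ref{a:1} together with $\tR_{3,1}+\tR_{4,1}+\tR_{5,1}\le 1$ the remaining three summands total at most $\tL_{6,3}$. Hence $\tL_{6,3}\ge 0.62-\epsilon/0.29 > 0.61$.

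For the bound $\tL_{5,4}\ge 0.9539$ I would first invoke the previous bound: since the third column of~$\tL$ is stochastic and $\tL_{1,3}=\tL_{2,3}=0$, it gives $\tL_{5,3}\le 1-\tL_{6,3}\le 0.39$. Expanding, using $\tR_{1,5}=0$,
\[
0.767 \le \tW_{5,5} = \tR_{2,5}\tL_{5,2} + \tR_{3,5}\tL_{5,3} + \tR_{4,5}\tL_{5,4} + \tR_{5,5}\tL_{5,5}.
\]
Bounding $\tR_{2,5}\tL_{5,2}\le\epsilon/0.29$, using $\tL_{5,3}\le 0.39$ and $\tL_{5,5}\le\tL_{5,4}$ (assumption~\ref{a:2}), and writing $s=\tR_{3,5}\in[0,0.804]$ with $\tR_{4,5}+\tR_{5,5}\le 0.804-s$, this becomes
\[
0.767-\epsilon/0.29 \;\le\; 0.804\,\tL_{5,4} + s\,(0.39-\tL_{5,4}).
\]
If $\tL_{5,4}<0.39$ the right-hand side is at most $0.804\cdot 0.39 < 0.767-\epsilon/0.29$, a contradiction; hence $\tL_{5,4}\ge 0.39$, and then the right-hand side is at most $0.804\,\tL_{5,4}$, giving $\tL_{5,4}\ge (0.767-\epsilon/0.29)/0.804 > 0.9539$.

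The main point requiring care is the order of the two arguments: the estimate $\tL_{5,3}\le 0.39$ used for the second bound depends on $\tL_{6,3}\ge 0.61$, so the two halves of the claim must be established in this sequence. Beyond that and routine $\epsilon$-bookkeeping I do not anticipate a genuine obstacle; assumptions~\ref{a:1} and~\ref{a:2} are used precisely to collapse the convex combinations $\sum_j\tR_{j,\cdot}\,\tL_{\cdot,j}$ onto the single coordinates $\tL_{6,3}$ and $\tL_{5,4}$.
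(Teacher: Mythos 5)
Your proof is correct and follows essentially the same constraint-propagation route as the paper: the first bound comes from $W_{6,1}\ge 0.62$ together with assumption (A1) after showing $R_{2,1}$ is tiny, and the second from $W_{5,5}\ge 0.767$, $R_{2,5}\ge 0.196$, the smallness of $L_{5,2}$, the stochasticity of $L_{:,3}$ combined with $L_{6,3}\ge 0.61$, and assumption (A2). The only (harmless) deviation is bookkeeping: you bypass the paper's intermediate bounds $L_{2,2}\ge 0.8$ and $L_{6,2}\le 6\epsilon$ (Claim 4.6) by using the weaker $L_{2,2}\ge 0.29$, which still gives $R_{2,1},L_{5,2}\le\epsilon/0.29$, and you replace the paper's ``extract the maximum'' step for $L_{5,4}$ by an equivalent small case analysis on whether $L_{5,4}<0.39$.
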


\begin{claim}
\label{c:o1+o2}
$\tL_{4,3} \ge 0.346$,
$\max\{\tL_{3,3}, \tL_{3,4}\} \ge 0.0465$.
\end{claim}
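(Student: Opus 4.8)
The plan is to prove both inequalities of the claim by constraint propagation, starting from the type-4 zero pattern of $L$ and the entry-wise bounds on $\tW$ in Figure~\ref{fig:constraints}, and feeding in Claim~\ref{c:y1+y2} as input. First I would record the consequences of the zero pattern: row~1 of $L$ has its only nonzero entry in column~1 and row~2 of $L$ has its only nonzero entry in column~2, so $\tW_{1,j}=L_{1,1}R_{1,j}$ and $\tW_{2,j}=L_{2,2}R_{2,j}$ for every $j$. The zeros $\tW_{1,1}=\tW_{1,4}=\tW_{1,5}=0$ then force $R_{1,1}=R_{1,4}=R_{1,5}=0$, and $\tW_{2,2}=\tW_{2,3}=0$ force $R_{2,2}=R_{2,3}=0$ (using $L_{1,1},L_{2,2}>0$). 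Hence column~$j$ of $\tW$ is a convex combination of the columns of $L$ indexed by $\{1,3,4,5\}$ when $j\in\{2,3\}$ and by $\{2,3,4,5\}$ when $j\in\{1,4,5\}$. Two coarse bounds follow at once: $L_{1,1}\ge\tW_{1,2}\ge0.8$ and $L_{2,2}\ge\tW_{2,4}\ge0.29$ (the relevant entry of $R$ being at most~$1$), and $R_{1,3}=\tW_{1,3}/L_{1,1}\in[0.286,\,0.287/0.8]$.

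Second, I would isolate the negligible quantities. Every $\le\epsilon$ entry of $\tW$ bounds a product $L_{i,k}R_{k,j}$ by $\epsilon$; combined with $L_{5,4}\ge0.9539$ and $L_{6,3}\ge0.61$ from Claim~\ref{c:y1+y2} and with the lower bounds just derived for entries of $R$, this gives $R_{4,4},R_{3,5}=O(\epsilon)$ (from $\tW_{5,4},\tW_{6,5}\le\epsilon$) and $L_{4,1},L_{5,2},L_{6,2},L_{3,1}=O(\epsilon)$ (from $\tW_{4,3},\tW_{5,4},\tW_{6,5},\tW_{3,2}\le\epsilon$, using $R_{1,3}\ge0.286$, $R_{2,4}\ge0.29$, $R_{2,5}\ge0.196$, $R_{1,2}\ge0.8$), so columns~1 and~2 of $L$ are essentially scaled unit vectors. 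I would also use column-stochasticity of $\tW$: $\tW_{2,4}\le1-\tW_{3,4}-\tW_{4,4}\le0.52$ and $\tW_{2,5}\le1-\tW_{5,5}\le0.233$; and then, reading column~5 of $\tW$ and using $\tW_{5,5}\ge0.767$ together with $L_{5,2},R_{3,5}=O(\epsilon)$ and $L_{5,4},L_{5,5}\le1$, conclude $R_{4,5}+R_{5,5}\ge0.767-O(\epsilon)$, hence $R_{2,5}\le0.233+O(\epsilon)$, hence the sharp bound $L_{2,2}=\tW_{2,5}/R_{2,5}\ge0.84$, and consequently $R_{2,4}=\tW_{2,4}/L_{2,2}\le0.52/0.84\le0.62$.

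Third, for $L_{4,3}\ge0.346$ I would work in column~4 of $\tW$. Row~6 gives $\tW_{6,4}=L_{6,2}R_{2,4}+L_{6,3}R_{3,4}+L_{6,4}R_{4,4}+L_{6,5}R_{5,4}\le0.21$ (using $R_{1,4}=0$), and since $L_{6,2}R_{2,4},L_{6,4}R_{4,4}=O(\epsilon)$ we get $L_{6,3}R_{3,4}\le0.21+O(\epsilon)$, so $R_{3,4}\le0.21/0.61+O(\epsilon)\le0.345$. Row~4 then reads $0.27\le\tW_{4,4}=L_{4,2}R_{2,4}+L_{4,3}R_{3,4}+L_{4,4}R_{4,4}+L_{4,5}R_{5,4}$; using $L_{4,4}R_{4,4}=O(\epsilon)$, $L_{4,2}\le\tW_{4,5}/R_{2,5}\le0.022/0.196\le0.113$ (hence $L_{4,2}R_{2,4}\le0.113\cdot0.62\le0.071$), and a bound on $L_{4,5}R_{5,4}$, we obtain $L_{4,3}R_{3,4}\ge0.27-0.071-(\text{small})>0$, which both forces $R_{3,4}>0$ and, upon dividing by $R_{3,4}\le0.345$, yields $L_{4,3}\ge0.346$. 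For $\max\{L_{3,3},L_{3,4}\}\ge0.0465$ I would read row~3 of column~3: $\tW_{3,3}=L_{3,1}R_{1,3}+L_{3,3}R_{3,3}+L_{3,4}R_{4,3}+L_{3,5}R_{5,3}\ge0.0335$ with $L_{3,1}R_{1,3}=O(\epsilon)$ (and $R_{2,3}=0$), and combine this with an upper bound on the contribution $L_{3,5}R_{5,3}$ extracted from $\tW_{3,5}\le0.015$; since $R_{3,3}+R_{4,3}\le1-R_{1,3}\le0.714$, the residual forces $\max\{L_{3,3},L_{3,4}\}\ge(0.0335-(\text{small}))/0.714\ge0.0465$.

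The step I expect to be the main obstacle is controlling the contribution of column~5 of $L$ — concretely, bounding $L_{4,5}R_{5,4}$ in the row-4 estimate and $L_{3,5}R_{5,3}$ in the row-3 estimate. Neither product is obviously $O(\epsilon)$: a priori column~5 of $L$ could be concentrated in rows~3 or~4 while column~4 (resp.\ column~3) of $\tW$ draws heavily on it. Ruling this out will require pinning down the row of $R$ corresponding to $L$-column~5 — probably through a short case analysis on the size of $R_{5,4},R_{5,5},R_{5,3}$ (or of $L_{5,5}$), showing that the dangerous configuration pushes $\tW_{5,5}$, $\tW_{4,5}$, or $\tW_{3,5}$ outside its allowed interval. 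This is also where the numerical margins are thinnest: the thresholds $0.044<0.0465$ and $0.0461<0.0465$ that make Lemma~\ref{l:constraints} follow from the two claims leave almost no slack, so all of the estimates above — and in particular the $O(\epsilon)$ error terms — must be carried through explicitly and sharply rather than absorbed.
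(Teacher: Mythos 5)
There is a genuine gap, and it sits exactly where you flag it: controlling the contribution of the fifth column of~$\tL$. The paper's proof does \emph{not} go through column~4 of~$\tW$ at all; it proves $\tL_{4,3}\ge 0.346$ from $\tW_{4,2}\ge 0.07$ (column~2) and the second inequality from $\tW_{3,3}\ge 0.0335$ (column~3), precisely because in those columns the dangerous terms appear as $\tL_{4,5}\tR_{5,2}$ and $\tL_{3,5}\tR_{5,3}$, and the paper has previously established $\tR_{5,2}\le 50\epsilon$ and $\tR_{5,3}\le 23\epsilon$ (Claim~\ref{c:g+b}). Those bounds are themselves obtained by a step your proposal lacks: testing $\tW_{:,4}=\tL\cdot\tR_{:,4}$ against the row vectors $(0,0,2,0,0,-1)$ and $(0,0,0,2,0,-1)$, i.e.\ using $2\tW_{3,4}-\tW_{6,4}\ge 0.21$ and $2\tW_{4,4}-\tW_{6,4}\ge 0.33$ together with the bounds of Claims~\ref{c:y1+y2} and~\ref{c:15}, to extract the \emph{lower} bounds $\tL_{3,5}\ge 0.02$ and $\tL_{4,5}\ge 0.045$, which are then pushed through the $\epsilon$-entries $\tW_{3,2}$ and $\tW_{4,3}$. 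Your proposed substitute, extracting information from $\tW_{3,5}\le 0.015$, bounds the wrong product: that entry controls $\tL_{3,5}\tR_{5,5}$, not $\tL_{3,5}\tR_{5,3}$, and nothing forces $\tR_{5,5}$ to be large (all of the mass $\tR_{4,5}+\tR_{5,5}\ge 0.767$ you derive could sit in $\tR_{4,5}$). Given that the numerical slack in the second inequality is about $3\cdot 10^{-4}$, i.e.\ tens of $\epsilon$, you genuinely need an $O(\epsilon)$ bound on $\tL_{3,5}\tR_{5,3}$, and there is no route to it in your outline other than rediscovering the Claim~\ref{c:g+b} argument.

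The same problem defeats your column-4 route to $\tL_{4,3}\ge 0.346$: the term $\tL_{4,5}\tR_{5,4}$ is not excluded by any bound you derive. A configuration with $\tL_{5,5}\approx 0$ and $\tR_{5,5}\approx 0$ but $\tR_{5,4}$ and $\tL_{4,5}$ both of order a few tenths is compatible with every estimate in your first two paragraphs ($\tW_{5,4}\le\epsilon$ only controls $\tL_{5,5}\tR_{5,4}$, and $\tW_{4,5}\le 0.022$ only controls $\tL_{4,5}\tR_{5,5}$), and in such a configuration your row-4, column-4 inequality yields no lower bound on $\tL_{4,3}$ whatsoever; the truth of the claim in that regime is carried entirely by the constraint $\tW_{4,2}\ge 0.07$, which your argument never uses. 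So the proposal is a plausible plan whose central estimate is left as an unexecuted case analysis, and that missing estimate is the actual content of the paper's proof (Claims~\ref{c:g+b} and~\ref{c:20} feeding into the short final computation), not a routine detail.
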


\begin{proof}[Proof of Lemma~\ref{l:constraints}]
Take the second inequality of Claim~\ref{c:o1+o2}
and consider two cases: 
\begin{itemize}[leftmargin=*]
\item
First suppose that in Claim~\ref{c:o1+o2} it holds that
$\max\{\tL_{3,3},\tL_{3,4}\} = \tL_{3,3}$.
Then $\tL_{3,3} \ge 0.0465$.
Further, Claims~\ref{c:o1+o2} and~\ref{c:y1+y2} give lower bounds on $\tL_{4,3}$ and $\tL_{6,3}$, respectively.
Since the elements of each column of $\tL$ sum up to~$1$, it follows
$0.0465 + 0.346 + 0.61 \le \tL_{3,3} + \tL_{4,3} + \tL_{6,3}\leq 1$.
This is a contradiction.
\item
Otherwise, $\max\{\tL_{3,3},\tL_{3,4}\} = \tL_{3,4} \ge 0.0465$.
Recall that Claim~\ref{c:y1+y2} gives $\tL_{5,4} \geq 0.9539$.
Hence $0.0465 + 0.9539 \le \tL_{3,4} + \tL_{5,4} \le 1$, which is also a contradiction.
\end{itemize}
\end{proof}

Our two goals now are to prove Claims~\ref{c:y1+y2} and~\ref{c:o1+o2}.
We achieve this using a sequence of auxiliary statements.

\begin{claim}
\label{c:14}
$0.29\leq \tR_{2,4}$,
$0.196\leq\tR_{2,5}$.
\end{claim}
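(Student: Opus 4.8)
The plan is to exploit the sparsity that the type-4 zero pattern forces on the second row of~$\tL$. After the permutation that fixes the displayed pattern, column~$1$ of $\tL$ has zero second coordinate and columns $3,4,5$ have zero first \emph{and} second coordinates, so the only possibly nonzero entry in the second row of~$\tL$ is $\tL_{2,2}$; moreover $\tL_{2,2}>0$ since column~$2$ is the unique column with strictly positive second coordinate. Consequently the matrix identity $\tW = \tL\cdot \tR$, read off in the second row, collapses to $\tW_{2,k} = \tL_{2,2}\,\tR_{2,k}$ for every column index~$k$.

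Next I would invoke stochasticity of~$\tL$: the second column of~$\tL$ is a stochastic vector, so its entries are nonnegative and sum to~$1$, whence $0 < \tL_{2,2} \le 1$. Dividing the identity $\tW_{2,k} = \tL_{2,2}\,\tR_{2,k}$ by $\tL_{2,2}$ therefore yields $\tR_{2,k} = \tW_{2,k}/\tL_{2,2} \ge \tW_{2,k}$ for each~$k$. Finally I would substitute the entry-wise lower bounds on~$\tW$ supplied by Figure~\ref{fig:constraints}, namely $\tW_{2,4} \ge 0.29$ and $\tW_{2,5} \ge 0.196$, to obtain $\tR_{2,4} \ge 0.29$ and $\tR_{2,5} \ge 0.196$, which is exactly the claim.

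There is essentially no obstacle here: the argument is a one-line division once the two ingredients are in place, namely that the type-4 pattern makes the second row of~$\tL$ a single (positive) entry and that $\tL_{2,2}\le 1$ by stochasticity. In particular this claim does not rely on the normalizing assumptions \ref{a:1}, \ref{a:2} or on any of the other claims, and the same template (using row~$1$ and~$\tL_{1,1}$) will presumably be reused in the subsequent claims when constraint-propagating through the remaining rows.
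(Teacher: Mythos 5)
Your proof is correct and follows essentially the same route as the paper: the type-4 zero pattern makes $\tL_{:,2}$ the only column with positive second coordinate, so $\tW_{2,k}=\tL_{2,2}\,\tR_{2,k}$, and since $\tL_{2,2}\le 1$ this gives $\tR_{2,4}\ge \tW_{2,4}\ge 0.29$ and $\tR_{2,5}\ge \tW_{2,5}\ge 0.196$. The only cosmetic difference is that you divide by $\tL_{2,2}>0$ whereas the paper simply uses $\tL_{2,2}\,\tR_{2,k}\le \tR_{2,k}$; both are valid.
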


\begin{proof}
Observe that all columns of~$\tW$  lie in the convex hull of columns of~$\tL$
since $\tW_{:,j}=  \tL \cdot \tR_{:,j}$.
Consider the $4$th and $5$th columns~$\tW_{:,4},\tW_{:,5}$.
Since~$\tL_{:,2}$ is the only column of~$\tL$ with strictly positive
second component, we have $\tW_{2,4}= \tL_{2,2} \cdot \tR_{2,4}$
and $\tW_{2,5} = \tL_{2,2} \cdot \tR_{2,5}$.
Therefore,
\begin{equation*}
\begin{aligned}
&0.29\leq \tW_{2,4}= \tL_{2,2} \cdot \tR_{2,4} \leq \tR_{2,4},
&\quad \quad \quad&0.196\leq\tW_{2,5}=\tL_{2,2} \cdot \tR_{2,5} \leq \tR_{2,5},
\end{aligned}
\end{equation*}
implying the claim.
\end{proof}

By omitting nonnegative terms
from the equality $\tW_{i,j}=\tL_{i,:} \cdot \tR_{:,j}$,
we obtain the inequality $\tL_{i,k} \cdot \tR_{k,j} \leq  \tW_{i,j}$,
which holds for all~$1\leq k\leq 5$.
We can thus compute an upper bound on~$\tL_{i,k}$ (resp., on~$\tR_{k,j}$) if we 
know a lower bound on~$\tR_{k,j}$ (resp., on~$\tL_{i,k}$). 
We refer to this as computing \emph{simple upper bounds} through $\tW_{i,j}$.

\begin{claim}
\label{c:15}
The matrix $\tL$ satisfies the following constraints:
\begin{equation*}
\kbordermatrix{~ &\tL_{:,1}  &\tL_{:,2} &\tL_{:,3} &\tL_{:,4} &\tL_{:,5}\cr
  \tL_{1,:} &&&&&\\
  \tL_{2,:} &&0.8 \leq \cdot &&&\\
        \tL_{3,:} &&\cdot \leq 0.077    &&&\\
	\tL_{4,:} &&\cdot \leq 0.12     &&&\\
	\tL_{5,:} &&\cdot \leq 4\epsilon&&&\\
	\tL_{6,:} &&\cdot \leq 6 \epsilon&&&
	}
\end{equation*}
\end{claim}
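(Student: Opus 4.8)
The plan is to obtain the five entry bounds on $\tL_{:,2}$ by constraint propagation, carrying out the four \emph{upper} bounds first and deducing the lower bound $\tL_{2,2}\ge 0.8$ from them at the end. As throughout the section, I may assume $\tL$ and $\tR$ are stochastic (rescaling the columns of the factorization by a positive diagonal preserves the zero pattern of $\tL$, hence the type, and turns $\tR$ stochastic because $\tW$ is).

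First I would record the structural fact coming from the type-4 pattern: after the chosen permutation, $\tL_{:,2}$ is the \emph{only} column of $\tL$ with a strictly positive second entry, so $\tW_{2,j}=\tL_{2,2}\cdot\tR_{2,j}$ for every $j$, and in particular Claim~\ref{c:14} applies to give $\tR_{2,4}\ge 0.29$ and $\tR_{2,5}\ge 0.196$. With these lower bounds in hand I would apply the simple-upper-bounds principle (drop the nonnegative summands from $\tW_{i,j}=\tL_{i,:}\cdot\tR_{:,j}$, so $\tL_{i,k}\le \tW_{i,j}/\tR_{k,j}$), reading the numerators off Figure~\ref{fig:constraints}: $\tL_{3,2}\le \tW_{3,5}/\tR_{2,5}\le 0.015/0.196<0.077$; $\tL_{4,2}\le \tW_{4,5}/\tR_{2,5}\le 0.022/0.196<0.12$; $\tL_{5,2}\le \tW_{5,4}/\tR_{2,4}\le \epsilon/0.29<4\epsilon$; and $\tL_{6,2}\le \tW_{6,5}/\tR_{2,5}\le \epsilon/0.196<6\epsilon$.

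Finally, since $\tL$ is stochastic and $\tL_{1,2}=0$ (type-4 pattern), the second column satisfies $\tL_{2,2}=1-\tL_{3,2}-\tL_{4,2}-\tL_{5,2}-\tL_{6,2}\ge 1-0.077-0.12-10\epsilon=0.803-10^{-4}>0.8$, which is the last bound. This closes the claim.

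The step I expect to be the only one needing care is \textbf{not} a genuine obstacle but a bookkeeping point: the bound $\tL_{2,2}\ge 0.8$ cannot be obtained directly and must be extracted from the column-sum identity \emph{after} the four upper bounds; and for each of those four, one must divide by the \emph{right} entry of $\tW$, since the alternative column is always too weak (e.g.\ for $\tL_{3,2}$ one has to use column~$5$, where $\tW_{3,5}\le 0.015$ is available from Figure~\ref{fig:constraints}, rather than column~$4$, where the only upper bound on $\tW_{3,4}$ comes from stochasticity and is far too large). Beyond that, the proof is routine arithmetic with the numbers of Figure~\ref{fig:constraints} and the value $\epsilon=10^{-5}$.
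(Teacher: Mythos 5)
Your proof is correct and follows essentially the same route as the paper: the four upper bounds on $\tL_{3,2},\tL_{4,2},\tL_{5,2},\tL_{6,2}$ are obtained as simple upper bounds through $\tW_{3,5},\tW_{4,5},\tW_{5,4},\tW_{6,5}$ using the lower bounds $\tR_{2,5}\ge 0.196$ and $\tR_{2,4}\ge 0.29$ from Claim~\ref{c:14}, and the bound $\tL_{2,2}\ge 0.8$ then falls out of stochasticity of the second column (with $\tL_{1,2}=0$ from the type-4 pattern), exactly as in the paper. The arithmetic checks out, so there is nothing to correct.
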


\begin{proof}
First note that, by Claim~\ref{c:14}, $0.196 \leq \tR_{2,5}$.
This lets us derive
the following simple upper bounds through~$\tW_{3,5}, \tW_{4,5}$ and $\tW_{6,5}$:
\begin{equation*}
\begin{aligned}
\tL_{3,2} & \leq \frac{0.015}{0.196} \le 0.077, &\quad \quad&
\tL_{4,2} & \leq \frac{0.022}{0.196} \le 0.12,  &\quad \quad&
\tL_{6,2} & \leq \frac{\epsilon}{0.196} \le 6 \epsilon.
\end{aligned}
\end{equation*}
%
The remaining bounds are obtained as follows.
The lower bound on~$\tR_{2,4}$, taken from Claim~\ref{c:14},
gives a simple upper bound~$\tL_{5,2} \leq \epsilon / 0.29 \leq 4 \epsilon$
through~$\tW_{5,4}$.
Now the upper bounds on~$\tL_{i,2}$, where $3 \leq i \leq 6$,
result in the inequality~%
$\tL_{2,2} \geq 1 - (0.077 + 0.12 +10  \epsilon) \geq 0.8$.
\end{proof}

Now we are able to prove Claim~\ref{c:y1+y2}.

\begin{proof}[Proof of Claim~\ref{c:y1+y2}]
We use the bounds of Claim~\ref{c:15} and
the inequalities $\tW_{2,1} \le \epsilon$,
                 $\tW_{6,1} \ge 0.62$, and
                 $\tW_{5,5} \ge 0.767$
from the statement of Lemma~\ref{l:constraints}.

To begin with,
the first column of~$\tW$ lies in the convex hull of $\tL_{:,2}$ and~$\tL_{:,3},\tL_{:,4},\tL_{:,5}$.
From the lower bound~$\tL_{2,2}\geq 0.8$ we compute the simple upper bound $\tR_{2,1}\leq 2 \epsilon$
through~$\tW_{2,1}$. 
By our assumption~\ref{a:1},
$\tL_{:,3}$ has the largest $6$th coordinate among $\tL_{:,3},\tL_{:,4}$ and~$\tL_{:,5}$, so from
\[
0.62\leq \tW_{6,1}= \tL_{6,:} \cdot \tR_{:,1} \leq \tL_{6,2}\cdot \tR_{2,1}+ (1-\tR_{2,1}) \cdot \tL_{6,3}\leq \tL_{6,2}\cdot \tR_{2,1}+\tL_{6,3}\]
we obtain $\tL_{6,3} \geq  0.62- 6 \epsilon\cdot 2\epsilon \geq 0.61$, as claimed.

Furthermore,
the $5$th column of~$\tW$ also lies in the convex hull of $\tL_{:,2}$ and~$\tL_{:,3},\tL_{:,4},\tL_{:,5}$.
Recall that $\tL_{5,2}$ is at most~$4\epsilon$,  and $\tR_{2,5}$ is at least~$0.196$ by Claim~\ref{c:14}.
We then have
\[
0.767 \leq \tW_{5,5}=\tL_{5,:} \cdot \tR_{:,5} \leq \tL_{5,2}+ (1-\tR_{2,5}) \cdot \max\{\tL_{5,3},\tL_{5,4},\tL_{5,5}\}\,,\]
yielding the bound  $\max\{\tL_{5,3},\tL_{5,4},\tL_{5,5}\} \geq \frac{0.767- 4\epsilon}{1-0.196} \geq 0.9539$.
As we already know that $\tL_{6,3} \geq 0.61$, 
we deduce that
the maximum in the left-hand side cannot be attained by $\tL_{5,3}$,
since the column vector $\tL_{:,3}$ is stochastic.
Now, by our assumption~\ref{a:2}
we must have $\tL_{5,4} \geq 0.9539$.
\end{proof}

Our next goal is prove Claim~\ref{c:o1+o2}.
The following Claims~\ref{c:g+b} and~\ref{c:20}
take two consecutive steps in this direction.

\begin{claim}
\label{c:g+b}
$\tR_{5,2} \le 50\epsilon$,
$\tR_{5,3} \le 23\epsilon$.
\end{claim}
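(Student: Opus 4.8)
The plan is a constraint‑propagation argument, in the style of the preceding claims. Throughout I write $\tW_{i,j}=\tL_{i,:}\cdot\tR_{:,j}$ and use repeatedly that dropping nonnegative summands gives $\tW_{i,j}\ge\tL_{i,k}\tR_{k,j}$ for each $k$. The two target inequalities will be read off the near‑zero entries $\tW_{3,2}\le\epsilon$ and $\tW_{4,3}\le\epsilon$ of Figure~\ref{fig:constraints}: since $\tW_{3,2}\ge\tL_{3,5}\tR_{5,2}$ and $\tW_{4,3}\ge\tL_{4,5}\tR_{5,3}$, it suffices to prove the lower bounds $\tL_{3,5}\ge\tfrac{1}{50}$ and $\tL_{4,5}\ge\tfrac{1}{23}$; then $\tR_{5,2}\le\epsilon/\tL_{3,5}\le 50\epsilon$ and $\tR_{5,3}\le\epsilon/\tL_{4,5}\le 23\epsilon$ follow at once. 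So the real work is to lower‑bound $\tL_{3,5}$ and $\tL_{4,5}$.

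To this end I would concentrate on the fourth column of $\tW$ and first collect a handful of auxiliary bounds. Since $\tL_{:,1}$ is the only column of $\tL$ with positive first coordinate, $\tW_{1,4}=\tL_{1,1}\tR_{1,4}$, and from $\tW_{1,4}=0$, $\tL_{1,1}>0$ we get $\tR_{1,4}=0$. From $\tW_{5,4}\le\epsilon$ and $\tL_{5,4}\ge 0.9539$ (Claim~\ref{c:y1+y2}) we obtain the simple upper bound $\tR_{4,4}\le\epsilon/0.9539<2\epsilon$, and from $\tW_{6,4}\le 0.21$ and $\tL_{6,3}\ge 0.61$ (Claim~\ref{c:y1+y2}) we obtain $\tR_{3,4}\le 0.21/0.61<0.345$. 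Using the type‑4 zero pattern (so $\tL_{1,3}=\tL_{2,3}=\tL_{1,4}=\tL_{2,4}=0$), stochasticity of the columns of $\tL$, and the same two lower bounds, $\tL_{3,3},\tL_{4,3}\le 1-0.61=0.39$ and $\tL_{3,4},\tL_{4,4}\le 1-0.9539=0.0461$. Finally, $\tL_{:,2}$ is the only column of $\tL$ with positive second coordinate, so $\tW_{2,4}=\tL_{2,2}\tR_{2,4}$; combining $\tL_{2,2}\ge 0.8$ (Claim~\ref{c:15}) with $\tW_{2,4}\le 1-\tW_{3,4}-\tW_{4,4}\le 0.52$ (column~$4$ of the stochastic matrix $\tW$) yields $\tR_{2,4}\le 0.65$.

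With these in hand, expand $\tW_{3,4}=\tL_{3,2}\tR_{2,4}+\tL_{3,3}\tR_{3,4}+\tL_{3,4}\tR_{4,4}+\tL_{3,5}\tR_{5,4}$ (the $\tR_{1,4}$‑term vanishes) and bound its first three summands via $\tL_{3,2}\le 0.077$ (Claim~\ref{c:15}) and the auxiliary bounds; since $\tW_{3,4}\ge 0.21$ this leaves $\tL_{3,5}\tR_{5,4}\ge 0.21-0.077\cdot 0.65-0.39\cdot 0.345-2\epsilon\cdot 0.0461>\tfrac{1}{50}$, and as $\tR_{5,4}\le 1$ we conclude $\tL_{3,5}\ge\tfrac{1}{50}$. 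Repeating the computation with $\tW_{4,4}\ge 0.27$ and $\tL_{4,2}\le 0.12$ (Claim~\ref{c:15}) in place of the row‑$3$ data gives $\tL_{4,5}\tR_{5,4}\ge 0.27-0.12\cdot 0.65-0.39\cdot 0.345-2\epsilon\cdot 0.0461>\tfrac{1}{23}$, hence $\tL_{4,5}\ge\tfrac{1}{23}$. Substituting into the estimates of the first paragraph completes the proof.

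The main obstacle is that the last arithmetic step is genuinely tight: the right‑hand side $0.21-0.077\cdot 0.65-0.39\cdot 0.345\approx 0.0254$ clears $\tfrac{1}{50}=0.02$ only narrowly, so each intermediate estimate — in particular $\tR_{3,4}\le 0.345$, $\tL_{3,3}\le 0.39$ and $\tR_{2,4}\le 0.65$ — has to be derived with enough precision and plugged in correctly. The $O(\epsilon)$ corrections, with $\epsilon=10^{-5}$, sit comfortably below this slack and cause no trouble, but the plain numerics must be checked carefully.
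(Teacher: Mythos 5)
Your proof is correct, and it hits the same two intermediate targets as the paper---lower bounds on $L_{3,5}$ and $L_{4,5}$ for a hypothetical type-4 factor $L$, converted into the claimed bounds on $R_{5,2}$ and $R_{5,3}$ via the near-zero entries $\widetilde{W}_{3,2}\le\epsilon$ and $\widetilde{W}_{4,3}\le\epsilon$---but it gets there by a different intermediate route. The paper multiplies $\widetilde{W}_{:,4}=L\cdot R_{:,4}$ by the row vectors $\begin{pmatrix}0&0&2&0&0&-1\end{pmatrix}$ and $\begin{pmatrix}0&0&0&2&0&-1\end{pmatrix}$ and bounds $2\widetilde{W}_{i,4}-\widetilde{W}_{6,4}$ (for $i=3,4$) by $\max_j\,(2L_{i,j}-L_{6,j})+2L_{i,5}$, using only that $R_{:,4}$ is stochastic with $R_{1,4}=0$; this sidesteps any bounds on the individual mixing coefficients. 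You instead bound those coefficients explicitly ($R_{1,4}=0$, $R_{2,4}\le 0.65$, $R_{3,4}\le 0.345$, $R_{4,4}\le 2\epsilon$)---which needs the extra observation $\widetilde{W}_{2,4}\le 1-\widetilde{W}_{3,4}-\widetilde{W}_{4,4}\le 0.52$ from column-stochasticity of $\widetilde{W}$, something the paper never uses---and then expand $\widetilde{W}_{3,4}$ and $\widetilde{W}_{4,4}$ term by term. All ingredients you invoke (Claims~\ref{c:y1+y2} and~\ref{c:15}, the constraints of Figure~\ref{fig:constraints}, the type-4 zero pattern, and stochasticity of $L$, $R$, $\widetilde{W}$) are available at this point, and the arithmetic checks out: $0.21-0.077\cdot 0.65-0.39\cdot 0.345\approx 0.0254>1/50$ and $0.27-0.12\cdot 0.65-0.39\cdot 0.345\approx 0.0574>1/23$, with the $O(\epsilon)$ corrections negligible. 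The trade-off is that the paper's max-over-a-convex-combination trick keeps the bookkeeping lighter (no bounds on $R_{:,4}$ are needed, and it yields $L_{3,5}\ge 0.02$, $L_{4,5}\ge 0.045$ in one step), whereas your direct expansion requires more auxiliary estimates but actually produces slightly stronger lower bounds on $L_{3,5}$ and $L_{4,5}$; either way the constants $50$ and $23$ come out as required.
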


\begin{proof}
First note that
the matrix $\tL$ satisfies the following constraints:
\begin{equation}
\label{eq:16}
\kbordermatrix{~ &\tL_{:,1}  &\tL_{:,2} &\tL_{:,3} &\tL_{:,4} &\tL_{:,5}\cr
  \tL_{1,:} &&&&&\\
  \tL_{2,:} &&&&&\\
  \tL_{3,:} &&\cdot \le 0.077& &\cdot \le 0.0461&\\
	\tL_{4,:} &&\cdot \le 0.12&&\cdot \le 0.0461&\\
	\tL_{5,:} &&&&0.9539 \le \cdot &\\
	\tL_{6,:} &&& 0.61\le \cdot & &
	} \;.   
\end{equation}
Indeed,
the upper bounds on $\tL_{3,2}$ and~$\tL_{4,2}$ are taken verbatim
from Claim~\ref{c:15},
and the lower bounds on~$\tL_{6,3}$ and~$\tL_{5,4}$
from Claim~\ref{c:y1+y2}.
The latter bound implies the upper bounds on $\tL_{3,4}$ and~$\tL_{4,4}$.

We first prove the inequality~%
$\tR_{5,2} \le 50\epsilon$.
By multiplying the row vector $\begin{pmatrix}0 & 0 & 2 & 0 & 0 & -1\end{pmatrix}$ with~$\tW_{:,4} = \tL \cdot \tR_{:,4}$ we obtain $2\tW_{3,4}-\tW_{6,4} = (2\tL_{3,:}-\tL_{6,:})\cdot \tR_{:,4}$.
Since the $4$th column of~$\tW$ lies in the convex hull of $\tL_{:,2}$ and~$\tL_{:,3},\tL_{:,4},\tL_{:,5}$,
we also have
\begin{align*}
2\tW_{3,4}-\tW_{6,4} 
&\le\max \{2\tL_{3,2}-\tL_{6,2}, \ 2 \tL_{3,3}-\tL_{6,3}, \ 2 \tL_{3,4}-\tL_{6,4}, \ 2\tL_{3,5}-\tL_{6,5}\} \\
&\le\max \{2\tL_{3,2}-\tL_{6,2}, \ 2 \tL_{3,3}-\tL_{6,3}, \ 2 \tL_{3,4}-\tL_{6,4}\} + 2\tL_{3,5}.
\end{align*}
On the one hand, we have $2\tW_{3,4}-\tW_{6,4}\geq 0.21$,
because $\tW_{3,4} \geq 0.21$ and $\tW_{6,4}\le 0.21$. 
On the other hand, 
$2\tL_{3,3}-\tL_{6,3} \le 2\, (1 - \tL_{6,3}) - \tL_{6,3} = 2-3\tL_{6,3}$.
Hence,
\begin{align*}
0.21 
&\leq \max\{\underbrace{2\tL_{3,2}}_{{}\le 2\cdot 0.077}, \, \underbrace{2 -3\tL_{6,3}}_{{}\le 2 -3\cdot 0.61}, 
\,\underbrace{2 \tL_{3,4}}_{{}\le 2\cdot 0.0461}\} + 2\tL_{3,5}\;,
\end{align*}
where the inequalities are taken from~\eqref{eq:16}
and from the calculation above.
Therefore,
$
\tL_{3,5} \geq 0.02
$,
from which we derive the simple upper bound
$\tR_{5,2} \leq 50\epsilon$ through~$\tW_{3,2}$.

The second inequality,
$\tR_{5,3} \le 23\epsilon$,
is proved in a similar way.
By multiplying the row vector $\begin{pmatrix} 0 & 0 & 0 & 2 & 0 & -1\end{pmatrix}$
with~$\tW_{:,4} = \tL \cdot \tR_{:,4}$ we obtain $2\tW_{4,4}-\tW_{6,4} =(2\tL_{4,:}-\tL_{6,:})\cdot \tR_{:,4}$ and thus
\[
2\tW_{4,4}-\tW_{6,4} \le \max\{2\tL_{4,2}-\tL_{6,2}, \ 2 \tL_{4,3}-\tL_{6,3}, \ 2 \tL_{4,4}-\tL_{6,4}\} + 2\tL_{4,5}.
\]
On the one hand, we have $2\tW_{4,4}-\tW_{6,4}\geq 2\cdot 0.27  - 0.21 \geq  0.33$.
On the other hand, $2\tL_{4,3}-\tL_{6,3} \le 2  (1 - \tL_{6,3}) - \tL_{6,3} = 2-3\tL_{6,3}$.
Hence,
\begin{align*}
0.33 
&\leq \max\{\underbrace{2\tL_{4,2}}_{{}\le 2\cdot 0.12}, \, \underbrace{2 -3\tL_{6,3}}_{{}\le 2 -3\cdot 0.61}, 
\,\underbrace{2 \tL_{4,4}}_{{}\le 2\cdot 0.0461}\} + 2\tL_{4,5}\;,
\end{align*}
where the inequalities are again taken from~\eqref{eq:16}
and from the calculation above.
It follows that $\tL_{4,5}\geq 0.045$,
and we derive the simple upper bound
$\tR_{5,3} \leq \epsilon / 0.045 \leq 23 \epsilon$ through~$\tW_{4,3}$.
This completes the proof.
\end{proof}

\begin{claim}
\label{c:20}
The matrices $\tL$ and $\tR$ satisfy the following constraints:
\begin{equation*}
\kbordermatrix{~ &\tL_{:,1}  &\tL_{:,2} &\tL_{:,3} &\tL_{:,4} &\tL_{:,5}\cr
  \tL_{1,:} &&\hspace{-2mm}&\hspace{-2mm}&\hspace{-2mm}\\
  \tL_{2,:} &&\hspace{-2mm}&\hspace{-2mm}&\hspace{-2mm}\\
  \tL_{3,:} &\cdot \le 2\epsilon&\hspace{-2mm}&\hspace{-2mm}&\hspace{-2mm}\\
	\tL_{4,:} &\cdot \le 4\epsilon&\hspace{-2mm}&&\cdot \le 10\epsilon\hspace{-2mm}&\hspace{-2mm}\\
	\tL_{5,:} &&\hspace{-2mm}&&&\hspace{-2mm}\\
	\tL_{6,:} &&\hspace{-2mm}&& \hspace{-2mm}&\hspace{-2mm}
	} \;   
\kbordermatrix{~ &{\tR_{:,1}}&\tR_{:,2} &\tR_{:,3} &\tR_{:,4} &\tR_{:,5}\cr
  \tR_{1,:} && 0.8\le \cdot &0.286\le \cdot &\\
  \tR_{2,:} &&&&\\
  \tR_{3,:} &&&&\\
	\tR_{4,:} &&&&\\
	\tR_{5,:} &&\cdot \le 50\epsilon&\cdot \le23\epsilon& 
	}
\end{equation*}
\end{claim}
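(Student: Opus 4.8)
The statement of Claim~\ref{c:20} adds, to the two bounds $\tR_{5,2}\le 50\epsilon$ and $\tR_{5,3}\le 23\epsilon$ already supplied by Claim~\ref{c:g+b}, five new entry-wise bounds: $\tR_{1,2}\ge 0.8$ and $\tR_{1,3}\ge 0.286$ on the first row of $\tR$, together with $\tL_{3,1}\le 2\epsilon$, $\tL_{4,1}\le 4\epsilon$, and $\tL_{4,4}\le 10\epsilon$ on $\tL$. The plan is to derive these by continuing the constraint-propagation bookkeeping of Claims~\ref{c:14}--\ref{c:g+b}, using only the type-4 zero pattern of $\tL$, the entry-wise constraints of Figure~\ref{fig:constraints}, column-stochasticity of $\tL$ and $\tR$, and the simple-upper-bound mechanism through the identities $\tW_{i,j}=\tL_{i,:}\cdot\tR_{:,j}$.

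The two lower bounds on row~$1$ of $\tR$ are immediate from the type-4 pattern: $\tL_{:,1}$ is the only column of $\tL$ with nonzero first coordinate, so $\tW_{1,j}=\tL_{1,1}\tR_{1,j}$ for every $j$, and $\tL_{1,1}\le 1$ (as $\tL$ is stochastic) gives $\tR_{1,j}\ge\tW_{1,j}$; reading $\tW_{1,2}\ge 0.8$ and $\tW_{1,3}\ge 0.286$ off Figure~\ref{fig:constraints} yields $\tR_{1,2}\ge 0.8$ and $\tR_{1,3}\ge 0.286$. These in turn give the column-$1$ bounds on $\tL$ as simple upper bounds: $\tW_{3,2}\le\epsilon$ forces $\tL_{3,1}\le\tW_{3,2}/\tR_{1,2}\le\epsilon/0.8\le 2\epsilon$, and $\tW_{4,3}\le\epsilon$ forces $\tL_{4,1}\le\tW_{4,3}/\tR_{1,3}\le\epsilon/0.286\le 4\epsilon$.

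The last bound, $\tL_{4,4}\le 10\epsilon$, is where I expect the real difficulty, so I would tackle it last. A natural reduction is that, since $\tW_{4,3}=\tL_{4,:}\cdot\tR_{:,3}\ge\tR_{4,3}\tL_{4,4}$ and $\tW_{4,3}\le\epsilon$, it would suffice to show $\tR_{4,3}\ge 1/10$; to lower-bound $\tR_{4,3}$ one would pin down the rest of column~$3$ of $\tR$ ($\tR_{1,3}\ge 0.286$ is in hand, $\tR_{5,3}\le 23\epsilon$ from Claim~\ref{c:g+b}, and $\tR_{3,3}\le 0.32/0.61$ from $\tW_{6,3}\le 0.32$ against $\tL_{6,3}\ge 0.61$), and combine this with a good upper bound on $\tW_{3,3}$ obtained by expanding $\tW_{3,3}=\tL_{3,:}\cdot\tR_{:,3}$ with the bounds on $\tL_{3,1}$ (just derived), $\tL_{3,2}$ (Claim~\ref{c:15}), $\tL_{3,4}$ (from~\eqref{eq:16}), and $\tR_{5,3}$. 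The hard part will be that this direct estimate on $\tW_{3,3}$ is, on its own, too loose to force $\tR_{4,3}\ge 1/10$, so the argument has to be sharpened: most likely by first driving the weights $\tR_{4,4},\tR_{3,5},\tR_{4,5},\tR_{5,5}$ to their extreme values (using $\tW_{5,4}\le\epsilon$ against $\tL_{5,4}\ge 0.9539$, $\tW_{6,5}\le\epsilon$ against $\tL_{6,3}\ge 0.61$, $\tR_{2,5}\ge\tW_{2,5}\ge 0.196$, and $\tW_{5,5}\ge 0.767$), which in turn tightens the bounds on $\tL_{3,4},\tL_{6,4},\tL_{5,5}$, and possibly by invoking a tailored linear combination of the rows of $\tW_{:,3}$ in the style of the $(0,0,2,0,0,-1)$ and $(0,0,0,2,0,-1)$ tricks of Claim~\ref{c:g+b}; the order in which these bounds are derived will matter, since $W_\epsilon$ lies close to the boundary of the region cut out by Figure~\ref{fig:constraints}. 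Once Claim~\ref{c:20} is established, $\tL_{4,4}\le 10\epsilon$ refines the earlier $\tL_{3,4},\tL_{4,4}\le 0.0461$ of~\eqref{eq:16}, and together with $\tR_{1,2}\ge 0.8$ and $\tR_{1,3}\ge 0.286$ it feeds directly into the proof of Claim~\ref{c:o1+o2}.
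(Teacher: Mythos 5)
Your derivations of $\tR_{1,2}\ge 0.8$, $\tR_{1,3}\ge 0.286$, $\tL_{3,1}\le 2\epsilon$, and $\tL_{4,1}\le 4\epsilon$ are correct and identical to the paper's. The genuine gap is the remaining bound $\tL_{4,4}\le 10\epsilon$: you correctly reduce it to showing $\tR_{4,3}\ge 0.1$, but you never obtain the one ingredient that makes this possible, namely an \emph{upper} bound on $\tR_{1,3}$. You list ``$\tR_{1,3}\ge 0.286$ is in hand,'' but a lower bound is useless here --- a priori $\tR_{1,3}$ could be close to $1$ and swallow all of column $3$ of $\tR$. The paper closes this by exploiting the only two-sided constraint in Figure~\ref{fig:constraints}: from $\tL_{1,1}\ge \tL_{1,1}\tR_{1,2}=\tW_{1,2}\ge 0.8$ and $\tW_{1,3}\le 0.287$ one gets $\tR_{1,3}\le 0.287/0.8\le 0.36$; combined with $\tR_{3,3}\le 0.32/0.61\le 0.53$ (which you have), $\tR_{5,3}\le 23\epsilon$ (Claim~\ref{c:g+b}), and $\tR_{2,3}=0$ (since $\tW_{2,3}=0$ while $\tL_{2,2}>0$), column-stochasticity of $\tR_{:,3}$ gives $\tR_{4,3}\ge 1-0.36-0.53-23\epsilon\ge 0.1$, and then $\tL_{4,4}\le \tW_{4,3}/\tR_{4,3}\le 10\epsilon$.

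Your proposed repair route would not substitute for this. There is no ``good upper bound on $\tW_{3,3}$'' to be had: Figure~\ref{fig:constraints} only constrains $\tW_{3,3}$ from below (that lower bound is used later, in Claim~\ref{c:o1+o2}), and expanding $\tW_{3,3}=\tL_{3,:}\cdot\tR_{:,3}$ cannot force $\tR_{4,3}$ away from $0$, since nothing bounds $\tL_{3,4}$ from below and the term $\tL_{3,3}\tR_{3,3}$ alone can already exceed any bound you could extract. Likewise, driving $\tR_{4,4},\tR_{3,5},\tR_{4,5},\tR_{5,5}$ to extreme values or using linear combinations in the style of Claim~\ref{c:g+b} does nothing to cap $\tR_{1,3}$, which is the actual obstruction. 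So the last bound of Claim~\ref{c:20}, which is precisely the one Claim~\ref{c:o1+o2} needs, remains unproved in your write-up.
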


\begin{proof}
First note that the constraints
$\tR_{5,2} \le 50\epsilon$ and
$\tR_{5,3} \le 23\epsilon$
are already known to us
from Claim~\ref{c:g+b}.
We now show how to obtain the remaining five constraints.

Observe that
the column~$\tL_{:,1}$ is the only column of~$L$ that has a positive
first component; hence it is the only column of~$L$ that contributes to
the positive first component in the~$2$nd and~$3$rd columns of~$\tW$.
Therefore, the following inequalities indeed hold:
\begin{align*}
0.286  \leq \tW_{1,3}&=\tL_{1,1} \cdot  \tR_{1,3}\leq \tR_{1,3} 
    \quad \text{ and }\\
0.8 \leq  \tW_{1,2}&=\tL_{1,1} \cdot  \tR_{1,2} \le \tR_{1,2}.
\end{align*}
The latter inequality leads to the claimed simple upper bound $\tL_{3,1} \leq \epsilon / 0.8 \leq 2 \epsilon$ through~$\tW_{3,2}$.
We further derive the following simple upper bounds:
%
%
%
\begin{itemize}[leftmargin=*]
	\item $\tR_{1,3} \leq 0.287 / 0.8 \leq 0.36$ through~$\tW_{1,3}$,
            since $\tL_{1,1} \ge \tL_{1,1} \cdot \tR_{1,2} = \tW_{1,2} \geq 0.8$ by the above;
	\item $\tR_{3,3} \leq 0.32 / 0.61 \leq 0.53$ through~$\tW_{6,3}$,
            since $\tL_{6,3} \geq 0.61$ by Claim~\ref{c:y1+y2}.
\end{itemize}
Since~$\tW_{:,3}$ lies in the convex hull of $\tL_{:,1}$ and~$\tL_{:,3},\tL_{:,4},\tL_{:,5}$,
we can deduce that~%
$\tR_{4,3} = 1 - \tR_{1,3} - \tR_{3,3} - \tR_{5,3} \geq
 1- 0.36-0.53-23 \epsilon\geq 0.1$.
Using this lower bound, $\tR_{4,3} \ge 0.1$,
and the lower bound $\tR_{1,3} \ge 0.286$ obtained above,
we deduce,
through~$\tW_{4,3}$,
simple upper bounds
$\tL_{4,4} \leq 10\epsilon$ and $\tL_{4,1} \leq 4\epsilon$.
This concludes the proof.
\end{proof}

We are now ready to prove Claim~\ref{c:o1+o2}.

\begin{proof}[Proof of Claim~\ref{c:o1+o2}]
Here we will only use the result of the previous Claim~\ref{c:20}.

First note that
the $2$nd column of~$\tW$ lies in the convex hull of $\tL_{:,1}$, $\tL_{:,3}$,
$\tL_{:,4}$, and $\tL_{:,5}$.
We have
\begin{equation*}
\begin{aligned}
0.07 \leq \tW_{4,2}=\tL_{4,:} \cdot \tR_{:,2} &= \tL_{4,1} \tR_{1,2}+ \tL_{4,3} \tR_{3,2}+\tL_{4,4} \tR_{4,2}+\tL_{4,5} \tR_{5,2} \\
&\leq \underbrace{\tL_{4,1}}_{{}\le 4\epsilon} + \tL_{4,3} \underbrace{(1-\tR_{1,2})}_{{}\le 0.2}+ \underbrace{\tL_{4,4}}_{{}\le 10\epsilon}+\underbrace{\tR_{5,2}}_{{}\le 50\epsilon},
\end{aligned}
\end{equation*}
which gives us the lower bound $0.346\leq \tL_{4,3}$. 

Similarly, consider the $3$rd column of~$\tW$
and observe that
\begin{equation*}
\begin{aligned}
0.0335 \leq \tW_{3,3}=\tL_{3,:} \cdot \tR_{:,3} &= \tL_{3,1} \tR_{1,3}+ \tL_{3,3} \tR_{3,3}+\tL_{3,4} \tR_{4,3}+\tL_{3,5} \tR_{5,3} \\
&\leq \underbrace{\tL_{3,1}}_{{}\le 2\epsilon} + \max\{\tL_{3,3},\tL_{3,4}\} \underbrace{(1-\tR_{1,3})}_{{}\le 0.714}+\underbrace{\tR_{5,3}}_{{}\le 23\epsilon}. \notag
\end{aligned}
\end{equation*}
The lower bound $\max\{\tL_{3,3},\tL_{3,4}\} \geq 0.0465 $ follows.
\end{proof}

As we have seen above, Lemma~\ref{l:constraints} follows
from Claims~\ref{c:y1+y2} and~\ref{c:o1+o2}.

\section{Conclusions}

In this paper we have solved the Cohen--Rothblum problem,
showing that nonnegative ranks over $\R$ and over $\Q$ may
differ.
More precisely, our construction applies to
matrices of rank~$4$ and greater.
It was already known to Cohen and Rothblum~\cite{CohenRothblum93} that
nonnegative ranks over $\R$ and $\Q$ coincide for matrices of rank at most~$2$,
and Kubjas et al.~\cite{kubjas2015fixed} showed that this also holds for
matrices of nonnegative rank (over~$\R$) at most~$3$. The remaining open
question is whether nonnegative ranks over $\R$ and over $\Q$ differ for
rank-$3$ matrices whose nonnegative rank (over~$\R$) is at least~$4$---%
or whether our example is optimal in this sense.

As our results show that the nonnegative ranks over~\R and~\Q
are different functions, the computability question emerges.
It has long been known (see, e.g., Cohen and Rothblum~\cite{CohenRothblum93}) that the nonnegative rank over~\R is computable,
via a reduction to the existential theory of the reals, which in turn can be decided in~\PSPACE.
(Recently, Shitov has proposed a reduction in the converse direction, i.e., from the existential theory of the reals to NMF~\cite{ShitovUniversality}.)
In contrast, it is not known whether the nonnegative rank over~\Q is computable.
While there is a natural reduction to the decision problem for the
existential theory of the rationals,
the decidability of the latter is
a long-standing and very prominent open question~\cite{poonen2003hilbert}.

Finally, we would like to point out that the complexity of
the following geometric problem closely linked to NMF,
the \emph{nested polytope} problem, is not fully known.
This problem asks, given an ordered field $\mathbb F$ and polytopes
$\mathcal S \sset \mathcal T$ in $\mathbb F^n$,
whether there exists a \emph{simple} polytope~$\mathcal N$
such that $\mathcal S \sset \mathcal N \sset \mathcal T$
(cf.~Gillis and~Glineur~\cite{gillis2012geometric}).
The definition of ``simple'' can be
either ``having at most $k$ vertices,''
or ``having at most $k$ facets,''
or a combination of both.
For $\mathbb F = \R$,
minimizing the number of vertices or, dually, facets
is known to require irrational numbers~\cite{16CKMSW-ICALP}
even in the case of full-dimensional~$\mathcal S$.
While for some representations of the polytopes
such questions are known
to be \NP-hard (see, e.g., Das and Goodrich~\cite{Das1995}),
their precise complexity is not known in general.

\section*{Acknowledgements}
The authors would like to thank Vladimir Lysikov and Vladimir Shiryaev
for stimulating discussions.
Dmitry Chistikov was sponsored in part by the ERC Synergy award ImPACT and is supported by the ERC grant AVS-ISS (648701).
Stefan Kiefer is supported by a University Research Fellowship of the Royal Society.
Stefan Kiefer, Ines Maru\v{s}i\'{c}, Mahsa Shirmohammadi, and James Worrell gratefully acknowledge the support of the EPSRC.


\bibliographystyle{siamplain}
\bibliography{references}

\appendix
\section{Uniqueness of type-1 NMFs of~$M$}\label{app:uniqueness}
In this section, we strengthen
Proposition~\ref{prop:types}~(1)
to show that any type-1 NMF of~$M$ coincides with the one given in Equation~\eqref{eq:fact}, up to a permutation of the columns of~$W$ and the rows of~$\begin{pmatrix} H^{\prime} & H_\eps \end{pmatrix}$.
Together with the other parts of Proposition~\ref{prop:types},
this implies that the NMF~\eqref{eq:fact} is the only
$5$-dimensional stochastic NMF of the matrix~$M$,
up to permutations.

\begin{proposition} \label{prop:NMFtypes}
If $M = L \cdot R$ is a type-1 NMF then $L$ is equal to $W$ up to a permutation of its columns. 
\end{proposition}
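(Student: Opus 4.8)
The plan is to refine the argument of Section~\ref{subsec-Profile1} so that it pins down \emph{all five} columns of~$L$, not just $L_{:,1}$. Recall that in that section we showed, for any type-1 NMF $M = L \cdot R$, that $L_{:,1}, L_{:,2}, L_{:,3}$ lie in $\mathcal{V}_0$ and $L_{:,1}, L_{:,4}, L_{:,5}$ lie in $\mathcal{V}_1$, giving rise to points $q_1,\ldots,q_5 \in \mathcal{P}$ with $f(q_i) = L_{:,i}$, and that $q_1 = q_1^* = (2-\sqrt2,0,0)^\top$. The triangle $\triangle q_1 q_2 q_3$ is nested between $\mathcal{R}_0$ and $\mathcal{P}_0$, and $\triangle q_1 q_4 q_5$ is nested between $\mathcal{R}_1$ and $\mathcal{P}_1$; by Lemma~\ref{lem:NPP} both must be supporting polygons with apex $q_1 = q_1^*$.

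\medskip
\noindent\textbf{Step 1: the two planar triangles are forced.} Since $\mathcal{S}_{q_1^*}$ in $\mathcal{P}_0$ is the unique supporting polygon with first vertex $q_1^*$, and (as shown in the proof of Lemma~\ref{lem:ineq} and Figure~\ref{fig-concave-display}) it is the triangle with vertices $q_1^*, q_3^*, q_2^*$, we conclude $\{q_2,q_3\} = \{q_2^*, q_3^*\}$. The subtlety here is that $\triangle q_1 q_2 q_3$ is merely \emph{nested} between $\mathcal{R}_0$ and $\mathcal{P}_0$ and happens to be supporting; one must invoke that a $3$-vertex supporting polygon with a fixed first vertex is the \emph{unique} minimal nested polygon through that vertex (Lemma~\ref{lem:NPP} combined with the uniqueness of $\mathcal{S}_{v_1}$ cited from~\cite{DBLP:journals/iandc/YapABO89}). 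The same reasoning in $\mathcal{P}_1$, using $\mathcal{S}_{q_1^*}$ there (Figure~\ref{xz-plane}), forces $\{q_4,q_5\} = \{q_4^*, q_5^*\}$. Hence, up to permuting the columns of $L$, we have $q_i = q_i^*$ for all $i$, and therefore $L_{:,i} = f(q_i^*) = W_{:,i}$, i.e.\ $L = W$.

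\medskip
\noindent\textbf{Step 2: matching the right factors.} With $L = W$ fixed, the matrix $R$ is determined wherever $W$ has full column rank on the relevant rows; since $M = W \cdot R = W \cdot \begin{pmatrix} H' & H_\eps\end{pmatrix}$ and $W$ has rank~$5$ (equivalently, the $q_i^*$ are affinely independent in~$\R^3$ once the zero-coordinate structure is taken into account), we get $R = \begin{pmatrix} H' & H_\eps\end{pmatrix}$. A permutation of the columns of $W$ induces the corresponding permutation of the rows of $R$, which gives the statement as phrased.

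\medskip
\noindent\textbf{Main obstacle.} The delicate point is Step~1: we must be careful that the nested triangle we obtain is genuinely \emph{the} supporting triangle $\mathcal{S}_{q_1^*}$ and not merely \emph{some} minimal nested triangle sharing the vertex $q_1^*$. This is where the standardized form of~\cite{DBLP:journals/iandc/YapABO89} is essential: Lemma~\ref{lem:NPP} tells us that once a minimal nested polygon has a vertex on the boundary of the outer polygon, replacing it by the supporting polygon anchored at that vertex keeps minimality, and the supporting polygon is unique. One then needs the converse-flavoured observation that if a $3$-vertex nested triangle already has $q_1^*$ as a vertex and $q_1^*$ lies on $\partial\mathcal{P}_0$, the triangle \emph{is} $\mathcal{S}_{q_1^*}$ — which follows because $\mathcal{S}_{q_1^*}$ is itself a $3$-vertex nested polygon (hence minimal) and two distinct minimal nested triangles cannot both be anchored at the same boundary point while both being supporting. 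A small amount of case analysis (or an appeal to the explicit description of $\mathcal{S}_{q_1^*}$ already worked out in the proofs of Lemmas~\ref{lem:ineq} and~\ref{lem:ineq2}) closes this gap. Everything after that is bookkeeping.
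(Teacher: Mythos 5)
There is a genuine gap at the heart of Step~1. Lemma~\ref{lem:NPP} says only that \emph{if} a minimal nested polygon has a vertex $v_1$ on the boundary of the outer polygon, \emph{then} the supporting polygon $\mathcal{S}_{v_1}$ is also minimal; it does not say that the given nested triangle $\triangle q_1 q_2 q_3$ \emph{is} supporting, and nothing in the type-1 setup forces its edges $q_1q_3$, $q_3q_2$ to touch $\mathcal{R}_0$. The uniqueness cited from~\cite{DBLP:journals/iandc/YapABO89} is uniqueness of the supporting polygon determined by a starting vertex (a greedy construction), not uniqueness of minimal nested triangles anchored at that vertex; in general the latter are not unique (when there is slack, the two free vertices can be perturbed while keeping $\mathcal{R}_0$ inside and staying within $\mathcal{P}_0$). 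So your ``converse-flavoured observation'' --- that a $3$-vertex nested triangle with vertex $q_1^*$ must equal $\mathcal{S}_{q_1^*}$ --- is exactly the nontrivial content of the proposition, and it is true here only because the configuration is extremal: $u=2-\sqrt2$ is the boundary case of Lemmas~\ref{lem:ineq} and~\ref{lem:ineq2}, so inequality~\eqref{eq-colinear3} (resp.~\eqref{eq-colinear6}) holds with equality and the triangle is rigid. The paper's appendix proof supplies precisely this missing argument: assuming $q_2\neq q_2^*$ (or $q_3\neq q_3^*$), it tracks on which side of the directed segments $q_1^*q_3^*$, $q_2^*q_1^*$, $q_3^*q_2^*$ the points $q_2,q_3$ can lie, given $\triangle r_1r_2r_3\subseteq\triangle q_1^*q_3q_2$ and $q_2,q_3\in\mathcal{P}_0$, and concludes that $r_2$ would then lie on the wrong side of $q_3q_2$ --- a contradiction; the $xz$-face is handled analogously via Lemma~\ref{lem:ineq2}. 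Your proposal gestures at ``a small amount of case analysis'' but does not carry it out, and the appeal to Lemmas~\ref{lem:ineq}--\ref{lem:ineq2} alone cannot close it, since those lemmas speak only about supporting polygons $\mathcal{S}_{q_1}$, not about arbitrary nested triangles through $q_1^*$.

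Two smaller points. First, in the statement actually being proved only $L$ is at issue, so your Step~2 is not needed; but as written it is also incorrect: $W$ has rank~$4$, not~$5$ (its columns span the same space as those of $M$), so $R$ is not recovered by a full-column-rank argument --- the paper instead gets uniqueness of the right factor from the zero pattern of the columns (each $r_i$ and $q_i^\eps$ lies on a face and hence is a convex combination of only three affinely independent $q_j^*$). Second, your parenthetical ``happens to be supporting'' in Step~1 is unsupported for the same reason as above; if you repair Step~1 by the orientation argument sketched here, the rest of your outline (transporting $q_i=q_i^*$ back through $f$ to get $L_{:,i}=W_{:,i}$ up to permutation) matches the paper.
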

\begin{proof}
We recall from Figure~\ref{fig-concave-display} that the supporting polygon~$\mathcal{S}_{q_1^*}$, nested between the triangle  $\triangle r_1 r_2 r_3$ and the polygon $\mathcal{P}_0$, is the triangle $\triangle q_1^* q_3^* q_2^*$. Similarly, as seen in Figure~\ref{xz-plane}, the supporting polygon~$\mathcal{S}_{q_1^*}$, nested between the triangle $\triangle r_4 r_5 r_6$ and the polygon~$\mathcal{P}_1$, is the triangle $\triangle q_1^* q_5^* q_4^*$. We have already shown that $q_1= q_1^*$. In the following, we show that $q_i = q_i^*$ for each $i \in \{2, 3, 4, 5\}$.

Towards a contradiction, suppose that $q_2 \neq q_2^*$ or $q_3 \neq q_3^*$. Let us consider the case when $q_2 \neq q_2^*$. Observe that triangles $\triangle q_1^* q_3^* q_2^*$ and $\triangle q_1^* q_3 q_2$ are both nested between $\triangle r_1 r_2 r_3$ and~$\mathcal{P}_0$.
The fact that $\triangle r_1 r_2 r_3 \subseteq \triangle q_1^* q_3 q_2$ implies that vertices $q_3$ and $q_2$ lie to the right of (or on) directed line segments $q_1^* q_3^*$ and $q_2^* q_1^*$, respectively. Since, moreover, $q_3, q_2 \in \mathcal{P}_0$, it holds that vertex $q_3$ lies to the left of (or on) directed line segment $q_3^* q_2^*$, whereas vertex $q_2$ lies strictly to the left of $q_3^* q_2^*$. However, this implies that the point $r_2$ is to the right of directed line segment $q_3 q_2$, which is a contradiction with the assumption that $\triangle r_1 r_2 r_3 \subseteq \triangle q_1^* q_3 q_2$.
%
%
%
The case $q_3 \neq q_3^*$ analogously leads to a contradiction.
We conclude that $q_2 = q_2^*$ and $q_3 = q_3^*$. Analogously, using Lemma~\ref{lem:ineq2} one can show that $q_4 = q_4^*$ and $q_5 = q_5^*$.

Since $f(q_i) = L_{:,i}$ and $f(q_i^*) = W_{:,i}$ for each $i \in \{2, 3, 4, 5\}$, we conclude that $\{L_{:,2},L_{:,3}\} = \{W_{:,2}, W_{:,3}\}$ and $\{L_{:,4}, L_{:,5}\} = \{W_{:,4}, W_{:,5}\}$.
Therefore, the NMF $M= L \cdot R$ coincides with the one given in~\eqref{eq:fact}, up to a permutation of the columns of~$W$ and the rows of~$\begin{pmatrix} H^{\prime} & H_\eps \end{pmatrix}$. 
\end{proof}

\section{A non-constructive approach to defining $W_\eps$}
\label{app:topological}
Instead of deducing the result of Section~\ref{subsec-Profile3}
from Lemma~\ref{l:constraints},
one can alternatively rely on its weaker form,
Claim~\ref{c:no-4-noeps}, and give
a non-constructive proof of the existence
of an appropriate~$W_\eps$
(satisfying the three desiderata given as bullet points
 on p.~\pageref{desiderata})
via a topological argument that we sketch below.
However, we emphasize that
we do not know how to prove Claim~\ref{c:no-4-noeps}
without following the arguments that prove
Lemma~\ref{l:constraints}.

\begin{proposition}
\label{prop:topological}
There exists a $6 \times 5$ matrix such that
\begin{itemize}[leftmargin=*]
\item its entries are rational,
\item its columns belong to the convex hull of the columns of $W$, and
\item it has no type-4 NMF.
\end{itemize}
\end{proposition}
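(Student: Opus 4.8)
The plan is to show that ``has no type-4 NMF'' is a property stable under small perturbations of~$W$, and then to produce a \emph{rational} matrix that is simultaneously close to~$W$ and has all its columns in $\conv(W_{:,1},\ldots,W_{:,5})$. Combining the two gives a matrix meeting all three desiderata. Throughout we freely use that a matrix whose columns lie in $\conv(W_{:,1},\ldots,W_{:,5})$ is automatically stochastic (the $W_{:,i}$ being stochastic), and that, conversely, any NMF of a stochastic matrix may be normalized to a stochastic one without changing its zero pattern, so ``type-4 NMF'' and ``stochastic type-4 NMF'' are interchangeable.

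First I would prove a local stability statement: there is $\delta>0$ such that every stochastic $N\in\R_+^{6\times5}$ with $\|N-W\|<\delta$ has no type-4 NMF. Suppose not; then there is a sequence of stochastic matrices $N_n\to W$, each with a stochastic type-4 NMF $N_n=L_nR_n$. Since there are finitely many column permutations, after passing to a subsequence all $L_n$ exhibit the \emph{same} type-4 pattern (with a fixed permutation~$\pi$ of columns, and the matching permutation of the rows of~$R_n$). All entries of $L_n,R_n$ lie in $[0,1]$, so a further subsequence gives $L_n\to L$ and $R_n\to R$ with $L,R$ stochastic and $W=LR$. Every zero entry of the type-4 pattern persists in the limit, so the only way $L$ could fail to be a type-4 pattern is $L_{1,1}=0$ or $L_{2,2}=0$. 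But $L_{1,1}=0$ forces the whole first row of $L$, hence of $W=LR$, to vanish, contradicting $W_{1,2}=\tfrac57+\tfrac{5\sqrt2}{77}>0$; symmetrically $L_{2,2}=0$ contradicts $W_{2,4}=\tfrac{20+2\sqrt2}{77}>0$. Hence $W=LR$ is a type-4 NMF, contradicting Claim~\ref{c:no-4-noeps}. (This is precisely the specialization, to the type-4 zero pattern, of the upper semi-continuity of the nonnegative rank of Bocci et~al.~\cite{BocciCR11}.)

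Next I would observe the key point that lets us find rational matrices arbitrarily close to the irrational matrix~$W$: the set $\mathcal{C}$ of $6\times5$ matrices all of whose columns lie in $\conv(W_{:,1},\ldots,W_{:,5})$ is a polytope contained in $\mathcal{V}^5$, where $\mathcal{V}=f(\R^3)$ is the three-dimensional \emph{rational} affine subspace of $\R^6$ from Section~\ref{sec-proof-geometry}. Since the columns $W_{:,i}=f(q_i^*)$ affinely span $\mathcal{V}$ (immediate from the explicit values of the $q_i^*$), the polytope $\conv(W_{:,1},\ldots,W_{:,5})$ is three-dimensional, so $\mathcal{C}$ is full-dimensional in $\mathcal{V}^5$. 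Consequently $W$, although a vertex of $\mathcal{C}$, lies in the closure of $\mathrm{relint}(\mathcal{C})$, and $\mathrm{relint}(\mathcal{C})$ is open in $\mathcal{V}^5$; hence every neighbourhood of $W$ meets $\mathrm{relint}(\mathcal{C})$ in a nonempty set that is open in $\mathcal{V}^5$. Because $f$ is a rational affine map, $f(\mathbb{Q}^3)$ is dense in $\mathcal{V}$ and consists of rational points, so rational matrices are dense in $\mathcal{V}^5$; therefore this nonempty relatively-open subset of $\mathcal{V}^5$ contains a rational matrix.

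Finally I would combine the two steps: choose a rational $W_\eps\in\mathcal{C}$ with $\|W_\eps-W\|<\delta$. It is rational and its columns lie in $\conv(W_{:,1},\ldots,W_{:,5})$ by construction, and it has no type-4 NMF by the stability statement; this is the required matrix. The main obstacle is the stability step: ensuring that the type-4 zero-and-sign pattern survives passage to the limit, which is exactly where the explicit positivity of the first two rows of~$W$ (equivalently of~$M$) is used; the remaining ingredient, that the irrational polytope $\mathcal{C}$ nevertheless sits inside a rational affine space and hence has rational points accumulating at its vertex~$W$, is the crucial observation but is elementary once noticed.
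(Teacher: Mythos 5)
Your proposal is correct and follows essentially the same route as the paper's own proof: rational matrices factoring through $W$ are dense near $W$ because the rational affine map $f$ parameterizes $\mathcal{V}$ and $\conv(q_1^*,\ldots,q_5^*)$ is full-dimensional in $\R^3$, and a compactness/limit argument transfers the absence of type-4 NMFs from $W$ (Claim~\ref{c:no-4-noeps}) to a nearby rational matrix. Your explicit check that the type-4 sign pattern survives in the limit (via the positivity of $W_{1,2}$ and $W_{2,4}$) merely spells out a step the paper's sketch leaves implicit.
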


\begin{proof}
We first employ the geometric constructions
of Section~\ref{sub:parametrization} to argue that
every neighbourhood of the matrix~$W$ contains
a rational matrix that factors through~$W$,
i.e., whose columns belong to the convex hull
 of the columns of~$W$.
Indeed, consider the set $\mathcal F$ of all
stochastic real matrices of size $6 \times 5$ that have
a stochastic NMF with left factor~$W$.
Observe that $\mathcal F$ can be characterized
as the set of matrices whose columns lie in the image
under~$f$
of a full-dimensional set in~$\R^3$, namely of
the convex hull of~$q_1^*, \ldots, q_5^*$.
Since the map~$f$ is specified by matrices~$C$ and~$d$ with
rational coefficients, it immediately follows
that the set of rational matrices is dense in~$\mathcal F$.
As every $\delta$-neighbourhood of the matrix~$W$ includes
some $3$-dimensional subset of~$\mathcal F$,
it also contains
a rational matrix~$W_\delta$ from~$\mathcal F$,
as we wished to prove.


Now assume for the sake of contradiction that
every rational matrix in the set~$\mathcal F$ has a type-4 NMF.
Then the matrices~$W_\delta$ from above also have type-4 NMFs
$W_\delta = L_\delta \cdot R_\delta$, for all $\delta > 0$.
By compactness, there exists
a subsequence of matrices~$W_\delta$ with decreasing~$\delta$
such that the corresponding sequences~$L_\delta$ and~$R_\delta$
converge. Taking the limit, we arrive
at the equality $W = L \cdot R$ where the right-hand side
is also a type-4 NMF---%
which contradicts Claim~\ref{c:no-4-noeps}.
This completes the proof.
(Note that
Lemma~\ref{l:constraints}
contains a constructive version of this argument.)
\end{proof}

It is worth mentioning that this reasoning
follows similar lines as
the \emph{upper semi-continuity} argument
for nonnegative rank~\cite{BocciCR11}:
the nonnegative rank of any (rational or irrational) matrix~$W_\eps$
which is entry-wise close enough to $W$
can only be larger than or equal to that of~$W$.

\end{document}